\setlist{parsep = -0em, itemsep = 0.25em}
\newtheorem{theorem}{Theorem}[section]
\newtheorem{lemma}[theorem]{Lemma}
\newtheorem{claim}[theorem]{Claim}
\newtheorem{proposition}[theorem]{Proposition}
\newtheorem{corollary}[theorem]{Corollary}
\newtheorem{definition}[theorem]{Definition}
\newcommand{\cA}{\mathcal{A}}
\newcommand{\cB}{\mathcal{B}}
\newcommand{\cD}{\mathcal{D}}
\newcommand{\cG}{\mathcal{G}}
\newcommand{\cN}{\mathcal{N}}
\newcommand{\cO}{\mathcal{O}}
\newcommand{\cS}{\mathcal{S}}
\newcommand{\cU}{\mathcal{U}}
\newcommand{\bI}{\bm I}
\newcommand{\bR}{\bm R}
\newcommand{\bW}{\bm W}
\newcommand{\ba}{\bm a}
\newcommand{\bb}{\bm b}
\newcommand{\bc}{\bm c}
\newcommand{\be}{\bm e}
\newcommand{\bh}{\bm h}
\newcommand{\bs}{\bm s}
\newcommand{\bt}{\bm t}
\newcommand{\bu}{\bm u}
\newcommand{\bv}{\bm v}
\newcommand{\bw}{\bm w}
\newcommand{\bx}{\bm x}
\newcommand{\by}{\bm y}
\newcommand{\bz}{\bm z}
\newcommand{\bzero}{\bm 0}
\newcommand{\bdd}{\mathrm{BDD}}
\newcommand{\clwe}{\mathrm{CLWE}}
\newcommand{\hclwe}{\mathrm{hCLWE}}
\newcommand{\dgs}{\mathrm{DGS}}
\newcommand{\E}{\operatorname*{\mathbb{E}}}
\newcommand{\poly}{\operatorname{poly}}
\newcommand{\linspan}{\operatorname{span}}
\newcommand{\eps}{\varepsilon}
\newcommand{\ngauss}[1]{D_{\mathbb{R}^n,#1}}
\newcommand{\what}{\widehat}
\title{Continuous LWE}
\author[a,b,c]{Joan Bruna\thanks{This work is partially supported by the Alfred P. Sloan Foundation, NSF RI-1816753, NSF CAREER CIF 1845360, and the Institute for Advanced Study.}}
\author[a]{Oded Regev\thanks{Research supported by the Simons Collaboration on Algorithms and Geometry, a Simons Investigator Award, and by the National Science Foundation (NSF) under Grant No.~CCF-1814524.}}%
\author[a]{Min Jae Song\thanks{Research supported by the National Science Foundation (NSF) under Grant No.~CCF-1814524.}}
\author[d]{Yi Tang\thanks{This work was done while the author was at the Courant Institute of Mathematical Sciences, New York University.}}
\affil[a]{Courant Institute of Mathematical Sciences, New York
  University, New York}
\affil[b]{Center for Data Science, New York University, New York}
\affil[c]{Institute for Advanced Study, Princeton}
\affil[d]{Computer Science and Engineering, University of Michigan, Ann Arbor}
\date{\today}
\begin{document}

\maketitle 

\begin{abstract}
We introduce a continuous analogue of the Learning with Errors (LWE) problem, which we name CLWE. 
We give a polynomial-time quantum reduction from worst-case lattice problems to CLWE,
showing that CLWE enjoys similar hardness guarantees to those of LWE.
Alternatively, our result can also be seen as opening new avenues of (quantum) attacks on lattice problems.
Our work resolves an open problem regarding the computational complexity of learning mixtures of Gaussians without separability assumptions (Diakonikolas 2016, Moitra 2018).
As an additional motivation, (a slight variant of) CLWE was considered in the context of robust machine learning (Diakonikolas et al.~FOCS 2017), where hardness in the statistical query (SQ) model was shown; our work addresses the open question regarding its computational hardness (Bubeck et al.~ICML 2019).
\end{abstract}

\section{Introduction}
\label{section:intro}
The Learning with Errors (LWE) problem has served as a foundation for many lattice-based cryptographic schemes~\cite{peikert2015decade}. Informally, LWE asks one to solve noisy random linear equations. To be more precise, 
the goal is to find a secret vector $\bs \in \mathbb{Z}_q^n$
given polynomially many samples of the form $(\ba_i, b_i)$, where $\ba_i \in \mathbb{Z}_q^n$ is uniformly chosen and $b_i \approx \langle \ba_i, \bs \rangle \pmod{q}$. In the absence of noise, LWE can be efficiently solved using Gaussian elimination. However, LWE is known to be hard assuming hardness of worst-case lattice problems such as Gap Shortest Vector Problem (GapSVP) or Shortest Independent Vectors Problem (SIVP) in the sense that there is a polynomial-time quantum reduction from these worst-case lattice problems to LWE~\cite{regev2005lwe}.

In this work, we introduce a new problem, called Continuous LWE (CLWE). As the name suggests, this problem can be seen as a continuous analogue of LWE, where equations in $\mathbb{Z}_q^n$ are replaced with vectors in $\mathbb{R}^n$ (see Figure~\ref{fig:plotinhom}). 
More precisely, CLWE considers noisy inner products $z_i \approx \gamma \langle \by_i, \bw \rangle \pmod{1}$, where the noise is drawn from a Gaussian distribution of width $\beta > 0$, $\gamma > 0$ is a problem parameter, $\bw \in \mathbb{R}^{n}$ is a secret unit vector, and the public vectors $\by_i \in \mathbb{R}^n$ are drawn from the standard Gaussian. Given polynomially many samples of the form $(\by_i, z_i)$, CLWE asks one to find the secret direction $\bw$.

\begin{figure}[ht]
\centering
\includegraphics[width=0.6\textwidth]{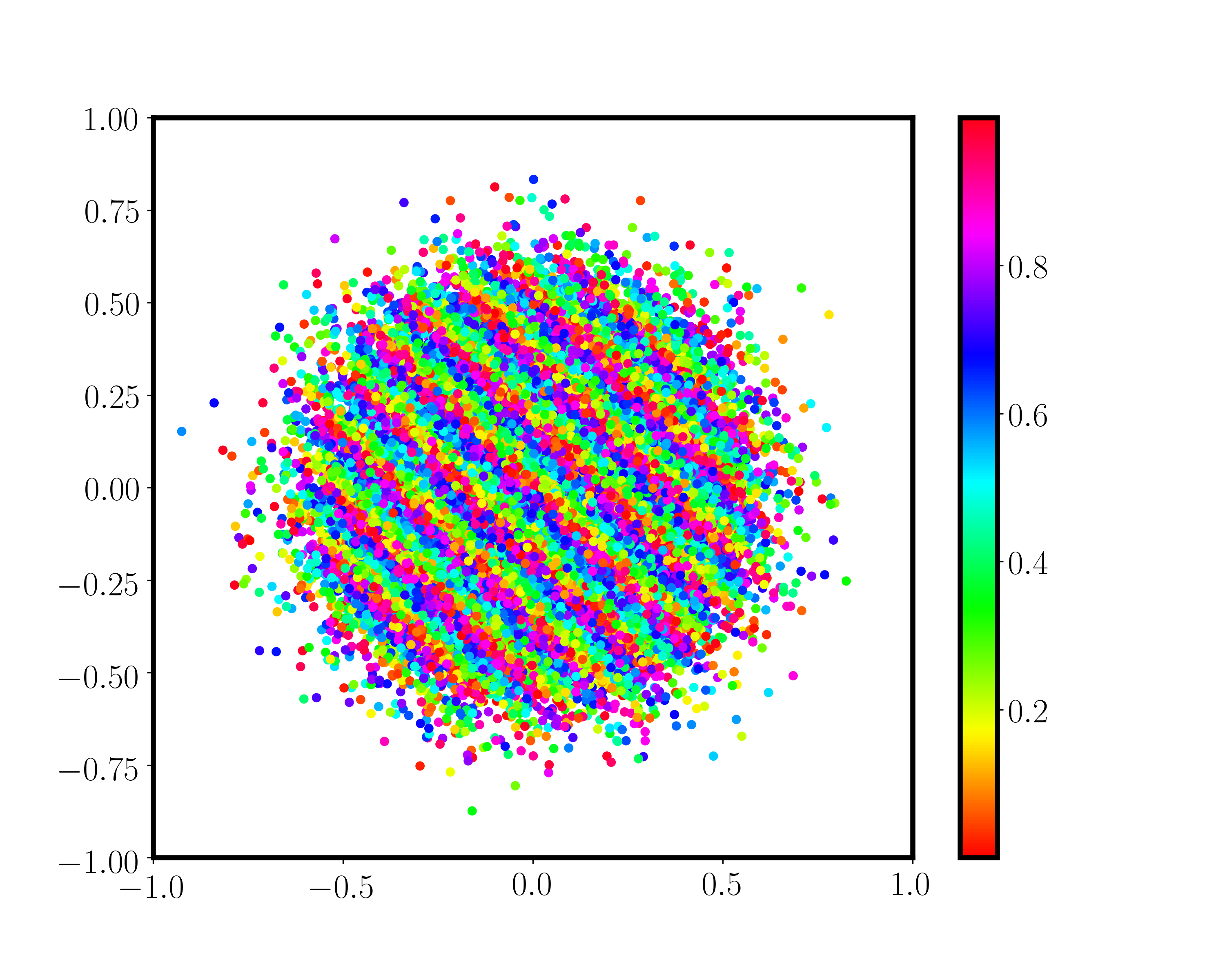}
\caption{Scatter plot of two-dimensional CLWE samples. Color indicates the last ($z$) coordinate.}
\label{fig:plotinhom}
\end{figure}

One can also consider a closely related homogeneous variant of CLWE (see Figure~\ref{fig:plothom}). This distribution, which we call homogeneous CLWE, can be obtained by essentially conditioning on $z_i \approx 0$. It is a mixture of ``Gaussian pancakes'' of width $\approx \beta/\gamma$ in the secret direction and width $1$ in the 
remaining $n-1$ directions. The Gaussian components are equally spaced, with a separation of $\approx 1/\gamma$. (See Definition~\ref{def:hclwe} for the precise statement.)

\begin{figure}[ht]
\centering
\begin{minipage}[t]{0.36\textwidth}
\includegraphics[width=\textwidth]{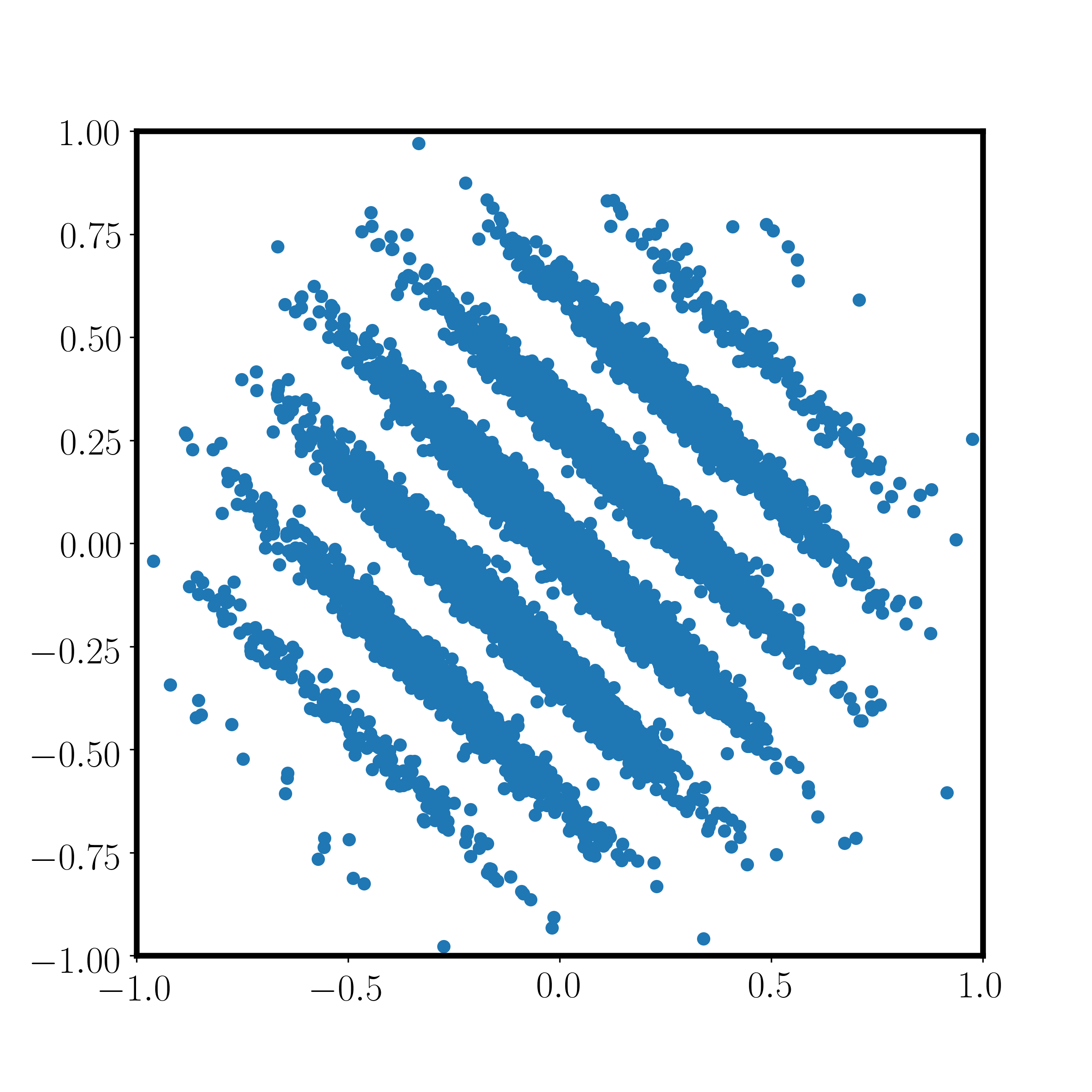}
\end{minipage}
\begin{minipage}[t]{0.54\textwidth}
\includegraphics[width=\textwidth]{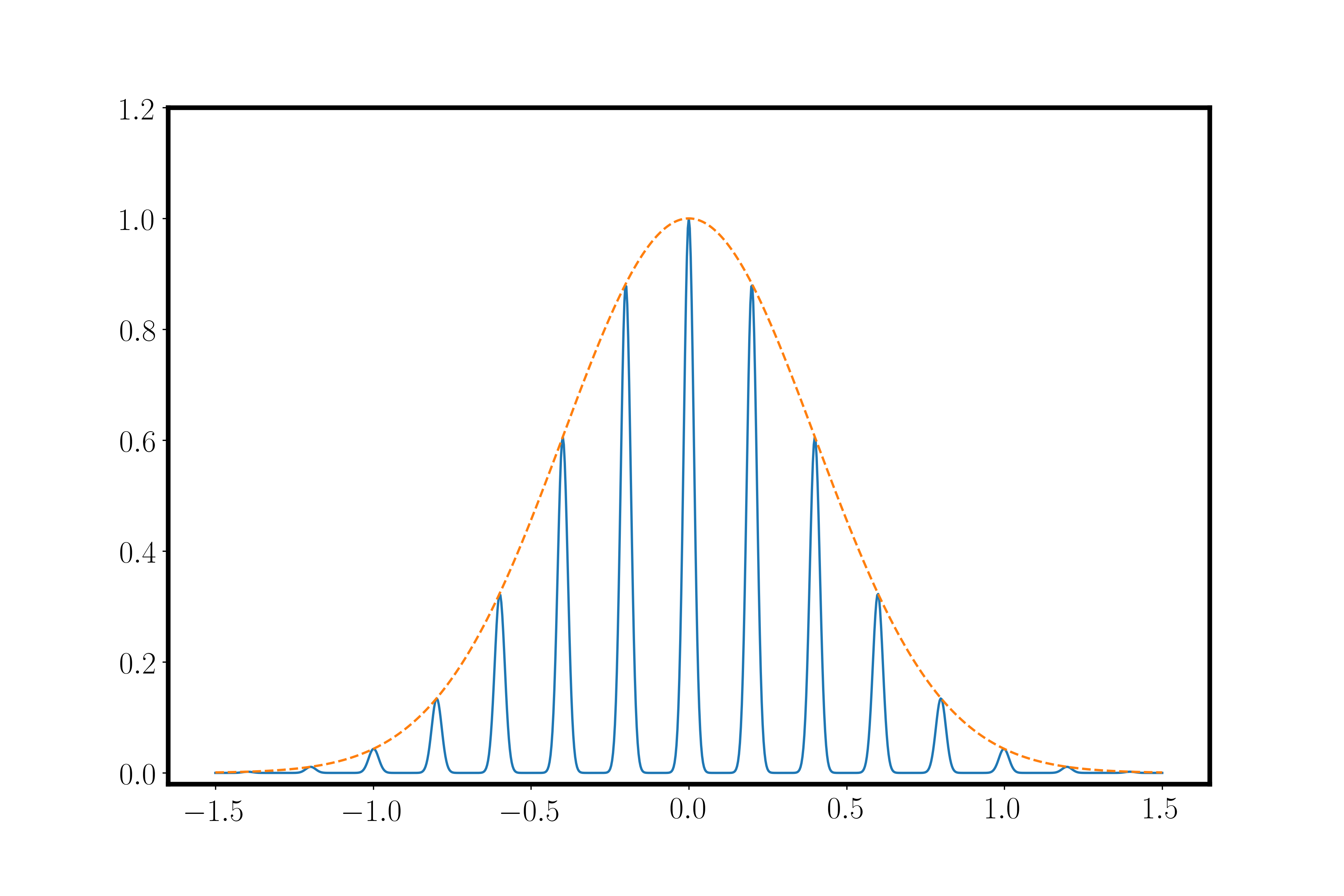}
\end{minipage}\hspace{0.05\textwidth}
\caption{Left: Scatter plot of two-dimensional homogeneous CLWE samples.
Right: Unnormalized probability densities of homogeneous CLWE (blue) and Gaussian (orange) along the hidden direction.}
\label{fig:plothom}
\end{figure}

Our main result is that CLWE (and homogeneous CLWE) enjoy hardness guarantees similar to those of LWE.

\begin{theorem}[Informal]
\label{thm:main-informal}
Let $n$ be an integer, $\beta = \beta(n) \in (0,1)$ and $\gamma = \gamma(n) \geq 2\sqrt{n}$ such that the ratio $\gamma/\beta$ is polynomially bounded. 
If there exists an efficient algorithm that solves $\clwe_{\beta, \gamma}$, then there exists an efficient quantum algorithm that approximates worst-case lattice problems to within polynomial factors.
\end{theorem}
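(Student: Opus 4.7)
The plan is to adapt Regev's worst-case to average-case reduction for LWE~\cite{regev2005lwe} so that it produces $\clwe$ samples in place of LWE samples. The high-level strategy has three layers. First, invoke the known classical reduction from $\mathrm{GapSVP}$ and $\mathrm{SIVP}$ on $n$-dimensional lattices to the discrete Gaussian sampling problem $\dgs$ at a suitably chosen width. Second, apply Regev's iterative quantum sampler, which reduces $\dgs$ at a target width to $\dgs$ at a larger width, combined with an oracle for $\bdd$ on the dual lattice. Third, give a classical reduction from $\bdd$ on an arbitrary $n$-dimensional lattice to $\clwe_{\beta,\gamma}$. The first two layers follow \cite{regev2005lwe} essentially black-box, so the technical heart of the argument lies in the third.

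For the $\bdd$-to-$\clwe$ reduction, I would start from a $\bdd$ instance $\bx = \bu + \be$ with $\bu \in \cL$ and $\|\be\|$ short, and use the iterative framework to obtain samples $\bv \sim D_{\cL^*, r}$ from the dual lattice at a carefully chosen width $r$. From each such $\bv$ and a fresh continuous Gaussian $\bg$, I would produce a candidate $\clwe$ sample
\[
\by = c_1 \bv + \bg, \qquad z = c_2 \la \bv, \bx \ra + e \pmod 1,
\]
where $c_1, c_2$ are scaling constants and $e$ is a fresh one-dimensional Gaussian. The crucial algebraic identity is $\la \bv, \bx \ra \equiv \la \bv, \be \ra \pmod 1$, which holds because $\la \bv, \bu \ra \in \mathbb{Z}$ for $\bv \in \cL^*$ and $\bu \in \cL$. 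The $\clwe$ secret is then identified with $\bw = \be / \|\be\|$, while $\gamma$ absorbs $\|\be\|$ together with the scalings $c_1$ and $c_2$.

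The analysis reduces to two claims: (i) the marginal of $\by$ is statistically close to the standard Gaussian on $\mathbb{R}^n$, and (ii) conditioned on $\by$, the scalar $z$ equals $\gamma \la \bw, \by \ra$ plus Gaussian noise of width $\beta$, modulo $1$. Both follow from a smoothing-parameter analysis of $\cL^*$ via Poisson summation: convolving the scaled discrete sample $c_1 \bv$ with $\bg$ is indistinguishable from a continuous Gaussian, provided the effective width exceeds the smoothing parameter of $\cL^*$ in every direction. I expect the main obstacle to be balancing the competing constraints on $\bg$: it must be wide enough for (i) to hold against $\eta_\eps(\cL^*)$, yet narrow enough that the inherited noise in $z$ stays within the $\clwe$ noise budget $\beta$. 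Moreover, the width $r$ of the dual samples must simultaneously match the $\bdd$ decoding radius required by the next quantum iteration, and the resulting parameters must land in the regime $\gamma \geq 2\sqrt{n}$ with $\gamma/\beta$ polynomially bounded. Quantifying this trade-off is where I expect the bulk of the technical work to sit.
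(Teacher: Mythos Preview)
Your high-level decomposition is the same as the paper's: reduce $\mathrm{GapSVP}/\mathrm{SIVP}$ to $\dgs$, iterate Regev's quantum step against a $\bdd$ oracle on the dual, and implement that $\bdd$ oracle classically by manufacturing CLWE samples from discrete Gaussian samples plus the $\bdd$ target. Your sample construction (take $\langle \bv,\bx\rangle \bmod 1$, smooth both coordinates with fresh Gaussians, rescale) is exactly the paper's Lemma~\ref{lem:bdd-to-clwe}, and claims (i) and (ii) are precisely what that lemma establishes via the smoothing arguments you sketch.

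There is, however, a real gap. The CLWE samples you produce have parameters $\gamma' \propto r\,\|\be\|$ and $\beta'$ also depending on $\|\be\|$, but $\|\be\|$ is the unknown you are trying to recover. You cannot set the scalings $c_1,c_2$ so that the samples land exactly at the oracle's fixed $(\beta,\gamma)$ without already knowing the answer; ``$\gamma$ absorbs $\|\be\|$'' hides this circularity. Relatedly, $\clwe_{\beta,\gamma}$ as stated is a \emph{decision} problem, and you have not explained how a single accept/reject bit lets you extract $\be$. The paper resolves both issues at once using the Oracle Hidden Center Problem machinery of~\cite{peikert2017ringlwe}: instead of generating samples from the raw target, one generates them from the shifted target $\bu+\be-\bw$ for a guess $\bw$, together with an extra scaling parameter $t$; the decision oracle's acceptance probability then depends only on $\exp(t)\,\|\bw-\be\|$, and a guided walk in $(\bw,t)$ converges to $\be$. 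This step (Section~\ref{section:solve-bdd-with-clwe}) requires verifying several regularity conditions on the acceptance probability (a Lipschitz bound in $t$, exponential decay to the null behavior) and does not fall out of the smoothing analysis you outlined, so you should flag it as a separate ingredient rather than fold it into ``balancing the competing constraints on $\bg$.''
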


Although we defined CLWE above as a search problem of finding the hidden direction, 
Theorem~\ref{thm:main-informal} is actually stronger, and applies to the decision variant of CLWE in which the goal is to distinguish CLWE samples $(\by_i, z_i)$ from samples where the noisy inner product $z_i$ is replaced by a random number distributed uniformly on $[0,1)$ (and similarly for the homogeneous variant). 

\paragraph{Motivation: Lattice algorithms.}
Our original motivation to consider CLWE is as a possible approach to finding quantum algorithms for lattice problems. Indeed, the reduction above (just like the reduction to LWE~\cite{regev2005lwe}), can be interpreted in an algorithmic way: in order to quantumly solve worst-case lattice problems, ``all'' we have to do is solve CLWE (classically or quantumly). The elegant geometric nature of CLWE opens up a new toolbox of techniques that can potentially be used for solving lattice problems,
such as sum-of-squares-based techniques and algorithms for learning mixtures of Gaussians~\cite{moitrav2010mixture}.
Indeed, some recent algorithms (e.g.,~\cite{klivanskothari2019list-dec,raghavendrayau2020list-dec}) solve problems that include CLWE  or homogeneous CLWE as a special case (or nearly so), yet as far as we can tell, so far none of the known results leads to an improvement over the state of the art in lattice algorithms.

To demonstrate the usefulness of CLWE as an algorithmic target, we show in Section~\ref{section:subexp} a simple moment-based algorithm that solves CLWE in time $\exp(\gamma^2)$.
Even though this does not imply subexponential time algorithms for lattice problems (since Theorem~\ref{thm:main-informal} requires $\gamma > \sqrt{n}$), it is interesting to contrast this algorithm with an analogous algorithm for LWE by Arora and Ge~\cite{arora2011subexplwe}. The two algorithms have the same running time (where $\gamma$ is replaced by the absolute noise $\alpha q$ in the LWE samples), and both rely on related techniques (moments in our case, powering in Arora-Ge's), yet the Arora-Ge algorithm is technically more involved than our rather trivial algorithm (which just amounts to computing the empirical covariance matrix). We interpret this as an encouraging sign that CLWE might be a better algorithmic target than LWE. 

\paragraph{Motivation: Hardness of learning Gaussian mixtures.}
Learning mixtures of Gaussians is a classical problem in machine learning~\cite{pearson1984gmm}. Efficient algorithms are known for the task if the Gaussian components are guaranteed to be sufficiently well separated (e.g.,~\cite{dasgupta1999gmm,vempala-wang2002spectralgmm,arora-kannan2005,dasgupta-schulman2007em,brubaker-vempala2008pca,regev2017gmm,hopkins2018gmm,kothari-steinhardt2018clustering,diakonikolas2018spherical-gmm}).
Without such strong separation requirements, it is known that efficiently recovering the individual components of a mixture (technically known as ``parameter estimation") is in general impossible~\cite{moitrav2010mixture}; intuitively, this exponential information theoretic lower bound holds because the Gaussian components ``blur into each other", despite being mildly separated pairwise. 

This leads to the question of whether there exists an efficient algorithm that can learn mixtures of Gaussians without strong separation requirement, not in the above strong parameter estimation sense (which is impossible), but rather in the much weaker density estimation sense, where the goal is merely to output an approximation of the given distribution's density function. See~\cite{diakonikolas2016structured,moitra2018} for the precise statement and~\cite{diakonikolas2017sqgaussian} where a super-polynomial lower bound for density estimation is shown in the restricted statistical query (SQ) model~\cite{kearnsSQ1998,feldman2017planted-clique}. Our work provides a negative answer to this open question,
showing that learning Gaussian mixtures is computationally difficult even if the goal is only to output an estimate of the density (see Proposition~\ref{prop:mixture-learning-hardness}). It is worth noting that our hard instance has almost non-overlapping components, i.e., the pairwise statistical distance between distinct Gaussian components is essentially 1, a property shared by the SQ-hard instance of~\cite{diakonikolas2017sqgaussian}. 

\paragraph{Motivation: Robust machine learning.}
Variants of CLWE have already been analyzed in the context of robust machine learning~\cite{bubeck2019}, in which the goal is to learn a classifier that is robust against adversarial examples at test time~\cite{szegedy2014adversarial-examples}. In particular, Bubeck et al.~\cite{bubeck2019} use the SQ-hard Gaussian mixture instance of Diakonikolas et al.~\cite{diakonikolas2017sqgaussian} to establish SQ lower bounds for 
learning a certain binary classification task, which can be seen as a variant of homogeneous CLWE. The key difference between our distribution and that of~\cite{diakonikolas2017sqgaussian,bubeck2019} is that our distribution has equal spacing between the ``layers" along the hidden direction, whereas their ``layers" are centered around roots of Hermite polynomials (the goal being to exactly match the lower moments of the standard Gaussian). The connection to lattices, which we make for the first time here, answers an open question by Bubeck et al.~\cite{bubeck2019}. 

As additional evidence of the similarity between homogeneous CLWE and the distribution considered in~\cite{diakonikolas2017sqgaussian, bubeck2019}, we prove a super-polynomial SQ lower bound for homogeneous CLWE (even with super-polynomial precision). For $\gamma=\Omega(\sqrt{n})$, this result translates to an exponential SQ lower bound for exponential precision, which corroborates our computational hardness result based on worst-case lattice problems. The uniform spacing in the hidden structure of homogeneous CLWE leads to a simplified proof of the SQ lower bound compared to previous works, which considered non-uniform spacing between the Gaussian components. Note that computational hardness does not automatically imply SQ hardness as query functions in the SQ framework need not be efficiently computable.

Bubeck et al.~\cite{bubeck2019} were also interested in a variant of the learning problem where instead of \emph{one} hidden direction, there are $m \ge 1$ orthogonal hidden directions. So, for instance, the ``Gaussian pancakes'' in the $m=1$ case above are replaced with ``Gaussian baguettes'' in the case $m=2$, forming an orthogonal grid in the secret two-dimensional space. As we show in Section~\ref{section:k-hc}, our computational hardness easily extends to the $m>1$ case using a relatively standard hybrid argument. The same is true for the SQ lower bound we show in Section~\ref{section:sq-lb} (as well as for the SQ lower bound in~\cite{diakonikolas2017sqgaussian,bubeck2019}; the proof is nearly identical). The advantage of the $m>1$ variant is that the distance between the Gaussian mixture components increases from $\approx 1/\gamma$ (which can be as high as $\approx 1/\sqrt{n}$ if we want our hardness to hold) to $\approx \sqrt{m}/\gamma$ (which can be as high as $\approx 1$ by taking $m \approx n$). This is a desirable feature for showing hardness of robust machine learning. 

\paragraph{Motivation: Cryptographic applications.}
Given the wide range of cryptographic applications of LWE~\cite{peikert2015decade}, it is only natural to expect that CLWE would also be useful for some cryptographic tasks, a question we leave for future work. CLWE's clean and highly symmetric definition should make it a better fit for some applications; its continuous nature, however, might require a discretization step due to efficiency considerations. 

\paragraph{Analogy with LWE.}
As argued above, there are apparently nontrivial differences between CLWE and LWE, especially in terms of possible algorithmic approaches. However, there is undoubtedly also strong similarity between the two. 
In terms of parameters, the $\gamma$ parameter in CLWE (density of layers) plays the role of the absolute noise level $\alpha q$ in LWE. And the $\beta$ parameter in CLWE plays the role of the relative noise parameter $\alpha$ in LWE. Using this correspondence between the parameters, the hardness proved for CLWE in Theorem~\ref{thm:main-informal} is essentially identical to the one proved for LWE in~\cite{regev2005lwe}. The similarity extends even to the noiseless case, where $\alpha = 0$ in LWE and $\beta = 0$ in CLWE. In particular, in Section~\ref{section:lll-clwe} we present an efficient LLL-based algorithm for solving noiseless CLWE, which is analogous to Gaussian elimination for noiseless LWE.

\paragraph{Comparison with previous work.}
The CLWE problem is related to the hard problem introduced in the seminal work of Ajtai and Dwork~\cite{ajtai97adcrypto}. Specifically, both problems involve finding a hidden direction in samples from a continuous distribution. One crucial difference, though, is in the density of the layers. Whereas in our hardness result the separation between the layers can be as large as $\approx 1/\sqrt{n}$, in Ajtai and Dwork the separation is exponentially small. This larger separation in CLWE is more than just a technicality. First, it is the reason we need to employ the quantum machinery from the LWE hardness proof~\cite{regev2005lwe}. Second, it is nearly tight, as demonstrated by the algorithm in Section~\ref{section:subexp}. Third, it is necessary for applications such as hardness of learning Gaussian mixtures. Finally, this larger separation is analogous to the main difference between LWE and earlier work~\cite{regev2004harmonic}, and is what leads to the relative efficiency of LWE-based cryptography. 

\paragraph{Acknowledgements.} 
We thank Aravindan Vijayaraghavan and Ilias Diakonikolas for useful comments.

\subsection{Technical Overview}
\label{section:technical-overview}
Broadly speaking, our proof follows the iterative structure of the original LWE hardness proof~\cite{regev2005lwe} (in fact, one might say most of the ingredients for CLWE were already present in that 2005 paper!). 
We also make use of some recent techniques, such as a way to reduce to decision problems directly~\cite{peikert2017ringlwe}.

In more detail, as in previous work, 
our main theorem boils down to solving the following problem: we are given a $\clwe_{\beta,\gamma}$ oracle and polynomially many samples from $D_{L,r}$, the 
discrete Gaussian distribution on $L$ of width $r$,%
\footnote{We actually require samples from $D_{L,r_i}$ for polynomially many $r_i$'s satisfying $r_i \geq r$, see Section~\ref{section:clwe-hardness}.} and our goal is to solve $\bdd_{L^*,\gamma/r}$, which is the problem of finding the closest vector in the dual lattice $L^*$ given a vector $\bt$ that is within distance $\gamma/r$ of $L^*$. (It is known that $\bdd_{L^*,1/r}$ can be efficiently solved even if all we are given is polynomially many samples from $D_{L,r}$, without any need for an oracle~\cite{aharonov2005conp}; the point here is that the CLWE oracle allows us to extend the decoding radius from $1/r$ to $\gamma/r$.)
Once this is established, the main theorem follows from previous work~\cite{peikert2017ringlwe,regev2005lwe}. Very briefly, the resulting BDD solution is used in a quantum procedure to produce discrete Gaussian samples that are shorter than the ones we started with. This process is then repeated, until eventually we end up with the desired short discrete Gaussian samples. We remark that this process incurs a $\sqrt{n}$ loss in the Gaussian width (Lemma~\ref{lem:reg05quantumstep}), and the reason we require $\gamma \ge 2\sqrt{n}$ is to overcome this loss. 

We now explain how we solve the above problem. For simplicity, assume for now that we have a \emph{search} CLWE oracle that recovers the secret exactly. (Our actual reduction is stronger and only requires a \emph{decision} CLWE oracle.) Let the given BDD instance be $\bu + \bw$, where $\bu \in L^*$ and $\|\bw\| = \gamma/r$. We will consider the general case of $\|\bw\| \le \gamma/r$ in Section~\ref{section:clwe-hardness}.
The main idea is to generate CLWE samples whose secret is essentially the desired BDD solution $\bw$, which would then complete the proof. To begin, take a sample from the discrete Gaussian distribution $\by \sim D_{L,r}$ (as provided to us) and consider the inner product
\begin{align*}
    \langle \by, \bu + \bw \rangle = \langle \by, \bw \rangle \pmod 1 \; ,
\end{align*}
where the equality holds since $\langle \by, \bu \rangle \in \mathbb{Z}$ by definition.
The $(n+1)$-dimensional vector $(\by, \langle \by, \bw \rangle \bmod 1)$ is almost a CLWE sample (with parameter $\gamma$ since $\gamma = r\|\bw\|$ is the width of $\langle \by, \bw \rangle$) --- the only problem is that in CLWE the $\by$'s need to be distributed according to a standard Gaussian, but here the $\by$'s are distributed according to a \emph{discrete} Gaussian over $L$. To complete the transformation into bonafide CLWE samples, we add Gaussian noise of appropriate variance to both $\by$ and $\langle \by, \bw \rangle$ (and rescale $\by$ so that it is distributed according to the standard Gaussian distribution). We then apply the search $\clwe_{\beta,\gamma}$ oracle on these CLWE samples to recover $\bw$ and thereby solve $\bdd_{L^*,\gamma/r}$.

As mentioned previously, our main result actually uses a \emph{decision} CLWE oracle, which does not recover the secret $\bw$ immediately. Working with this decision oracle requires some care. To that end, our proof will incorporate the ``oracle hidden center'' finding procedure from~\cite{peikert2017ringlwe}, the details of which can be found in Section~\ref{section:solve-bdd-with-clwe}.

\section{Preliminaries}

\begin{definition}[Statistical distance] For two distributions $\cD_1$ and $\cD_2$ over $\mathbb{R}^n$ with density functions $\phi_1$ and $\phi_2$, respectively, we define the \emph{statistical distance} between them as
\begin{align*}
    \Delta(\cD_1,\cD_2) = \frac{1}{2}\int_{\mathbb{R}^n}|\phi_1(\bx)-\phi_2(\bx)|d\bx
    \; .
\end{align*}
\end{definition}
We denote the statistical distance by $\Delta(\phi_1,\phi_2)$ if only the density functions are specified.
Moreover, for random variables $X_1 \sim \cD_1$ and $X_2 \sim \cD_2$, we also denote $\Delta(X_1,X_2) = \Delta(\cD_1,\cD_2)$. One important fact is that applying (possibly a randomized) function cannot increase statistical distance, i.e., for random variables $X, Y$ and function $f$,
\begin{align*}
    \Delta(f(X),f(Y)) \leq \Delta(X,Y)
    \; .
\end{align*}

We define the \emph{advantage} of an algorithm $\cA$ solving the decision problem of distinguishing two distributions $\cD_n$ and $\cD'_n$ parameterized by $n$ as
\begin{align*}
    \Bigl| \Pr_{x \sim \cD_n}[\cA(x) = \mathrm{YES}] - \Pr_{x \sim \cD'_n}[\cA(x) = \mathrm{YES}] \Bigr|
    \; .
\end{align*}
Moreover, we define the \emph{advantage} of an algorithm $\cA$ solving the \emph{average-case} decision problem of distinguishing two distributions $\cD_{n, s}$ and $\cD'_{n, s}$ parameterized by $n$ and $s$, where $s$ is equipped with some distribution $\cS_n$, as
\begin{align*}
    \Bigl| \Pr_{s \sim \cS_n}[\cA^{\cB_{n, s}}(1^n) = \mathrm{YES}] - \Pr_{s \sim \cS_n}[\cA^{\cB'_{n, s}}(1^n) = \mathrm{YES}] \Bigr|
    \; ,
\end{align*}
where $\cB_{n, s}$ and $\cB_{n, s}$ are respectively the sampling oracles of $\cD_{n, s}$ and $\cD'_{n, s}$.
We say that an algorithm $\cA$ has \emph{non-negligible advantage} if its advantage is a non-negligible function in $n$, i.e., a function in $\Omega(n^{-c})$ for some constant $c > 0$.

\subsection{Lattices and Gaussians}
\paragraph{Lattices.} 
A \emph{lattice} is a discrete additive subgroup of $\mathbb{R}^n$.
Unless specified otherwise, we assume all lattices are full rank, i.e., their linear span is $\mathbb{R}^n$.
For an $n$-dimensional lattice $L$, a set of linearly independent vectors $\{\bb_1, \dots, \bb_n\}$ is called a \emph{basis} of $L$ if $L$ is generated by the set, i.e., $L = B \mathbb{Z}^n$ where $B = [\bb_1, \dots, \bb_n]$.
The \emph{determinant} of a lattice $L$ with basis $B$ is defined as $\det(L) = |\det(B)|$; it is easy to verify that the determinant does not depend on the choice of basis.

The \emph{dual lattice} of a lattice $L$, denoted by $L^*$, is defined as
\begin{align*}
    L^* = \{ \by \in \mathbb{R}^n \mid \langle \bx, \by \rangle \in \mathbb{Z} \text{ for all } \bx \in L\}
    \; .
\end{align*}
If $B$ is a basis of $L$ then $(B^T)^{-1}$ is a basis of $L^*$; in particular, $\det(L^*) = \det(L)^{-1}$.

\begin{definition} For an $n$-dimensional lattice $L$ and $1 \le i \le n$, the \emph{$i$-th successive minimum} of $L$ is defined as
\begin{align*}
    \lambda_i(L) = \inf \{r \mid \dim(\linspan(L \cap \overline{B}(\bzero,r))) \geq i\}
    \; ,
\end{align*}
where $\overline{B}(\bzero,r)$ is the closed ball of radius $r$ centered at the origin.
\end{definition}

We define the function $\rho_s(\bx) = \exp(-\pi\|\bx/s\|^2)$. Note that $\rho_s(\bx) / s^n$, where $n$ is the dimension of $\bx$, is the probability density of the Gaussian distribution with covariance $s^2/(2\pi)\cdot I_n$.
\begin{definition}[Discrete Gaussian] For lattice $L \subset \mathbb{R}^n$, vector $\by \in \mathbb{R}^n$, and parameter $r > 0$, the \emph{discrete Gaussian distribution} $D_{\by+L,r}$ on coset $\by+L$ with width $r$ is defined to have support $\by+L$ and probability mass function proportional to $\rho_r$.
\end{definition}
For $\by = \bm 0$, we simply denote the discrete Gaussian distribution on lattice $L$ with width $r$ by $D_{L,r}$.
Abusing notation, we denote the $n$-dimensional \emph{continuous Gaussian distribution} with zero mean and isotropic variance $r^2/(2\pi)$ as $D_{\mathbb{R}^n,r}$. 
Finally, we omit the subscript $r$ when $r = 1$ and refer to $D_{\mathbb{R}^n}$ as the \emph{standard} Gaussian (despite it having covariance $I_n/(2\pi)$).

\begin{claim}[{\cite[Fact 2.1]{peikert2010sampler}}]
\label{claim:complete-squares}
For any $r_1, r_2 > 0$ and vectors $\bx, \bc_1, \bc_2 \in \mathbb{R}^n$,
let $r_0 = \sqrt{r_1^2 + r_2^2}$, $r_3 = r_1 r_2 / r_0$, and $\bc_3 = (r_3/r_1)^2 \bc_1 + (r_3/r_2)^2 \bc_2$.
Then
\begin{align*}
    \rho_{r_1}(\bx-\bc_1) \cdot \rho_{r_2}(\bx - \bc_2) = \rho_{r_0}(\bc_1 - \bc_2) \cdot \rho_{r_3}(\bx-\bc_3)
    \; .
\end{align*}
\end{claim}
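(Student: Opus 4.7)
The plan is to take logarithms and verify that both sides agree as quadratic polynomials in $\bx$. Since $\rho_s(\bv) = \exp(-\pi\|\bv\|^2/s^2)$, the identity is equivalent to showing that
\begin{equation*}
\frac{\|\bx-\bc_1\|^2}{r_1^2} + \frac{\|\bx-\bc_2\|^2}{r_2^2} \;=\; \frac{\|\bc_1-\bc_2\|^2}{r_0^2} + \frac{\|\bx-\bc_3\|^2}{r_3^2}
\end{equation*}
as polynomial identities in the coordinates of $\bx$. Since each side is a quadratic polynomial in $\bx$, it suffices to check that the coefficient of $\|\bx\|^2$, the linear term in $\bx$, and the constant term all match.

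For the quadratic coefficient I would note that $1/r_1^2 + 1/r_2^2 = r_0^2/(r_1 r_2)^2 = 1/r_3^2$ by the definitions of $r_0$ and $r_3$, which matches the right-hand side. For the linear coefficient I would collect $-2\langle \bx,\, \bc_1/r_1^2 + \bc_2/r_2^2\rangle$ on the left; observing that $\bc_3/r_3^2 = \bc_1/r_1^2 + \bc_2/r_2^2$ (which is precisely what the definition $\bc_3 = (r_3/r_1)^2 \bc_1 + (r_3/r_2)^2 \bc_2$ encodes), this coincides with $-2\langle \bx, \bc_3\rangle/r_3^2$ from the right-hand side.

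The only non-trivial step is the constant term: one must verify
\begin{equation*}
\frac{\|\bc_1\|^2}{r_1^2} + \frac{\|\bc_2\|^2}{r_2^2} \;=\; \frac{\|\bc_1-\bc_2\|^2}{r_0^2} + \frac{\|\bc_3\|^2}{r_3^2}.
\end{equation*}
I would expand $\|\bc_3\|^2/r_3^2 = r_3^2\,\|\bc_1/r_1^2+\bc_2/r_2^2\|^2$ and use the identity $r_3^2/r_1^2 = r_2^2/r_0^2$ (and the symmetric one), which follows from $r_3^2 = r_1^2 r_2^2/r_0^2$. The diagonal terms then give $\|\bc_1\|^2/r_1^2 - r_3^2\|\bc_1\|^2/r_1^4 = \|\bc_1\|^2/r_0^2$ and similarly for $\bc_2$, while the cross term contributes $-2\langle \bc_1,\bc_2\rangle/r_0^2$, summing to $\|\bc_1-\bc_2\|^2/r_0^2$.

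I expect the main (mild) obstacle to be simply bookkeeping: the algebra benefits from first recording the three relations $1/r_3^2 = 1/r_1^2 + 1/r_2^2$, $\,\bc_3/r_3^2 = \bc_1/r_1^2 + \bc_2/r_2^2$, and $r_3^2/r_1^2 + r_3^2/r_2^2 = 1$, which make the three coefficient checks essentially mechanical. Once these are in hand, the identity drops out without needing any special structure of $\bx$, $\bc_1$, or $\bc_2$.
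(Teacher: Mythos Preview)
Your proof is correct: taking logarithms and matching the quadratic, linear, and constant coefficients in $\bx$ is exactly the standard completing-the-square verification, and your three auxiliary relations make each check mechanical. The paper itself does not prove this claim; it simply cites it as \cite[Fact~2.1]{peikert2010sampler}, so there is no in-paper proof to compare against, but your argument is the natural one and would be accepted without hesitation.
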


\paragraph{Fourier analysis.} We briefly review basic tools of Fourier analysis required later on. The Fourier transform of a function $f: \mathbb{R}^n \to \mathbb{C}$ is defined to be
\begin{align*}
    \hat{f}(\bw) = \int_{\mathbb{R}^n} f(\bx)e^{-2\pi i \langle \bx, \bw \rangle}d\bx
    \; .
\end{align*}

An elementary property of the Fourier transform is that if $f(\bw) = g(\bw+\bv)$ for some $\bv \in \mathbb{R}^n$, then $\hat{f}(\bw) = e^{2\pi i \langle \bv, \bw \rangle}\hat{g}(\bw)$. Another important fact is that the Fourier transform of a Gaussian is also a Gaussian, i.e., $\hat{\rho} = \rho$; more generally, $\hat{\rho}_s = s^n \rho_{1/s}$.
We also exploit the Poisson summation formula stated below. Note that we denote by $f(A) = \sum_{\bx \in A} f(\bx)$ for any function $f$ and any discrete set $A$. 

\begin{lemma}[Poisson summation formula] For any lattice $L$ and any function $f$,\footnote{To be precise, $f$ needs to satisfy some niceness conditions; this will always hold in our applications.}
\label{lem:poisson-sum}
\begin{align*}
    f(L) = \det(L^*)\cdot \what{f}(L^*)
    \; .
\end{align*}
\end{lemma}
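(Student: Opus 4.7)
The plan is to establish the Poisson summation formula via the classical periodization-plus-Fourier-series argument. The key idea is that the function $F(\bx) := \sum_{\by \in L} f(\bx + \by)$ is $L$-periodic by construction, so it admits a Fourier series expansion indexed by the dual lattice $L^*$. Evaluating this expansion at $\bx = \bzero$ will turn the left side into $f(L)$ and the right side into a sum over $L^*$ of Fourier coefficients, which we will identify with $\what{f}$ values up to the claimed $\det(L^*)$ factor.

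Concretely, I would fix a fundamental domain $\mathcal{F}$ of $L$ (which has volume $\det(L)$) and invoke the standard Fourier series for $L$-periodic functions:
\begin{align*}
F(\bx) = \sum_{\bw \in L^*} c_{\bw}\, e^{2\pi i \langle \bw, \bx \rangle}, \qquad c_{\bw} = \frac{1}{\det(L)} \int_{\mathcal{F}} F(\bx)\, e^{-2\pi i \langle \bw, \bx \rangle}\, d\bx.
\end{align*}
Substituting the definition of $F$ and swapping sum and integral, and exploiting that $e^{-2\pi i \langle \bw, \by \rangle} = 1$ whenever $\bw \in L^*$ and $\by \in L$, each Fourier coefficient unfolds from an integral over $\mathcal{F}$ (together with a sum of translates by $L$) into a single integral over all of $\mathbb{R}^n$. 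This gives $c_{\bw} = \what{f}(\bw)/\det(L)$, and using $\det(L^*) = 1/\det(L)$ from the preliminaries, we obtain $c_{\bw} = \det(L^*)\,\what{f}(\bw)$.

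Evaluating the Fourier expansion at $\bx = \bzero$ then yields
\begin{align*}
f(L) = F(\bzero) = \sum_{\bw \in L^*} c_{\bw} = \det(L^*) \sum_{\bw \in L^*} \what{f}(\bw) = \det(L^*)\cdot \what{f}(L^*),
\end{align*}
which is the desired identity. The only genuinely nontrivial step is analytic rather than algebraic: one must justify the pointwise convergence of the Fourier series at $\bx = \bzero$ and the interchange of sum and integral in the computation of $c_{\bw}$. Both hold under mild regularity and decay on $f$ and $\what{f}$ (e.g., Schwartz class), which comfortably covers the Gaussians $\rho_s$ and their translates used throughout the paper. This is exactly the niceness caveat noted in the lemma's footnote, and it will be the only analytic subtlety to guard against in applications.
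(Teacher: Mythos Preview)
Your argument is the classical periodization-plus-Fourier-series proof and is correct under the stated regularity caveat. Note that the paper does not actually prove this lemma; it is stated as a standard fact from the literature without proof, so there is no approach to compare against.
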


\paragraph{Smoothing parameter.} An important lattice parameter induced by discrete Gaussian which will repeatedly appear in our work is the \emph{smoothing parameter}, defined as follows.
\begin{definition}[Smoothing parameter]
For lattice $L$ and real $\eps > 0$, we define the \emph{smoothing parameter} $\eta_{\eps}(L)$ as
\begin{align*}
    \eta_{\eps}(L) = \inf \{s \mid \rho_{1/s}(L^* \setminus \{\bzero\}) \leq \eps\}
    \; .
\end{align*}
\end{definition}

Intuitively, this parameter is the width beyond which the discrete Gaussian distribution behaves like a continuous Gaussian. This is formalized in the lemmas below.

\begin{lemma}[{\cite[Claim 3.9]{regev2005lwe}}]
\label{lem:smoothing-gaussian}
For any $n$-dimensional lattice $L$, vector $\bu \in \mathbb{R}^n$, and $r,s > 0$ satisfying $rs/t \geq \eta_\eps(L)$ for some $\eps < \frac{1}{2}$, where $t = \sqrt{r^2+s^2}$, the statistical distance between $D_{\bu+L,r} + D_{\mathbb{R}^n, s}$ and $D_{\mathbb{R}^n, t}$ is at most $4\eps$.
\end{lemma}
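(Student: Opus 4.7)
The plan is to compute the density of $Y := X_1 + X_2$ where $X_1 \sim D_{\bu+L,r}$ and $X_2 \sim D_{\mathbb{R}^n,s}$, show it is pointwise within a $(1 \pm O(\eps))$ multiplicative factor of the density of $D_{\mathbb{R}^n,t}$, and then convert this to the desired statistical distance bound by integration. Writing out the convolution, the density of $Y$ at $\bx$ is
\[
    f_Y(\bx) \;=\; \sum_{\by \in \bu+L} \frac{\rho_r(\by)}{\rho_r(\bu+L)} \cdot \frac{\rho_s(\bx-\by)}{s^n} \; .
\]

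The first key step is to apply Claim~\ref{claim:complete-squares} with $r_1 = r$, $r_2 = s$, $\bc_1 = \bzero$, $\bc_2 = \bx$, which gives $r_0 = t$, $r_3 = rs/t$, and $\bc_3 = (r^2/t^2)\bx$. This lets us factor $\rho_r(\by)\rho_s(\bx-\by) = \rho_t(\bx)\cdot \rho_{rs/t}(\by - (r^2/t^2)\bx)$, with $\rho_t(\bx)$ pulled outside the sum. After this manipulation,
\[
    f_Y(\bx) \;=\; \frac{\rho_t(\bx)}{s^n\,\rho_r(\bu+L)} \cdot \rho_{rs/t}\bigl((\bu+L) - (r^2/t^2)\bx\bigr) \; .
\]

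The second key step is to control both $\rho$-sums using the smoothing parameter. A standard Poisson summation argument (Lemma~\ref{lem:poisson-sum}) shows that for any $s' \geq \eta_\eps(L)$ and any shift $\bc$, one has $\rho_{s'}(\bc + L) \in (s')^n/\det(L) \cdot [1-\eps, 1+\eps]$: indeed, $\rho_{s'}(\bc+L) = (s')^n/\det(L) \cdot \sum_{\bw \in L^*} e^{-2\pi i \langle \bc,\bw\rangle} \rho_{1/s'}(\bw)$, and the $\bw \neq \bzero$ terms contribute at most $\rho_{1/s'}(L^* \setminus \{\bzero\}) \leq \eps$ in absolute value. Since $rs/t \leq r$ (because $s \leq t$), the hypothesis $rs/t \geq \eta_\eps(L)$ implies the same bound for $r$, so both $\rho_r(\bu+L)$ and $\rho_{rs/t}((\bu+L) - (r^2/t^2)\bx)$ are controlled. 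Substituting these estimates and using $(rs/t)^n \cdot t^n = (rs)^n$, the prefactors collapse exactly, yielding
\[
    f_Y(\bx) \;\in\; \frac{\rho_t(\bx)}{t^n} \cdot \Bigl[\tfrac{1-\eps}{1+\eps},\, \tfrac{1+\eps}{1-\eps}\Bigr] \; .
\]

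The final step is routine: the pointwise ratio bound above implies $|f_Y(\bx) - \rho_t(\bx)/t^n| \leq \frac{2\eps}{1-\eps} \cdot \rho_t(\bx)/t^n$, and integrating over $\bx$ gives $\Delta(f_Y, D_{\mathbb{R}^n, t}) \leq \eps/(1-\eps) \leq 2\eps < 4\eps$ for $\eps < 1/2$. I do not anticipate a serious obstacle here; the main thing to keep track of is that the factorization from Claim~\ref{claim:complete-squares} must align cleanly with the smoothing hypothesis, which is exactly why the condition is phrased as $rs/t \geq \eta_\eps(L)$ rather than separately on $r$ and $s$.
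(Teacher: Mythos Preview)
The paper does not supply its own proof of this lemma; it is quoted directly from \cite[Claim~3.9]{regev2005lwe}. Your argument is correct and is essentially the standard proof from that reference: write the convolution density, apply the completing-the-squares identity (Claim~\ref{claim:complete-squares}) to pull out $\rho_t(\bx)$, and then use Poisson summation together with the smoothing hypothesis to show both lattice sums lie in $(s')^n/\det(L)\cdot[1-\eps,1+\eps]$, after which the determinants and width factors cancel to leave $\rho_t(\bx)/t^n$ up to a $(1\pm\eps)/(1\mp\eps)$ multiplicative error. The final integration step and the observation that $r \ge rs/t$ (so that the smoothing bound applies to the denominator as well) are both handled correctly.
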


\begin{lemma}[{\cite[Lemma 2.5]{peikert2017ringlwe}}]
\label{lem:smoothing-uniform}
For any $n$-dimensional lattice $L$, real $\eps > 0$, and $r \geq \eta_{\eps}(L)$, the statistical distance between $D_{\mathbb{R}^n, r} \bmod L$ and the uniform distribution over $\mathbb{R}^n / L$ is at most $\eps/2$.
\end{lemma}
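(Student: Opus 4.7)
The plan is to reduce this to a pointwise estimate on the density of $D_{\mathbb{R}^n,r} \bmod L$ via the Poisson summation formula. First I would note that the probability density of $D_{\mathbb{R}^n,r} \bmod L$ at a point $\bx$ in a fundamental domain of $\mathbb{R}^n/L$ is
\[
f(\bx) \;=\; \frac{1}{r^n}\sum_{\bv \in L} \rho_r(\bx + \bv) \;=\; \frac{\rho_r(\bx + L)}{r^n},
\]
while the uniform density on $\mathbb{R}^n/L$ is $1/\det(L) = \det(L^*)$. Since the statistical distance is $\tfrac{1}{2}\int_{\mathbb{R}^n/L} |f(\bx) - \det(L^*)|\, d\bx$ over a fundamental domain of volume $\det(L)$, it suffices to establish the pointwise bound $|f(\bx) - \det(L^*)| \le \eps \cdot \det(L^*)$.

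To get this pointwise bound, I would apply Lemma~\ref{lem:poisson-sum} (Poisson summation) to the translated Gaussian $g_{\bx}(\bv) := \rho_r(\bv + \bx)$, whose Fourier transform, by the translation property noted in the excerpt and the identity $\hat{\rho}_r = r^n \rho_{1/r}$, equals $\hat{g}_{\bx}(\bw) = e^{2\pi i \langle \bx, \bw\rangle}\, r^n \rho_{1/r}(\bw)$. This yields
\[
f(\bx) \;=\; \det(L^*) \sum_{\bw \in L^*} e^{2\pi i \langle \bx, \bw\rangle}\, \rho_{1/r}(\bw) \;=\; \det(L^*)\Bigl(1 + \sum_{\bw \in L^* \setminus \{\bzero\}} e^{2\pi i \langle \bx, \bw\rangle}\, \rho_{1/r}(\bw)\Bigr).
\]
Taking absolute values, using the triangle inequality and $|e^{2\pi i \langle \bx, \bw\rangle}| = 1$,
\[
|f(\bx) - \det(L^*)| \;\le\; \det(L^*) \cdot \rho_{1/r}(L^* \setminus \{\bzero\}).
\]

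The assumption $r \ge \eta_{\eps}(L)$ is exactly what I need: by definition of the smoothing parameter, $\rho_{1/r}(L^* \setminus \{\bzero\}) \le \eps$. Substituting and integrating over a fundamental domain,
\[
\Delta(D_{\mathbb{R}^n,r} \bmod L,\, U_{\mathbb{R}^n/L}) \;\le\; \tfrac{1}{2}\int_{\mathbb{R}^n/L} \eps \cdot \det(L^*)\, d\bx \;=\; \tfrac{1}{2}\eps \cdot \det(L^*) \cdot \det(L) \;=\; \eps/2,
\]
which is the claimed bound. There is no real obstacle here; the main subtlety is bookkeeping: correctly matching the normalization $1/r^n$ of the continuous Gaussian density against the Fourier-side identity $\hat{\rho}_r = r^n \rho_{1/r}$, and correctly using $\det(L)\det(L^*) = 1$ so that the volume of the fundamental domain cancels the prefactor.
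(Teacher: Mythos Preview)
Your proof is correct and is exactly the standard Poisson-summation argument for this fact. Note, however, that the paper does not supply its own proof of this lemma: it is stated with a citation to \cite[Lemma~2.5]{peikert2017ringlwe} and used as a black box, so there is nothing in the paper to compare your argument against. Your write-up would serve perfectly well as a self-contained proof if one were desired.
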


Lemma~\ref{lem:smoothing-gaussian} states that if we take a sample from $D_{L,r}$ and add continuous Gaussian noise $D_{\mathbb{R}^n,s}$ to the sample, the resulting distribution is statistically close to $D_{\mathbb{R}^n,\sqrt{r^2+s^2}}$, which is precisely what one gets by adding two continuous Gaussian distributions of width $r$ and $s$. Unless specified otherwise, we always assume $\eps$ is negligibly small in $n$, say $\eps = \exp(-n)$. The following are some useful upper and lower bounds on the smoothing parameter $\eta_\eps(L)$.

\begin{lemma}[{\cite[Lemma 2.6]{peikert2017ringlwe}}]
\label{lem:smoothing-dual}
For any $n$-dimensional lattice $L$ and $\eps = \exp(-c^2n)$,
\begin{align*}
    \eta_\eps(L) \leq c\sqrt{n}/\lambda_1(L^*)
    \; .
\end{align*}
\end{lemma}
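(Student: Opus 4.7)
The plan is to verify the defining condition of the smoothing parameter at the candidate width $s := c\sqrt{n}/\lambda_1(L^*)$; in other words, I would show that $\rho_{1/s}(L^* \setminus \{\bzero\}) \leq \eps = \exp(-c^2 n)$. Writing $\lambda := \lambda_1(L^*)$ and using $\rho_{1/s}(\bv) = \exp(-\pi s^2 \|\bv\|^2) = \exp(-\pi c^2 n \|\bv\|^2/\lambda^2)$, the task reduces to showing
$$\sum_{\bv \in L^* \setminus \{\bzero\}} \exp(-\pi c^2 n \|\bv\|^2/\lambda^2) \leq \exp(-c^2 n).$$

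My first step is to partition the nonzero dual lattice points by norm into spherical shells $S_k := \{\bv \in L^* : k\lambda \leq \|\bv\| < (k+1)\lambda\}$ for $k \geq 1$. By definition of $\lambda_1(L^*)$, no nonzero vector of $L^*$ has norm below $\lambda$, so these shells do cover $L^* \setminus \{\bzero\}$, and each vector in $S_k$ contributes at most $\exp(-\pi c^2 n k^2)$ to the sum.

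My second step is to bound $|S_k|$ by a standard packing argument. Because any two distinct points of $L^*$ differ by a nonzero lattice vector of norm at least $\lambda$, the open balls of radius $\lambda/2$ about distinct lattice points are pairwise disjoint. The balls centered at points of $S_k$ all lie inside the ball of radius $(k + 3/2)\lambda$ about the origin, so comparing volumes gives $|S_k| \leq (2k+3)^n$. Combining the two steps,
$$\rho_{1/s}(L^* \setminus \{\bzero\}) \;\leq\; \sum_{k=1}^\infty (2k+3)^n \exp(-\pi c^2 n k^2).$$

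The remaining work — and what I expect to be the only delicate point — is to show that this series is bounded by $\exp(-c^2 n)$. The $k=1$ term equals $\exp(n(\ln 5 - \pi c^2))$, so the key inequality is $n(\pi - 1)c^2 \geq n \ln 5 + O(1)$, which holds once $c$ is above a small absolute constant (the regime in which the lemma is meaningful, since otherwise $\eps$ exceeds $1/2$ and the smoothing parameter is not useful). For $k \geq 2$ the factor $\exp(-\pi c^2 n k^2)$ decays as a Gaussian in $k$ while $(2k+3)^n$ grows only polynomially, so the tail is bounded geometrically by, say, twice the $k=1$ term; absorbing this factor of $2$ into the exponential gap above completes the estimate.
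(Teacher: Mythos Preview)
The paper does not supply a proof of this lemma; it is quoted verbatim from \cite[Lemma~2.6]{peikert2017ringlwe}. So there is nothing in the paper to compare your argument against, and I will simply assess your proposal on its own.

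Your shell-decomposition plus packing argument is a standard and valid route, and the estimates you give are correct as far as they go. The issue is the range of $c$ for which the argument succeeds. As you compute, the $k=1$ term is $5^n\exp(-\pi c^2 n)$, and for this to be at most $\exp(-c^2 n)$ you need $(\pi-1)c^2 \ge \ln 5$, i.e.\ $c \gtrsim 0.87$. Your justification for restricting to this range is incorrect: you write that below this threshold ``$\eps$ exceeds $1/2$,'' but $\eps=\exp(-c^2 n)$ is below $1/2$ whenever $c^2 n > \ln 2$, which holds for \emph{any} fixed constant $c>0$ once $n$ is moderately large. For, say, $c=1/2$ one still has $\eps = \exp(-n/4)$, well inside the useful regime of the smoothing parameter, yet your packing bound gives a $k=1$ term of order $5^n e^{-\pi n/4}\gg e^{-n/4}$, so the argument breaks down while the lemma's claim is still meaningful.

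To cover the full range of $c$ (and the lemma is stated without restriction), one needs a sharper count than the crude volume packing bound $|S_k|\le(2k+3)^n$. The usual fix is to invoke Banaszczyk's tail inequality, which bounds $\rho(M\setminus r\sqrt{n}B)$ by an expression of the form $C(r)^n\rho(M)$ that decays for all $r>1/\sqrt{2\pi}$; applied to $M=sL^*$ this gives the desired estimate without the $5^n$ loss. If you only care about the applications in this paper, your proof as written suffices for $c\ge 1$, and the one place the lemma is invoked with a possibly small $c$ is for the one-dimensional lattice $\mathbb{Z}$, where the sum can be bounded directly without any packing argument.
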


\begin{lemma}[{\cite[Lemma 3.3]{micciancio2007average}}]
\label{lem:smoothing-primal}
For any $n$-dimensional lattice $L$ and $\eps > 0$,
\begin{align*}
    \eta_\eps(L) \leq \sqrt{\frac{\ln(2n(1+1/\eps))}{\pi}}\cdot \lambda_n(L)
    \; .
\end{align*}
\end{lemma}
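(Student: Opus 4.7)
My plan is to reduce the Gaussian sum on $L^*$ to a product of one-dimensional Gaussian sums on $\mathbb{Z}$, by passing to a sublattice of $L$ with a short basis and using a QR decomposition to lower-bound the (reverse) Gram-Schmidt lengths of the dual basis. Choose $n$ linearly independent vectors $v_1, \ldots, v_n \in L$ with $\|v_i\| = \lambda_i(L) \le \lambda_n(L)$ and let $L' = \sum_i \mathbb{Z} v_i$ be the sublattice they generate (of which they \emph{are} a basis). Since $L' \subseteq L$ implies $L^* \subseteq (L')^*$, we have $\rho_{1/s}(L^* \setminus \{0\}) \le \rho_{1/s}((L')^* \setminus \{0\})$, so it suffices to bound $\eta_\eps(L')$.

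Write $V = [v_1, \ldots, v_n] = QR$ with $Q$ orthogonal (columns $q_i = \tilde v_i / \|\tilde v_i\|$ the normalized Gram-Schmidt vectors of the $v_i$) and $R$ upper triangular with $R_{ii} = \|\tilde v_i\|$. The dual basis matrix is $V^{-T} = Q R^{-T}$, where $R^{-T}$ is lower triangular. A short induction then shows that running Gram-Schmidt on the dual columns $v_1^*, \ldots, v_n^*$ of $V^{-T}$ \emph{in reverse order} (starting from $v_n^*$) yields the orthogonal frame $\tilde v_i^* = q_i/\|\tilde v_i\|$ with $\|\tilde v_i^*\| = 1/\|\tilde v_i\| \ge 1/\lambda_n(L)$. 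Any $w = \sum_i k_i v_i^* \in (L')^*$ therefore admits the decomposition $w = \sum_i c_i \tilde v_i^*$ where the change of variables $k \mapsto c$ is upper unitriangular ($c_i = k_i + \sum_{j<i} \mu_{j,i}' k_j$ for real $\mu_{j,i}'$), and orthogonality gives $\|w\|^2 \ge \sum_i c_i^2 / \lambda_n(L)^2$. Setting $u = s/\lambda_n(L)$, this bound yields
\[ \rho_{1/s}((L')^*) \le \sum_{k \in \mathbb{Z}^n} \prod_{i=1}^n e^{-\pi u^2 c_i^2}. \]

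I would then evaluate the sum iteratively from $k_n$ down to $k_1$: because $c_i$ depends linearly on $k_1, \ldots, k_i$ with unit coefficient on $k_i$, the inner sum (with the earlier coordinates fixed) equals $\sum_{k_i \in \mathbb{Z}} e^{-\pi u^2 (k_i + \alpha)^2} \le \rho_{1/u}(\mathbb{Z})$ for some shift $\alpha$, where the shift-invariance bound is immediate from Poisson summation since the Fourier coefficients $u^{-1} e^{-\pi m^2/u^2}\cos(2\pi m\alpha)$ are maximized at integer $\alpha$. Peeling off all $n$ layers gives $\rho_{1/s}((L')^*) \le \rho_{1/u}(\mathbb{Z})^n$. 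Choosing $s = \lambda_n(L) \sqrt{\ln(2n(1+1/\eps))/\pi}$ so that $e^{-\pi u^2} = 1/(2n(1+1/\eps))$, the elementary 1D estimate $\rho_{1/u}(\mathbb{Z}) - 1 \le 2 e^{-\pi u^2}/(1-e^{-\pi u^2})$, combined with $(1+\delta)^n \le e^{n\delta}$ and the scalar inequality $2\eps/(2+\eps) \le \ln(1+\eps)$, gives $\rho_{1/u}(\mathbb{Z})^n - 1 \le \eps$, completing the proof. The conceptually delicate step is the QR analysis: lower-bounding the \emph{reverse} Gram-Schmidt of the dual basis by $1/\lambda_n(L)$ is the duality trick that lets the product decomposition pick up precisely $\lambda_n(L)$, rather than a longer basis length that a Minkowski-type theorem would produce on $L$ directly.
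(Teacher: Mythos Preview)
The paper does not prove this lemma; it is quoted verbatim from \cite{micciancio2007average} with no argument given. Your proposal is correct and is essentially the standard proof from that source: pass to the sublattice $L'$ spanned by vectors realizing the successive minima (so that they form an honest basis), use the Gram--Schmidt/dual-basis duality $\|\tilde v_i^*\| = 1/\|\tilde v_i\| \ge 1/\lambda_n(L)$, peel the dual Gaussian sum into a product of one-dimensional sums, and finish with the elementary tail bound on $\rho_{1/u}(\mathbb{Z})$.

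One small remark on the final numerics, since your write-up compresses it: after $(1+\delta)^n \le e^{n\delta}$ with $\delta = 2x/(1-x)$ and $x = 1/(2n(1+1/\eps))$, you need $2nx/(1-x) \le \ln(1+\eps)$, not merely $2nx \le \ln(1+\eps)$. The extra $(1-x)^{-1}$ factor is exactly what forces the specific inequality $2\eps/(2+\eps) \le \ln(1+\eps)$ you invoke (rather than the weaker $\eps/(1+\eps) \le \ln(1+\eps)$): since $1-x \ge (2+\eps)/(2(1+\eps))$ for $n\ge 1$, the required bound rearranges precisely to $2\eps/(2+\eps) \le \ln(1+\eps)$. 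So the chain of inequalities closes, but it is worth making that dependence explicit.
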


\begin{lemma}[{\cite[Claim 2.13]{regev2005lwe}}]
\label{lem:smoothing-lb}
For any $n$-dimensional lattice $L$ and $\eps > 0$,
\begin{align*}
    \eta_\eps(L) \geq \sqrt{\frac{\ln 1/\eps}{\pi}}\cdot\frac{1}{\lambda_1(L^*)}
    \; .
\end{align*}
\end{lemma}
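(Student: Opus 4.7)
The plan is to directly unpack the definition of the smoothing parameter and obtain the lower bound by evaluating $\rho_{1/s}$ at a single shortest nonzero vector of $L^*$. In particular, I would let $\bv \in L^* \setminus \{\bzero\}$ be a shortest nonzero dual vector so that $\|\bv\| = \lambda_1(L^*)$, and note that for any $s > 0$,
\begin{align*}
\rho_{1/s}(L^* \setminus \{\bzero\}) \;\geq\; \rho_{1/s}(\bv) \;=\; \exp\!\bigl(-\pi s^2 \lambda_1(L^*)^2\bigr).
\end{align*}

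Next, I would set $s_0 := \sqrt{\ln(1/\eps)/\pi}\cdot 1/\lambda_1(L^*)$ and show that any $s < s_0$ fails the inequality appearing in the definition of $\eta_\eps(L)$. Indeed, for such $s$, we have $\pi s^2 \lambda_1(L^*)^2 < \ln(1/\eps)$, so $\exp(-\pi s^2 \lambda_1(L^*)^2) > \eps$, and therefore $\rho_{1/s}(L^* \setminus \{\bzero\}) > \eps$. This shows that the set $\{s \mid \rho_{1/s}(L^* \setminus \{\bzero\}) \leq \eps\}$ is contained in $[s_0, \infty)$, and so its infimum, namely $\eta_\eps(L)$, is at least $s_0$.

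There is essentially no obstacle here: the inequality is a one-line consequence of the definition, because the sum over $L^* \setminus \{\bzero\}$ is bounded below by a single term, and the tightness of the bound is essentially determined by how close $\bv$ is to the origin. One minor point to be careful about is the strictness of the inequality at $s = s_0$, but this is resolved automatically since the argument applies to all $s < s_0$ and the infimum is then $\geq s_0$; alternatively one can include $-\bv$ as well to get a factor of $2$ and handle the boundary case directly.
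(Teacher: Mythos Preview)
Your argument is correct and is exactly the standard proof: bound the sum $\rho_{1/s}(L^*\setminus\{\bzero\})$ from below by the single term at a shortest dual vector and read off the threshold for $s$. The paper itself does not prove this lemma; it simply quotes it from~\cite{regev2005lwe}, where the same one-line observation is made.
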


\paragraph{Computational problems.}
GapSVP and SIVP are among the main computational problems on lattices and are believed to be computationally hard (even with quantum computation) for polynomial approximation factor $\alpha(n)$. We also define two additional problems, DGS and BDD. 

\begin{definition}[GapSVP] 
For an approximation factor $\alpha = \alpha(n)$, an instance of $\mathrm{GapSVP}_\alpha$ is given by an $n$-dimensional lattice $L$ and a number $d > 0$. In \textnormal{YES} instances, $\lambda_1(L) \leq d$, whereas in \textnormal{NO} instances, $\lambda_1(L) > \alpha \cdot d$.
\end{definition}

\begin{definition}[SIVP]
For an approximation factor $\alpha = \alpha(n)$, an instance of $\mathrm{SIVP}_\alpha$ is given by an $n$-dimensional lattice $L$. The goal is to output a set of $n$ linearly independent lattice vectors of length at most $\alpha \cdot \lambda_n(L)$.
\end{definition}

\begin{definition}[DGS]
For a function $\varphi$ that maps lattices to non-negative reals, an instance of $\dgs_\varphi$ is given by a lattice $L$ and a parameter $r \geq \varphi(L)$. The goal is to output an independent sample whose distribution is within negligible statistical distance of $D_{L,r}$.
\end{definition}

\begin{definition}[BDD]
For an $n$-dimensional lattice $L$ and distance bound $d > 0$, an instance of $\bdd_{L,d}$  is given by a vector $\bt = \bw + \bu$, where $\bu \in L$ and $\|\bw\| \leq d$. The goal is to output $\bw$.
\end{definition}

\subsection{Learning with errors}
\label{prelim:lwe}
We now define the learning with errors (LWE) problem. This definition will not be used in the sequel, and is included for completeness. Let $n$ and $q$ be positive integers, and $\alpha > 0$ an error rate. We denote the quotient ring of integers modulo $q$ as $\mathbb{Z}_q = \mathbb{Z}/q\mathbb{Z}$ and quotient group of reals modulo the integers as $\mathbb{T} = \mathbb{R}/\mathbb{Z} = [0, 1)$.

\begin{definition}[LWE distribution] For integer $q \ge 2$ and vector $\bs \in \mathbb{Z}_q^n$, the \emph{LWE distribution} $A_{\bs, \alpha}$ over $\mathbb{Z}_q^n \times \mathbb{T}$ is sampled by independently choosing uniformly random $\ba \in \mathbb{Z}_q^n$ and $e \sim D_{\mathbb{R},\alpha}$, and outputting $(\ba, (\langle \ba, \bs \rangle/q + e) \bmod 1)$.
\end{definition}

\begin{definition} For an integer $q = q(n) \geq 2$ and error parameter $\alpha = \alpha(n) > 0$, the average-case decision problem $\mathrm{LWE}_{q,\alpha}$ is to distinguish the following two distributions over $\mathbb{Z}_q^n \times \mathbb{T}$: (1) the LWE distribution $A_{\bs, \alpha}$ for some uniformly random $\bs \in \mathbb{Z}_q^n$ (which is fixed for all samples), or (2) the uniform distribution.
\end{definition}

\subsection{Continuous learning with errors}
\label{prelim:comb}
We now define the CLWE distribution, which is the central subject of our analysis.

\begin{definition}[CLWE distribution]
For unit vector $\bw \in \mathbb{R}^n$ and parameters $\beta, \gamma > 0$, define the \emph{CLWE distribution} $A_{{\bw}, \beta, \gamma}$ over $\mathbb{R}^{n+1}$ to have density at $(\by,z)$ proportional to
\begin{align*}
    \rho(\by) \cdot \sum_{k \in \mathbb{Z}} \rho_\beta(z+k-\gamma\langle \by, \bw \rangle)
    \; .
\end{align*}
\end{definition}

Equivalently, a sample $(\by, z)$ from the CLWE distribution $A_{\bw, \beta, \gamma}$ is given by the $(n+1)$-dimensional vector $(\by, z)$ where $\by \sim D_{\mathbb{R}^n}$ and 
    $z = (\gamma \langle \by, \bw \rangle + e) \bmod 1$ where $e \sim D_{\mathbb{R},\beta}$.
The vector $\bw$ is the hidden direction, $\gamma$ is the density of layers, and $\beta$ is the noise added to each equation. From the CLWE distribution, we can arrive at the homogeneous CLWE distribution by conditioning on $z = 0$. A formal definition is given as follows.

\begin{definition}[Homogeneous CLWE distribution]\label{def:hclwe}
For unit vector $\bw \in \mathbb{R}^n$ and parameters $\beta, \gamma > 0$, define the \emph{homogeneous CLWE distribution} $H_{\bw, \beta, \gamma}$ over $\mathbb{R}^n$ to have density at $\by$ proportional to
\begin{align}\label{eqn:hclwe-def}
    \rho(\by) \cdot \sum_{k \in \mathbb{Z}} \rho_\beta(k-\gamma\langle \by, \bw \rangle)
    \; .
\end{align}
\end{definition}

The homogeneous CLWE distribution can be equivalently defined as a mixture of Gaussians.
To see this, notice that Eq.~\eqref{eqn:hclwe-def} is equal to
\begin{align}\label{eqn:hclwe-mixture-def}
    \sum_{k \in \mathbb{Z}} \rho_{\sqrt{\beta^2+\gamma^2}}(k) \cdot 
    \rho(\pi_{\bw^\perp}(\by)) \cdot \rho_{\beta/\sqrt{\beta^2+\gamma^2}}\Big(\langle \by, \bw \rangle -\frac{\gamma}{\beta^2+\gamma^2}k\Big) \; ,
\end{align}
where $\pi_{\bw^\perp}$ denotes the projection on the orthogonal space to $\bw$.
Hence, $H_{\bw, \beta, \gamma}$ can be viewed as a mixture of Gaussian components of width 
$\beta/\sqrt{\beta^2+\gamma^2}$ (which is roughly $\beta/\gamma$ for $\beta \ll \gamma$) in the secret direction, and width $1$ in the orthogonal space. The components are equally spaced, with a separation of $\gamma/(\beta^2+\gamma^2)$ between them (which is roughly $1/\gamma$ for $\beta \ll \gamma$). 

We remark that the integral of~\eqref{eqn:hclwe-def} (or equivalently, of~\eqref{eqn:hclwe-mixture-def}) over all $\by$ is
\begin{align}\label{eqn:hclwe-def-normalization}
    Z = \frac{\beta}{\sqrt{\beta^2+\gamma^2}} \cdot \rho\Bigg(\frac{1}{\sqrt{\beta^2+\gamma^2}}\mathbb{Z}\Bigg) \; .
\end{align}
This is easy to see since the integral over $\by$ of the product of the last two $\rho$ terms in~\eqref{eqn:hclwe-mixture-def} is $\beta/\sqrt{\beta^2+\gamma^2}$ independently of $k$.

\begin{definition} For parameters $\beta, \gamma > 0$, the average-case decision problem $\clwe_{\beta, \gamma}$ is to distinguish the following two distributions over $\mathbb{R}^n \times \mathbb{T}$: (1) the CLWE distribution $A_{\bw, \beta, \gamma}$ for some uniformly random unit vector $\bw \in \mathbb{R}^n$ (which is fixed for all samples), or (2) $D_{\mathbb{R}^n} \times U$.
\end{definition}

\begin{definition} For parameters $\beta, \gamma > 0$, the average-case decision problem $\hclwe_{\beta, \gamma}$ is to distinguish the following two distributions over $\mathbb{R}^n$: (1) the homogeneous CLWE distribution $H_{\bw, \beta, \gamma}$ for some uniformly random unit vector $\bw \in \mathbb{R}^n$ (which is fixed for all samples), or (2) $D_{\mathbb{R}^n}$.
\end{definition}

Note that $\clwe_{\beta,\gamma}$ and $\hclwe_{\beta,\gamma}$ are defined as average-case problems. We could have equally well defined them to be worst-case problems, requiring the algorithm to distinguish the distributions for \emph{all} hidden directions $\bw \in \mathbb{R}^n$. The following claim shows that the two formulations are equivalent.

\begin{claim} \label{claim:ic-worst-to-ic}
For any $\beta, \gamma > 0$,
there is a polynomial-time reduction from worst-case $\clwe_{\beta,\gamma}$ to (average-case) $\clwe_{\beta,\gamma}$.
\end{claim}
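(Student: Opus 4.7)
The plan is to exploit the rotational symmetry of the standard Gaussian distribution $D_{\mathbb{R}^n}$ (and of $D_{\mathbb{R}^n}\times U$). Given an average-case solver $\cA$ for $\clwe_{\beta,\gamma}$, I would construct a worst-case solver that, on input a fixed unit vector $\bw_0 \in \mathbb{R}^n$ and access to samples from either $A_{\bw_0,\beta,\gamma}$ or $D_{\mathbb{R}^n}\times U$, samples once a Haar-random orthogonal matrix $R \in O(n)$, then transforms every oracle sample $(\by, z)$ into $(R\by, z)$ before passing it to $\cA$.

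The correctness rests on three easy observations. First, since $R$ is orthogonal, $\langle R\by, R\bw_0\rangle = \langle \by, \bw_0\rangle$, so under the transformation the density in the definition of $A_{\bw_0,\beta,\gamma}$ pushes forward to the density of $A_{R\bw_0,\beta,\gamma}$; in particular the transformed samples are distributed exactly as $A_{R\bw_0,\beta,\gamma}$. Second, for any fixed unit vector $\bw_0$ and $R$ drawn from the Haar measure on $O(n)$, the vector $R\bw_0$ is uniform on the unit sphere, so the induced distribution over secrets matches the one in the average-case problem. Third, rotational invariance of $D_{\mathbb{R}^n}$ implies that $(R\by, z) \sim D_{\mathbb{R}^n}\times U$ whenever $(\by,z)\sim D_{\mathbb{R}^n}\times U$, so the null distribution is preserved exactly.

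Combining these three points, for any fixed worst-case direction $\bw_0$ the acceptance probabilities of the reduction on the two input distributions equal, respectively, the probabilities in the definition of the average-case advantage of $\cA$. Hence the reduction inherits $\cA$'s non-negligible advantage, uniformly over $\bw_0$.

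There is no substantive obstacle: the only mildly technical point is producing an (approximately) Haar-distributed $R$ in polynomial time, which can be done in the standard way by applying Gram–Schmidt to a matrix of i.i.d.\ standard Gaussian entries and truncating to polynomially many bits of precision; the resulting statistical error can be absorbed into the advantage bound without affecting non-negligibility.
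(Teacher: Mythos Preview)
Your proposal is correct and follows essentially the same approach as the paper: apply a Haar-random rotation $R$ to the first component of each sample, so that the secret becomes $R\bw_0$, which is uniform on the sphere, while both $A_{\bw_0,\beta,\gamma}$ and $D_{\mathbb{R}^n}\times U$ are mapped to the correct average-case distributions by rotational invariance of the Gaussian. Your write-up is in fact slightly more careful than the paper's (you spell out the null-case invariance and the Haar sampling issue), but the underlying idea is identical.
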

\begin{proof}
Given CLWE samples $\{(\by_i,z_i)\}_{i=1}^{\poly(n)}$ from $A_{\bw,\beta,\gamma}$, we apply a random rotation $\bR$, giving us samples of the form $\{(\bR\by_i,z_i\}_{i=1}^{\poly(n)}$. Since the standard Gaussian is rotationally invariant and $\langle \by, \bw \rangle = \langle \bR \by, \bR^T\bw \rangle$, the rotated CLWE samples are distributed according to $A_{\bR^T\bw,\beta,\gamma}$. Since $\bR$ is a random rotation, the random direction $\bR^T \bw$ is uniformly distributed on the sphere.
\end{proof}

\section{Hardness of CLWE}
\label{section:clwe-hardness}

\subsection{Background and overview}
\label{section:clwe-background}
In this section, we give an overview of the quantum reduction from worst-case lattice problems to CLWE. Our goal is to show that we can efficiently solve worst-case lattice problems, in particular GapSVP and SIVP, using an oracle for $\clwe$ (and with quantum computation). We first state our main theorem, which was stated informally as Theorem~\ref{thm:main-informal} in the introduction. 

\begin{theorem}
\label{thm:clwe-intro}
Let $\beta = \beta(n) \in (0,1)$ and $\gamma = \gamma(n) \geq 2\sqrt{n}$ be such that $\gamma/\beta$ is polynomially bounded. Then there is a polynomial-time quantum reduction from $\dgs_{2\sqrt{n}\eta_\eps(L)/\beta}$ to $\clwe_{\beta,\gamma}$.
\end{theorem}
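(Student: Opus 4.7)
The plan is to follow the iterative structure of Regev's original LWE hardness proof, replacing the LWE oracle with a CLWE oracle throughout. The reduction has two layers: (i) a classical reduction from $\bdd_{L^*,\gamma/r}$ to $\clwe_{\beta,\gamma}$ given polynomially many $D_{L,r}$ samples, and (ii) the quantum iterative step from~\cite{regev2005lwe} that uses a BDD solver on $L^*$ to sample from a narrower discrete Gaussian on $L$. Alternating between these two layers, we start from discrete Gaussian samples of large width and descend until we reach the target width $2\sqrt{n}\eta_\eps(L)/\beta$.

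First I would describe the iteration skeleton. Suppose we have $\poly(n)$ samples from $D_{L,r}$ for some $r \ge 2\sqrt{n}\eta_\eps(L)/\beta$. The CLWE-based BDD subroutine, described below, solves $\bdd_{L^*,\gamma/r}$; feeding this into the quantum step (Lemma~\ref{lem:reg05quantumstep}) produces samples from $D_{L,r'}$ where $r' = r\sqrt{n}/\gamma \le r/2$, using the assumption $\gamma \ge 2\sqrt{n}$. Since each iteration halves $r$, and since for sufficiently large initial $r$ we can classically produce samples from $D_{L,r}$ (via standard bootstrapping using an LLL-reduced basis), after $\poly(n)$ iterations we reach the desired width. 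The parameter $r$ must never drop below $\eta_\eps(L)\cdot \sqrt{n}/\beta$ (up to constants) for the BDD radius $\gamma/r$ to be within reach, which is guaranteed by the $\dgs$ parameter in the statement.

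The core step is the reduction $\bdd_{L^*,\gamma/r}\to \clwe_{\beta,\gamma}$ given $D_{L,r}$ samples. Writing the BDD instance as $\bt=\bu+\bw$ with $\bu\in L^*$ and $\|\bw\|\le \gamma/r$, for each $\by\sim D_{L,r}$ we form the pair
\begin{align*}
  (\tilde{\by},\tilde z) \;=\; \Bigl(\tfrac{1}{r}\by+\be,\; \langle\by,\bt\rangle + e'\Bigr)\bmod(\mathbb{R}^n\times 1),
\end{align*}
with $\be\sim D_{\mathbb{R}^n,s}$ and $e'\sim D_{\mathbb{R},\beta'}$ for appropriately chosen $s$ and $\beta'$. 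Since $\langle\by,\bu\rangle\in\mathbb{Z}$, the second coordinate equals $\langle\by,\bw\rangle+e'\pmod 1$, which has Gaussian width $\approx r\|\bw\|\le \gamma$. Using the smoothing lemma (Lemma~\ref{lem:smoothing-gaussian}) with the hypothesis $r\ge \eta_\eps(L)$, the first coordinate is within negligible statistical distance of a standard Gaussian once $s$ is chosen so that $\sqrt{(1/r)^2+s^2}\approx 1$; choosing the noise scales to match the CLWE parameters then yields samples statistically close to $A_{\bw/\|\bw\|,\beta,\gamma}$ (after rescaling the secret to unit length and absorbing the factor into the noise).

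The main obstacle is that the given CLWE oracle is a \emph{decision} oracle and therefore does not directly return $\bw$. To convert decision into search we invoke the ``oracle hidden center'' procedure from~\cite{peikert2017ringlwe}: the oracle's YES/NO behavior on perturbed inputs is used to binary-search along each coordinate and extract $\bw$ (or equivalently $\bw/\|\bw\|$ together with $\|\bw\|$). Implementing this requires showing that the CLWE distribution depends continuously and Lipschitz-smoothly on the secret, so that small perturbations of $\bw$ produce statistically close CLWE distributions; this in turn is what allows the hidden-center routine, together with a standard worst-case-to-average-case randomization (Claim~\ref{claim:ic-worst-to-ic}) to randomize the secret to a uniformly random unit vector, to succeed with non-negligible advantage. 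Finally, the case $\|\bw\| < \gamma/r$ is handled by randomly padding the BDD instance so the effective CLWE parameter is exactly $\gamma$, as sketched after the main theorem statement. Stringing (i) and (ii) together and accounting for the polynomial number of iterations and polynomial sample complexity per iteration completes the reduction.
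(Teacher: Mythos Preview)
Your plan is correct and matches the paper's approach: iterate Regev's classical-BDD-plus-quantum loop, with the classical step realized by (a) transforming discrete Gaussian samples and the BDD target into CLWE samples whose secret is $\bw/\|\bw\|$ (the paper's Lemma~\ref{lem:bdd-to-clwe}), and (b) invoking the OHCP machinery of~\cite{peikert2017ringlwe} together with the worst-to-average randomization of Claim~\ref{claim:ic-worst-to-ic} to turn the decision oracle into a BDD solver.

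One point is inaccurate. The case $\|\bw\|<\gamma/r$ is \emph{not} handled by ``randomly padding the BDD instance'' to hit $\gamma$ exactly---you do not know $\|\bw\|$, so you cannot pad to the right length. In the paper this case is absorbed by OHCP itself: the oracle $\cO(\bw,t)$ is built by feeding discrete Gaussian samples of width $\exp(t)r$ into the CLWE-sample generator, so that at the particular $t^*=\log(\gamma/(\|\bw^*\|r'))$ the generated samples have parameters exactly $(\beta,\gamma)$ and the decision oracle's advantage kicks in (Item~1 of Lemma~\ref{lem:ohcp}). This is precisely why Lemma~\ref{lem:bdddgs-to-clwe} needs samples from $D_{L,r_i}$ for polynomially many widths $r_i\ge r$, a subtlety your sketch omits; the outer iteration accommodates it because a BDD oracle at radius $\gamma/r$ also solves BDD at any smaller radius, so the quantum step can produce $D_{L,r'}$ for every $r'\ge r\sqrt{n}/\gamma$.
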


Using standard reductions from GapSVP and SIVP to DGS  (see, e.g.,~\cite[Section 3.3]{regev2005lwe}), our main theorem immediately implies the following corollary.
\begin{corollary}
Let $\beta = \beta(n) \in (0,1)$ and $\gamma = \gamma(n) \geq 2\sqrt{n}$ such that $\gamma/\beta$ is polynomially bounded. Then, there is a polynomial-time quantum reduction from $\mathrm{SIVP}_\alpha$ and $\mathrm{GapSVP}_\alpha$ to $\clwe_{\beta,\gamma}$ for some $\alpha = \tilde{O}(n/\beta)$.
\end{corollary}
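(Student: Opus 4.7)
The plan is to invoke the main theorem as a black box and then bolt on the standard worst-case-to-DGS reductions from \cite{regev2005lwe}, chasing the approximation factor carefully through the chain.

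First, I would apply Theorem~\ref{thm:clwe-intro} to obtain a polynomial-time quantum reduction from $\dgs_{r}$ to $\clwe_{\beta,\gamma}$ at the specific width $r = 2\sqrt{n}\,\eta_\eps(L)/\beta$, where $\eps = \exp(-n)$ as is our standing convention. So it suffices to exhibit classical polynomial-time reductions from $\mathrm{SIVP}_\alpha$ and $\mathrm{GapSVP}_\alpha$ to $\dgs_r$ for some $\alpha = \tilde O(n/\beta)$; composing with the quantum reduction above then yields the corollary.

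For $\mathrm{SIVP}_\alpha$, I would recall the standard argument (e.g.\ \cite[Section~3.3]{regev2005lwe}): given polynomially many samples from $D_{L,r}$, a union bound together with standard tail bounds for discrete Gaussians (each sample has norm at most $r\sqrt{n}$ except with exponentially small probability) shows that one can, with high probability, extract $n$ linearly independent lattice vectors of length at most $O(r\sqrt{n})$. Plugging in $r = 2\sqrt{n}\,\eta_\eps(L)/\beta$ gives vectors of length $O(n\,\eta_\eps(L)/\beta)$. To translate this into an approximation factor relative to $\lambda_n(L)$, I invoke Lemma~\ref{lem:smoothing-primal}, which gives $\eta_\eps(L) \le \sqrt{\ln(2n(1+1/\eps))/\pi}\cdot \lambda_n(L) = \tilde O(\sqrt{n})\cdot \lambda_n(L)$ for $\eps = \exp(-n)$. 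Wait — more carefully, with $\eps = \exp(-n)$, Lemma~\ref{lem:smoothing-primal} yields $\eta_\eps(L) = O(\sqrt{n})\cdot \lambda_n(L)$; strictly speaking this already has the $\sqrt{n}$ factor built in, so the resulting approximation factor is $\alpha = O(n^{3/2}/\beta)$. In fact the more refined bound $\eta_\eps(L) = O(\sqrt{\log n})\cdot \lambda_n(L)$ is what gives the advertised $\alpha = \tilde O(n/\beta)$; I would justify it using the version of Lemma~\ref{lem:smoothing-primal} applied with $\eps$ a suitable $1/\poly(n)$, which suffices since we only need the reduction to succeed with non-negligible probability and the $\dgs$ oracle tolerates negligible error. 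The SIVP approximation factor is then $\alpha = O(r\sqrt{n}/\lambda_n(L)) = \tilde O(n/\beta)$.

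For $\mathrm{GapSVP}_\alpha$, I would use the reduction from $\mathrm{GapSVP}$ to $\dgs$ in \cite[Section~3.3]{regev2005lwe}. The idea is: given a putative bound $d$ on $\lambda_1(L)$, draw polynomially many $\dgs_r$ samples at the prescribed width $r$; by Lemma~\ref{lem:smoothing-lb} and standard discrete Gaussian concentration, the length distribution of these samples distinguishes the YES case ($\lambda_1(L) \le d$) from the NO case ($\lambda_1(L) > \alpha d$) for $\alpha$ matching (up to logarithmic factors) the ratio $r \sqrt{n} \cdot \lambda_1(L^*)^{-1}$-type threshold. Running this inside a binary search over $d$ handles the fact that $r = 2\sqrt{n}\,\eta_\eps(L)/\beta$ depends on $L$ but not on $d$. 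Tracking the parameters gives the same $\alpha = \tilde O(n/\beta)$.

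The only real "obstacle" is bookkeeping: one has to be careful that the choice of $\eps$ is small enough for the negligible-error guarantees of the DGS oracle and of Lemma~\ref{lem:smoothing-gaussian}-type statements used inside Theorem~\ref{thm:clwe-intro}, while simultaneously being large enough (i.e.\ $1/\poly(n)$ where possible) so that the smoothing-parameter bound $\eta_\eps(L) = \tilde O(1)\cdot \lambda_n(L)$ contributes only polylogarithmic factors. Since the reductions in \cite[Section~3.3]{regev2005lwe} already perform exactly this bookkeeping, the cleanest writeup is simply to cite them and note that composing with Theorem~\ref{thm:clwe-intro} yields $\alpha = \tilde O(n/\beta)$.
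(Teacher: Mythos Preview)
Your proposal is correct and matches the paper's approach exactly: the paper's entire proof of this corollary is the single sentence ``Using standard reductions from GapSVP and SIVP to DGS (see, e.g., \cite[Section~3.3]{regev2005lwe}), our main theorem immediately implies the following corollary,'' and you arrive at precisely the same conclusion after your parameter-chasing. Your intermediate discussion (extracting independent vectors from $D_{L,r}$, the $\eps$ bookkeeping for $\eta_\eps(L) = \tilde O(1)\cdot\lambda_n(L)$) is more detail than the paper provides, and your final recommendation to simply cite \cite[Section~3.3]{regev2005lwe} is exactly what the paper does.
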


Based on previous work, to prove Theorem~\ref{thm:clwe-intro}, it suffices to prove the following lemma, which is the goal of this section. 

\begin{lemma}
\label{lem:bdddgs-to-clwe}
Let $\beta=\beta(n) \in (0,1)$ and $\gamma=\gamma(n) \geq 2\sqrt{n}$ such that $q = \gamma/\beta$ is polynomially bounded. There exists a probabilistic polynomial-time (classical) algorithm with access to an oracle that solves $\clwe_{\beta,\gamma}$, that takes as input a lattice $L \subset \mathbb{R}^n$, parameters $\beta, \gamma$, and $r \geq 2q \cdot \eta_{\eps}(L)$, and $\poly(n)$ many samples from the discrete Gaussian distribution $D_{L,r_i}$ for $\poly(n)$ parameters $r_i \geq r$ and solves $\bdd_{L^*,d}$ for $d = \gamma/(\sqrt{2}r)$.
\end{lemma}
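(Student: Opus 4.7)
The plan is to follow the blueprint sketched in Section~\ref{section:technical-overview}: from the BDD instance $\bt = \bu + \bw$ on $L^*$ (with $\|\bw\| \le d = \gamma/(\sqrt 2 r)$) and the supplied discrete Gaussian samples, manufacture approximate $\clwe_{\beta,\gamma}$ samples whose hidden direction is essentially $\bw$, and then recover $\bw$ from the decision oracle via an oracle hidden-center finding procedure. Concretely, for each $\by \sim D_{L,r_j}$ first compute $a = \langle \by, \bt\rangle \bmod 1 = \langle \by, \bw\rangle \bmod 1$ (valid because $\bu \in L^*$ and $\by \in L$), draw independent noises $\be \sim \ngauss{s_j}$ and $f \sim \onegauss{\beta'}$, and output the pair $(\tilde\by, \tilde z) = \bigl((\by+\be)/t_j,\; (a+f)\bmod 1\bigr)$ with $t_j = \sqrt{r_j^2 + s_j^2}$. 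Lemma~\ref{lem:smoothing-gaussian}, applicable thanks to the hypothesis $r \ge 2q\eta_\eps(L)$, places $\tilde\by$ within negligible statistical distance of $D_{\mathbb{R}^n}$. Rewriting $\langle \by, \bw\rangle = t_j\|\bw\|\cdot\langle \tilde\by, \bw/\|\bw\|\rangle - \langle \be, \bw\rangle$, one sees that $(\tilde\by, \tilde z)$ is close to a genuine CLWE sample with unit hidden direction $\bw/\|\bw\|$, density parameter $t_j\|\bw\|$, and noise width $\sqrt{s_j^2\|\bw\|^2 + \beta'^2}$.

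\textbf{Matching parameters and decision-to-search.}
To land on the oracle's parameters $(\beta,\gamma)$ we need $t_j\|\bw\| = \gamma$ and $s_j^2\|\bw\|^2 + \beta'^2 = \beta^2$. Since $\|\bw\|$ is unknown, the reduction loops over a polynomially fine grid of candidate norms in $[0,d]$ and, for each candidate, picks $r_j \ge r$ and $s_j$ to simultaneously enforce these identities and the smoothing hypothesis $r_j s_j/t_j \ge \eta_\eps(L)$; the slack $r \ge 2q \eta_\eps(L)$ with $q = \gamma/\beta$ is precisely what makes such a choice feasible across the whole range, and the availability of $D_{L,r_i}$ samples at many widths $r_i \ge r$ is what lets us sweep $t_j$ over the required interval. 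To convert the decision oracle into a way of \emph{finding} $\bw$, we invoke the oracle hidden-center procedure of~\cite{peikert2017ringlwe}: running the construction above with the shifted target $\bt - \bv$ in place of $\bt$ (so that $a = \langle \by, \bw-\bv\rangle \bmod 1$) produces samples tied to $\bw - \bv$; when $\bv$ is close to $\bw$ the effective density $t_j\|\bw-\bv\|$ collapses and the samples become indistinguishable from the null distribution, while for $\bv$ far from $\bw$ they remain CLWE-like. This dichotomy drives a binary search that converges to $\bw$ in polynomially many oracle calls.

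\textbf{Main obstacle.}
The central technical difficulty is the simultaneous parameter bookkeeping across all candidates $\bv$ explored by the hidden-center search: one must guarantee that for every relevant $\bv$ the constructed samples are within negligible statistical distance of a true CLWE sample at density exactly $\gamma$ and noise exactly $\beta$, despite $\|\bw-\bv\|$ ranging over an unknown subinterval of $[0,d]$. In particular, one needs enough DGS widths $r_i \ge r$ to cover this range without violating the smoothing condition $r_j s_j/t_j \ge \eta_\eps(L)$, and one needs the grid of candidate norms to be fine enough that the hidden-center procedure's decision boundaries are hit cleanly by the oracle. Once these parameters are aligned, the rest is a standard statistical-distance calculation using Lemma~\ref{lem:smoothing-gaussian} together with a black-box invocation of the hidden-center procedure from~\cite{peikert2017ringlwe}.
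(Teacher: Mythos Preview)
Your overall approach matches the paper's: manufacture CLWE samples from the BDD target plus DGS samples (Lemma~\ref{lem:bdd-to-clwe}), then recover the offset via the OHCP machinery of~\cite{peikert2017ringlwe}. However, your sample-parameter analysis contains a real error that would derail the parameter matching.

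You write $\langle \by, \bw\rangle = t_j\|\bw\|\,\langle \tilde\by, \bw/\|\bw\|\rangle - \langle \be, \bw\rangle$ and then declare $(\tilde\by,\tilde z)$ to be close to CLWE with density $t_j\|\bw\|$ and noise width $\sqrt{s_j^2\|\bw\|^2+\beta'^2}$, implicitly treating $-\langle\be,\bw\rangle$ as independent of $\tilde\by$. But $\tilde\by=(\by+\be)/t_j$ depends on $\be$, so conditioning on $\tilde\by$ shifts the mean of $\langle\be,\bw\rangle$. A correct conditional computation (Claim~\ref{clm:lattice_conditional} in the paper) shows that, conditional on $\by+\be=\overline\by$, one has $\by\sim (r_j/t_j)^2\overline\by + D_{L-(r_j/t_j)^2\overline\by,\,r_js_j/t_j}$; consequently the true CLWE density parameter is $\|\bw\|\,r_j^2/t_j$, not $\|\bw\|\,t_j$, and the noise width is $\sqrt{(r_js_j/t_j)^2\|\bw\|^2+\beta'^2}$. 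Your equations $t_j\|\bw\|=\gamma$ and $s_j^2\|\bw\|^2+\beta'^2=\beta^2$ are therefore the wrong constraints, and the feasibility check against the smoothing condition $r_js_j/t_j\ge\eta_\eps(L)$ has to be redone with the corrected formulas.

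A smaller structural difference: you propose gridding over candidate values of $\|\bw-\bv\|$ to hit the oracle's $(\beta,\gamma)$, whereas the paper fixes $s_1=r/(\sqrt2\, q)$ and $s_2=\beta/\sqrt2$ once and lets the OHCP scale parameter $t$ (equivalently, the DGS width $\exp(t)r$) absorb the unknown norm, so that the acceptance probability is automatically a function of $\exp(t)\|\bw-\bw^*\|$ alone---precisely the OHCP interface of Lemma~\ref{lem:ohcp}. Both routes can be made to work, but the paper's avoids the separate norm grid and makes the three OHCP conditions straightforward to verify.
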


In other words, we can implement an oracle for $\bdd_{L^*,\gamma/(\sqrt{2}r)}$ using polynomially many discrete Gaussian samples and the CLWE oracle as a sub-routine.
The proof of Lemma~\ref{lem:bdddgs-to-clwe} will be given in Section~\ref{section:clwe-from-bdddgs} 
(which is the novel contribution) and Section~\ref{section:solve-bdd-with-clwe}
(which mainly follows~\cite{peikert2017ringlwe}).

In the rest of this subsection, we briefly explain how Theorem~\ref{thm:clwe-intro} follows from Lemma~\ref{lem:bdddgs-to-clwe}. 
This derivation is already implicit in past work~\cite{peikert2017ringlwe,regev2005lwe}, and is included here mainly for completeness. 
Readers familiar with the reduction may skip directly to Section~\ref{section:clwe-from-bdddgs}. 

The basic idea is to start with samples from a very wide discrete Gaussian (which can be efficiently sampled) and then iteratively sample from narrower discrete Gaussians, until eventually we end up with short discrete Gaussian samples, as required (see Figure~\ref{fig:lwe-diagram}). Each iteration consists of two steps: the first classical step is given by Lemma~\ref{lem:bdddgs-to-clwe}, allowing us to solve BDD on the dual lattice; the second step is quantum and is given in Lemma~\ref{lem:reg05quantumstep} below, which shows that solving BDD leads to sampling from narrower discrete Gaussian.

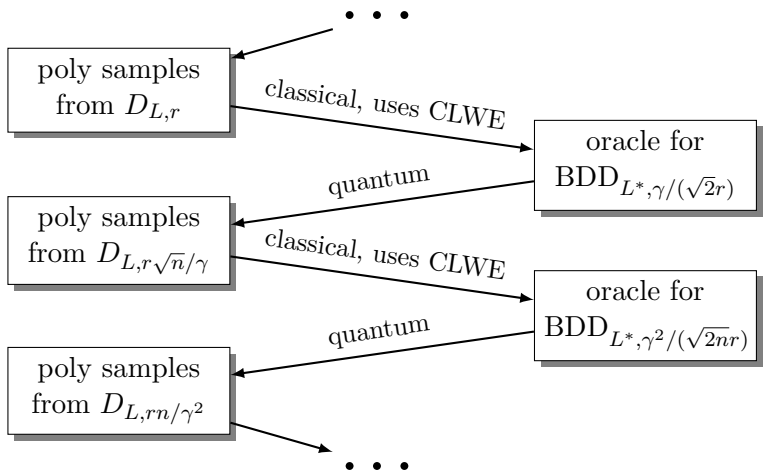
\begin{figure}[ht]
\centering
\begin{tikzpicture}

\node (begin) at (3.5, 6) {\huge{\bf \dots}};
\node[draw, fill=white, drop shadow={color=black}, align=center, text width=7em] (DGS1) at (0, 5) {poly samples from $D_{L, r}$};
\node[draw, fill=white, drop shadow={color=black}, align=center, text width=7em] (BDD1) at (7, 4) {oracle for $\mathrm{BDD}_{L^*,\gamma / (\sqrt2 r)}$};
\node[draw, fill=white, drop shadow={color=black}, align=center, text width=7em] (DGS2) at (0, 3) {poly samples from $D_{L, r \sqrt{n} / \gamma}$};
\node[draw, fill=white, drop shadow={color=black}, align=center, text width=7em] (BDD2) at (7, 2) {oracle for $\mathrm{BDD}_{L^*,\gamma^2 / (\sqrt{2n} r)}$};
\node[draw, fill=white, drop shadow={color=black}, align=center, text width=7em] (DGS3) at (0, 1) {poly samples from $D_{L, r n / \gamma^2}$};
\node (end) at (3.5, 0) {\huge{\bf \dots}};

\draw[-latex, thick] (begin) -- (DGS1);
\draw[-latex, thick] (DGS1) -- node[above, sloped] {\small{classical, uses CLWE}} ++ (BDD1);
\draw[-latex, thick] (BDD1) -- node[above, sloped] {\small{quantum}} ++ (DGS2);
\draw[-latex, thick] (DGS2) -- node[above, sloped] {\small{classical, uses CLWE}} ++ (BDD2);
\draw[-latex, thick] (BDD2) -- node[above, sloped] {\small{quantum}} ++ (DGS3);
\draw[-latex, thick] (DGS3) -- (end);

\end{tikzpicture}
\caption{Two iterations of the reduction.}
\label{fig:lwe-diagram}
\end{figure}

\begin{lemma}[{\cite[Lemma 3.14]{regev2005lwe}}]
\label{lem:reg05quantumstep}
There exists an efficient quantum algorithm that, given any $n$-dimensional lattice $L$, a number $d < \lambda_1(L^*)/2$, and an oracle that solves $\bdd_{L^*,d}$, outputs a sample from $D_{L,\sqrt{n}/(\sqrt{2}d)}$.
\end{lemma}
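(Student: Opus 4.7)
I follow the Regev 2005 quantum subroutine: build a coherent $L^*$-periodic Gaussian superposition on $\mathbb{R}^n$ of width $c\approx d/\sqrt{n}$, apply the quantum Fourier transform, and measure. By the Poisson summation formula (Lemma~\ref{lem:poisson-sum}), the QFT of such a periodic Gaussian is a discrete Gaussian on $L$ with amplitude $\rho_{1/c}$; squaring amplitudes to probabilities yields $D_{L,1/(c\sqrt{2})}$, and choosing $c=d/\sqrt{n}$ gives exactly $D_{L,\sqrt{n}/(\sqrt{2}d)}$. The BDD oracle is used only during state preparation, to coherently ``uncompute'' a label register.

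\textbf{Execution.} Work on a fine enough discretization of a large cube $C\subset\mathbb{R}^n$ so that truncation and grid errors are negligible. (i)~Using a basis of $L^*$ (polynomial-time computable from a basis of $L$), prepare a uniform superposition $\sum_{\bx\in L^*\cap C}|\bx\rangle_A$. (ii)~Adjoin a second register $B$ and, conditioned on $A=\bx$, prepare a discretized Gaussian of width $c$ centered at $\bx$ in $B$; this is efficient since the Gaussian factorizes across coordinates. The resulting joint state is $\sum_{\bx}|\bx\rangle_A\otimes\int\rho_c(\by)|\bx+\by\rangle_B\,d\by$. (iii)~Query the $\bdd_{L^*,d}$ oracle on register $B$ to write the nearest $L^*$-vector into a fresh ancilla, then XOR that ancilla into $A$ and uncompute. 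Because $c\sqrt{n}<d<\lambda_1(L^*)/2$, each Gaussian bump is contained (up to negligible tails) in the open Voronoi cell of its $L^*$-center, so BDD returns exactly the label that was added and $A$ is zeroed out coherently. Register $B$ is now in the periodic-Gaussian state with amplitude $\sum_{\bx\in L^*}\rho_c(\bv-\bx)$ at position $\bv$. (iv)~Apply the QFT on $B$ and measure: by Poisson summation the output amplitude at $\bu$ is proportional to $\rho_{1/c}(\bu)$ for $\bu\in L$ and vanishes off $L$, producing a sample from $D_{L,1/(c\sqrt{2})}=D_{L,\sqrt{n}/(\sqrt{2}d)}$.

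\textbf{Main obstacle.} The delicate step is (iii), the BDD-based uncomputation of register $A$. This is where the hypothesis $d<\lambda_1(L^*)/2$ is essential: without it, the nearest-vector map on the support of the Gaussian bumps would not faithfully recover the label that generated each branch, and the uncomputation would leave which-path information that decoheres the superposition, turning the measurement in step (iv) into a classical mixture rather than producing the sharp discrete Gaussian. The remaining work is quantitative bookkeeping — choosing the cube $C$, the discretization grid, and the Gaussian width $c$ so that truncation of $L^*\cap C$, discrete-versus-continuous Gaussian preparation, finite-QFT approximation, and negligible-tail BDD failures all compose to a single negligible total statistical distance from the ideal $D_{L,\sqrt{n}/(\sqrt{2}d)}$; these bounds are carried out in detail in~\cite{regev2005lwe}.
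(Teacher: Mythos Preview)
The paper does not supply its own proof of this lemma; it is quoted verbatim from~\cite{regev2005lwe} and used as a black box in the iterative reduction of Section~\ref{section:clwe-background}. Your sketch is a faithful outline of Regev's original quantum step: prepare an $L^*$-periodic Gaussian state using the BDD oracle to coherently erase the center label, apply the QFT, and read off a discrete Gaussian sample on $L$ via Poisson summation. The identification of the uncomputation in step~(iii) as the crux, and of $d<\lambda_1(L^*)/2$ as what makes the nearest-vector map well-defined on the support, is exactly right. One small bookkeeping slip: you write ``$c\sqrt{n}<d$'' and then set $c=d/\sqrt{n}$, which makes the inequality an equality; in practice one takes $c$ a constant factor below $d/\sqrt{n}$ (the Gaussian mass outside radius $d$ is then negligible since $\|\by\|\lesssim c\sqrt{n}$ with overwhelming probability), and the final width $1/(c\sqrt{2})$ still comes out to $\sqrt{n}/(\sqrt{2}d)$ up to that constant, which is absorbed in Regev's statement. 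Beyond that, there is nothing to compare against in the present paper.
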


Similar to~\cite{peikert2017ringlwe}, there is a subtle requirement in Lemma~\ref{lem:bdddgs-to-clwe} that we need discrete Gaussian samples from several different parameters $r' \geq r$. However, this is a non-issue since an oracle for $\bdd_{L^*,\gamma/(\sqrt{2}r)}$ also solves $\bdd_{L^*,\gamma/(\sqrt{2}r')}$ for any $r' \ge r$, so Lemma~\ref{lem:reg05quantumstep} in fact allows us to efficiently sample from $D_{L,r'\sqrt{n}/\gamma}$ for any $r' \ge r$.

\subsection{CLWE samples from BDD}
\label{section:clwe-from-bdddgs}

In this subsection we prove Lemma~\ref{lem:bdd-to-clwe}, showing how to generate CLWE samples from the given BDD instance using discrete Gaussian samples. 
In the next subsection we will show how to solve the BDD instance by applying the decision CLWE oracle to these samples, thereby completing the proof of Lemma~\ref{lem:bdddgs-to-clwe}.

\begin{lemma}
\label{lem:bdd-to-clwe}
There is an efficient algorithm that takes as input an $n$-dimensional lattice $L$, a vector $\bw + \bu$ where $\bu \in L^*$, reals $r, s_1, s_2 > 0$ such that $rs_1/\sqrt{\|\bw\|^2(r s_1/s_2)^2+t^2} \geq \eta_\eps(L)$ for some $\eps < \frac{1}{2}$ and $t = \sqrt{r^2+s_1^2}$, and samples from $D_{L,r}$, and outputs samples that are within statistical distance $8 \eps$ of the CLWE distribution $A_{\bw', \beta, \gamma}$ for $\bw' = \bw/\|\bw\|$, $\beta = \|\bw\|\sqrt{(rs_1/t)^2+(s_2/\|\bw\|)^2}$ and $\gamma = \|\bw\|r^2/t$.
\end{lemma}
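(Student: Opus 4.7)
My plan is to present an algorithm that, given the BDD target $\bw + \bu$ and one discrete Gaussian sample $\by \sim D_{L,r}$, produces one CLWE sample by adding auxiliary Gaussian noise, and then verify via direct density computation that the output distribution is close to $A_{\bw', \beta, \gamma}$. Concretely, I would draw $\bv \sim D_{\mathbb{R}^n, s_1}$ and $e \sim D_{\mathbb{R}, s_2}$ and output
\[
(\by'', z) \;:=\; \bigl((\by+\bv)/t,\;\langle \by, \bw+\bu\rangle + e \bmod 1\bigr).
\]
The reason this is well defined is that $\langle \by, \bu\rangle \in \mathbb{Z}$ (because $\by \in L$ and $\bu \in L^*$), so modulo $1$ the last coordinate equals $\langle \by, \bw\rangle + e \bmod 1$ and the unknown $\bu$ drops out; the rest is computable from the data supplied to the reduction.

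Next I would write down the joint density of the output and massage it into the shape of the CLWE density. Direct computation gives
\[
p(\by'', z) \;\propto\; \sum_{\by \in L} \rho_r(\by)\,\rho_{s_1}(t\by'' - \by)\,\sum_{k\in\mathbb Z} \rho_{s_2}(z+k-\langle\by, \bw\rangle).
\]
Applying Claim~\ref{claim:complete-squares} to the two Gaussians in $\by$ (with $r_1 = r$, $\bc_1 = \bzero$, $r_2 = s_1$, $\bc_2 = t\by''$) produces $\rho_r(\by)\rho_{s_1}(\by - t\by'') = \rho(\by'')\cdot\rho_\sigma(\by - \bc)$ with $\sigma := rs_1/t$ and $\bc := r^2\by''/t$. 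Observing that $\langle \bc, \bw\rangle = (r^2/t)\langle\by'', \bw\rangle = \gamma\langle\by'', \bw'\rangle$, the density factors as $p(\by'', z) \propto \rho(\by'')\cdot J(\bc, z)$, where $J(\bc, z) := \sum_{\by \in L}\rho_\sigma(\by - \bc)\sum_k\rho_{s_2}(z + k - \langle\by, \bw\rangle)$. So it suffices to show $J(\bc, z)$ is, up to a scalar that does not depend on $(\by'', z)$, statistically close to $\sum_k \rho_\beta(z+k-\langle\bc, \bw\rangle)$.

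The core of the argument is Poisson summation on $J(\bc, z)$ viewed as a sum over $\by \in L$. Decomposing $\by$ into components parallel and perpendicular to $\bw$, unfolding the inner sum over $k$ via one-dimensional Poisson, and then applying Lemma~\ref{lem:poisson-sum}, one expresses $J$ as a double sum over $(\bm\xi, m) \in L^* \times \mathbb Z$ of $\rho_{1/\sigma}(\bm\xi + m\bw)\,\rho_{1/s_2}(m)\,e^{2\pi i(mz - \langle \bc, \bm\xi + m\bw\rangle)}$, up to the prefactor $\det(L^*)\sigma^n s_2$. I would then identify the main contribution as the slice $\bm\xi = \bzero$: using $\rho_{1/\sigma}(m\bw)\rho_{1/s_2}(m) = \rho_{1/\beta}(m)$ with $\beta^2 = \sigma^2\|\bw\|^2 + s_2^2 = \|\bw\|^2\bigl((rs_1/t)^2+(s_2/\|\bw\|)^2\bigr)$---precisely the $\beta$ claimed in the lemma---and a final one-dimensional Poisson transform turning $\sum_m \rho_{1/\beta}(m) e^{2\pi i m(z - \langle\bc, \bw\rangle)}$ into $\beta\sum_k \rho_\beta(z + k - \langle\bc, \bw\rangle)$, the desired CLWE profile emerges.

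The remaining work is to bound the error coming from $\bm\xi \neq \bzero$. Completing the square in $m$ for each fixed $\bm\xi$ and taking absolute values bounds each summand by $\rho_{\beta/(\sigma s_2)}(\bm\xi)\cdot\beta\rho_\beta(\mathbb Z)$, and summing over $L^* \setminus \{\bzero\}$ reduces the task to showing $\sum_{\bm\xi \neq \bzero}\rho_{\beta/(\sigma s_2)}(\bm\xi) \le \eps$. This is exactly $\rho_{1/s}(L^*\setminus\{\bzero\}) \le \eps$ for $s = \sigma s_2/\beta$, which follows from the hypothesis $rs_1/\sqrt{\|\bw\|^2(rs_1/s_2)^2+t^2} \ge \eta_\eps(L)$, since that expression simplifies algebraically to $\sigma s_2/\beta$. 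Integrating the resulting pointwise error against $\rho(\by'')/(t^n\rho_r(L))$ over $\by'' \in \mathbb R^n$ and $z \in [0,1)$, and using that both $p$ and the CLWE density $q$ are probability densities (so the overall normalizations match up to $O(\eps)$), gives $\Delta(p,q) \le 8\eps$. The hard part will be the two-dimensional Gaussian bookkeeping in the Poisson step: completing the square simultaneously in $m$ and $\bm\xi$ to expose the anisotropic smoothing quantity $\sigma s_2/\beta$, and verifying that this (rather than the weaker $\sigma \ge \eta_\eps(L)$) is exactly the condition needed to control every off-diagonal term uniformly.
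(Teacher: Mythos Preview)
Your algorithm coincides with the paper's, but your analysis takes a genuinely different route. The paper argues probabilistically: it first applies Lemma~\ref{lem:smoothing-gaussian} to show that the marginal $\bx+\bv$ is within $4\eps$ of $D_{\mathbb{R}^n,t}$; then, conditioning on $\bx+\bv=\overline{\by}$, it identifies the conditional law of $\bx$ as a shifted discrete Gaussian of width $rs_1/t$ (Claim~\ref{clm:lattice_conditional}) and applies Lemma~\ref{lem:smoothing-gaussian} a second time to that conditional, picking up another $4\eps$. Two black-box invocations of the smoothing lemma give $8\eps$ with essentially no bookkeeping. You instead expand the joint density and perform one large Poisson summation over $L$, isolating the $\bm\xi=\bzero$ slice as the CLWE profile. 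This is more hands-on but self-contained, and you correctly recognize that $\sigma s_2/\beta\ge\eta_\eps(L)$ is precisely the anisotropic smoothing condition controlling the $\bm\xi\neq\bzero$ tail.

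There is, however, a quantitative gap in your final error estimate. After completing the square in $m$ for a fixed $\bm\xi\neq\bzero$ and taking absolute values, the $m$-sum is bounded by $\rho_{1/\beta}(\mathbb{Z}+c_\xi)\le\rho_{1/\beta}(\mathbb{Z})$, which for small $\beta$ is $\approx 1/\beta$ (note also that your one-dimensional Poisson is off by $\beta^2$: one has $\sum_m\rho_{1/\beta}(m)e^{2\pi i m x}=(1/\beta)\sum_k\rho_\beta(x+k)$, not $\beta\sum_k$). Integrating this uniform pointwise bound over $z\in[0,1)$ and $\by''$ yields an $L^1$ error of order $\eps\cdot\rho_{1/\beta}(\mathbb{Z})\approx\eps/\beta$ relative to the main term, so the argument as written only delivers $\Delta(p,q)=O(\eps/\beta)$, not $8\eps$. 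The fix within your framework is to apply Poisson in $m$ to each $\bm\xi\neq\bzero$ term \emph{before} taking absolute values: each becomes a modulated $\rho_\beta$-profile in $z$ whose $L^1$ norm over $[0,1)$ is $1$ (independent of $\beta$), and summing over $\bm\xi\neq\bzero$ then gives $\int_0^1|E|\,dz\le C'\eps$ as needed. The paper's conditioning approach avoids this subtlety because Lemma~\ref{lem:smoothing-gaussian} already packages the correct $L^1$-type bound.
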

\begin{proof}
We start by describing the algorithm. For each $\bx$ from the given samples from $D_{L,r}$, do the following. First, take the inner product $\langle \bx, \bw + \bu \rangle$, which gives us
\begin{align*}
    \langle \bx, \bw + \bu \rangle &= \langle \bx, \bw \rangle \bmod 1
    \; .
\end{align*}
Appending this inner product modulo 1 to the sample $\bx$, we get $(\bx, \langle \bx, \bw \rangle \bmod 1)$. 
Next, we ``smooth out" the lattice structure of $\bx$ by adding Gaussian noise $\bv \sim D_{\mathbb{R}^n,s_1}$ to $\bx$ and $e \sim D_{\mathbb{R},s_2}$ to $\langle \bx, \bw \rangle$ (modulo 1). Then, we have
\begin{align}
    (\bx + \bv, (\langle \bx, \bw \rangle + e) \bmod 1) \label{eqn:clwe-sample-raw}\; .
\end{align}
Finally, we normalize the first component by $t$ so that its marginal distribution has unit width, giving us
\begin{align}
    ((\bx + \bv)/t,(\langle \bx, \bw \rangle + e) \bmod 1) \label{eqn:clwe-sample-raw-normalized}\;,
\end{align}
which the algorithm outputs.

Our goal is to show that the distribution of \eqref{eqn:clwe-sample-raw-normalized} is within statistical distance $8\eps$ of the CLWE distribution $A_{\bw',\beta,\gamma}$, given by 
\begin{align*}
    (\by', (\gamma \langle \by', \bw' \rangle + e') \bmod 1) \; ,
\end{align*}
where $\by' \sim D_{\mathbb{R}^n}$ and $e' \sim D_{\mathbb{R},\beta}$.
Because applying a function cannot increase statistical distance (specifically, dividing the first component by $t$ and taking mod $1$ of the second), it suffices to show that the distribution of
\begin{align}
    (\bx + \bv, \langle \bx, \bw \rangle + e) \label{eqn:clwe-sample-1}\; ,
\end{align}
is within statistical distance $8\eps$ of that of
\begin{align}
    (\by, (r/t)^2 \langle \by, \bw \rangle + e') \label{eqn:clwe-sample-2}\; ,
\end{align}
where $\by \sim D_{\mathbb{R}^n,t}$ and $e' \sim D_{\mathbb{R},\beta}$. First, observe that by Lemma~\ref{lem:smoothing-gaussian}, the statistical distance between the marginals on the first component (i.e., between $\bx +\bv$ and $\by$) is at most $4\eps$. It is therefore sufficient to bound the statistical distance between the second components conditioned on any fixed value $\overline{\by}$ of the first component. 
Conditioned on the first component being $\overline{\by}$, the second component in~\eqref{eqn:clwe-sample-1} has the same distribution as
\begin{align}
\langle \bx + \bh , \bw \rangle \label{eqn:clwe-sample-3}
\end{align}
where $\bh \sim D_{\mathbb{R}^n,s_2/\|\bw\|}$, 
and the second component in~\eqref{eqn:clwe-sample-2} has the same distribution as
\begin{align}
\langle (r/t)^2 \overline{\by} + \bh' , \bw \rangle \label{eqn:clwe-sample-4}
\end{align}
where $\bh' \sim D_{\mathbb{R}^n,\beta/\|\bw\|}$.

By Claim~\ref{clm:lattice_conditional} below, conditioned on $\bx+\bv = \overline{\by}$, the distribution of $\bx$ is
$(r/t)^2\overline{\by} + D_{L-(r/t)^2\overline{\by}, rs_1/t}$. Therefore, by Lemma~\ref{lem:smoothing-gaussian}, the conditional distribution of $\bx + \bh$ given $\bx+\bv=\overline{\by}$ is within statistical distance $4 \eps$ of that of $(r/t)^2\overline{\by} + \bh'$. Since statistical distance cannot increase by applying a function (inner product with $\bw$ in this case), \eqref{eqn:clwe-sample-3} is within statistical distance $4\eps$ of \eqref{eqn:clwe-sample-4}. Hence, the distribution of \eqref{eqn:clwe-sample-1} is within statistical distance $8\eps$ of that of \eqref{eqn:clwe-sample-2}.
\end{proof}

\begin{claim}
\label{clm:lattice_conditional}
Let $\by = \bx + \bv$, where $\bx \sim D_{L,r}$ and $\bv \sim D_{\mathbb{R}^n,s}$. Then, the conditional distribution of $\bx$ given $\by = \overline{\by}$ is $(r/t)^2\overline{\by} + D_{L-(r/t)^2\overline{\by}, rs/t}$ where $t = \sqrt{r^2+s^2}$.
\end{claim}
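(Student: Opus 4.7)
The plan is a direct computation using Bayes' rule together with the Gaussian product identity from Claim~\ref{claim:complete-squares}. The joint distribution of $(\bx,\by)$ with $\bx\in L$ and $\by=\bx+\bv\in\mathbb{R}^n$ has density (with respect to the product of counting measure on $L$ and Lebesgue measure on $\mathbb{R}^n$) proportional to $\rho_r(\bx)\cdot \rho_s(\by-\bx)$, since $\bv\sim D_{\mathbb{R}^n,s}$ is independent of $\bx$. Therefore, for a fixed $\overline{\by}$, the conditional density of $\bx\in L$ given $\by=\overline{\by}$ is proportional (in $\bx$) to
\[
\rho_r(\bx)\cdot \rho_s(\overline{\by}-\bx)\;.
\]

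Next I would apply Claim~\ref{claim:complete-squares} with $r_1=r$, $r_2=s$, $\bc_1=\bzero$, and $\bc_2=\overline{\by}$. This gives $r_0=\sqrt{r^2+s^2}=t$, $r_3=rs/t$, and $\bc_3=(r_3/r_2)^2\bc_2=(r/t)^2\overline{\by}$, so that
\[
\rho_r(\bx)\cdot \rho_s(\bx-\overline{\by}) \;=\; \rho_t(\overline{\by})\cdot \rho_{rs/t}\bigl(\bx-(r/t)^2\overline{\by}\bigr)\;.
\]
The factor $\rho_t(\overline{\by})$ does not depend on $\bx$, so the conditional density of $\bx$ is proportional to $\rho_{rs/t}(\bx-(r/t)^2\overline{\by})$, supported on $L$. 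Equivalently, $\bx-(r/t)^2\overline{\by}$ is distributed as the discrete Gaussian $D_{L-(r/t)^2\overline{\by},\,rs/t}$, which yields the claimed conditional distribution.

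There is really no obstacle here: the only subtlety is to keep track of the fact that $\rho_s$ is symmetric, so $\rho_s(\overline{\by}-\bx)=\rho_s(\bx-\overline{\by})$, which lets us match the form required by Claim~\ref{claim:complete-squares}. The normalization that turns the proportionality into an actual probability distribution is automatic once we identify the right-hand side as an unnormalized discrete Gaussian on the coset $L-(r/t)^2\overline{\by}$.
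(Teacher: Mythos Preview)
Your proof is correct and essentially identical to the paper's own argument: both write the conditional density as proportional to $\rho_r(\bx)\cdot\rho_s(\overline{\by}-\bx)$, apply Claim~\ref{claim:complete-squares} to rewrite this as $\rho_t(\overline{\by})\cdot\rho_{rs/t}(\bx-(r/t)^2\overline{\by})$, and then read off the discrete Gaussian on the coset.
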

\begin{proof}
Observe that $\bx$ conditioned on $\by = \overline{\by}$ is a discrete random variable supported on $L$.
The probability of $\bx$ given $\by = \overline{\by}$ is proportional to
\begin{align*}
    \rho_r(\bx) \cdot \rho_s(\overline{\by}-\bx) = \rho_t(\overline{\by}) \cdot \rho_{rs/t}(\bx-(r/t)^2\overline{\by}) \propto \rho_{rs/t}(\bx-(r/t)^2\overline{\by})
    \; ,
\end{align*}
where the equality follows from Claim~\ref{claim:complete-squares}. Hence, the conditional distribution of $\bx-(r/t)^2\by$ given $\by = \overline{\by}$ is $D_{L-(r/t)^2\overline{\by}, rs/t}$. 
\end{proof}

\subsection{Solving BDD with the CLWE oracle}
\label{section:solve-bdd-with-clwe}
In this subsection, we complete the proof of Lemma~\ref{lem:bdddgs-to-clwe}. We first give some necessary background on the Oracle Hidden Center Problem (OHCP)~\cite{peikert2017ringlwe}. The problem asks one to search for a ``hidden center" $\bw^*$ using a decision oracle whose acceptance probability depends only on the distance to $\bw^*$. The problem's precise statement is as follows.

\begin{definition}[OHCP] 
\label{definition:ohcp}
For parameters $\eps, \delta \in [0,1)$ and $\zeta \geq 1$, the $(\eps, \delta, \zeta)$-\emph{OHCP} is an approximate search problem that tries to find the ``hidden" center $\bw^*$. Given a scale parameter $d > 0$ and access to a randomized oracle $\mathcal{O} : \mathbb{R}^n \times \mathbb{R}^{\geq 0} \rightarrow \{0,1\}$ such that its acceptance probability $p(\bw,t)$ only depends on $\exp(t)\|\bw-\bw^*\|$ for some (unknown) ``hidden center" $\bw^* \in \mathbb{R}^n$ with $\delta d \leq \|\bw^*\| \leq d$ and for any $\bw \in \mathbb{R}^n$ with $\|\bw-\bw^*\| \leq \zeta d$, the goal is to output $\bw$ s.t.\ $\|\bw-\bw^*\| \leq \eps d$.
\end{definition}

Notice that OHCP corresponds to our problem since we want to solve BDD, which is equivalent to finding the ``hidden" offset vector $\bw^*$, using a decision oracle for $\clwe_{\beta, \gamma}$. The acceptance probability of the $\clwe_{\beta,\gamma}$ oracle will depend on the distance between our guess $\bw$ and the true offset $\bw^*$. For OHCP, we have the following result from~\cite{peikert2017ringlwe}.

\begin{lemma}[\cite{peikert2017ringlwe}, Proposition 4.4]
\label{lem:ohcp}
There is a poly$(\kappa, n)$-time algorithm that takes as input a confidence parameter $\kappa \geq 20 \log(n+1)$ (and the scale parameter $d > 0$) and solves $(\exp(-\kappa), \exp(-\kappa), 1+1/\kappa)$-OHCP in dimension $n$ except with probability $\exp(-\kappa)$, provided that the oracle $\cO$ corresponding to the OHCP instance satisfies the following conditions. For some $p(\infty) \in [0, 1]$ and $t^* \ge 0$,
\begin{enumerate}
    \item $p(\bzero, t^*) - p(\infty) \geq 1/\kappa$;
    \item $|p(\bzero, t) - p(\infty)| \leq 2 \exp(-t/\kappa)$ for any $t \geq 0$; and
    \item $p(\bw, t)$ is $\kappa$-Lipschitz in $t$ for any $\bw \in \mathbb{R}^n$ such that $\|\bw\| \leq (1+1/\kappa)d$ \;.
\end{enumerate}
Furthermore, each of the algorithm's oracle calls takes the form $\cO(\cdot,i\Delta)$ for some $\Delta < 1$ that depends only on $\kappa$ and $n$ and $0 \leq i \leq \poly(\kappa,n)$.
\end{lemma}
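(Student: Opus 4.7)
The plan is to break the algorithm into two phases: a one-dimensional phase that estimates the scale $\|\bw^*\|$ using only queries of the form $\cO(\bzero, t)$, followed by an $n$-dimensional phase that iteratively refines a vector $\bw$ to converge to $\bw^*$ using queries with nonzero first argument.

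For the scale-finding phase, observe that by hypothesis $p(\bzero, t)$ depends on $\bw^*$ only through the scalar $\exp(t)\|\bw^*\|$. Condition 1 says the signal $p(\bzero, t^*) - p(\infty)$ is at least $1/\kappa$ at the unknown scale $t^*$, while Condition 2 guarantees that for $t$ much larger than $t^* + \kappa \log \kappa$ the signal has decayed below $\exp(-\Theta(\kappa))$. I would sample $p(\bzero, t)$ at a geometric grid of scales $t_i = i \Delta$ for a small step $\Delta$ depending on $\kappa, n$, estimating each value to additive error $\ll 1/\kappa$ using $\poly(\kappa, n)$ oracle calls and a Chernoff bound. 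A rightmost grid point where the estimate is noticeably above $p(\infty)$ (which itself can be read off from very large $t$) localizes $t^*$, and hence $\|\bw^*\|$, up to a factor of $1 + O(1/\kappa)$.

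For the direction-finding phase, fix a scale $\tilde t$ slightly smaller than $t^*$, so that at the point $\bw = \bzero$ the effective radius $\exp(\tilde t)\|\bw^*\|$ is $\Theta(1)$ and the oracle is still informative. Starting from the current guess $\bw$ (initially $\bzero$), I would query $\cO(\bw + r\be_j, \tilde t)$ for $j = 1, \ldots, n$ and small $r$, estimate the probabilities, and use finite differences against a baseline query at $\bw$ to form a gradient-like direction; since $p$ depends only on the radius $\exp(\tilde t)\|\bw - \bw^*\|$, this direction is (up to noise) aligned with $\bw^* - \bw$. Taking a step of appropriate size shrinks $\|\bw - \bw^*\|$ by a constant factor, and simultaneously I would run the scale-finding subroutine again (with $\bzero$ replaced by the new iterate) to refine $\tilde t$ to match the new distance. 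Iterating $O(\kappa)$ times brings $\|\bw - \bw^*\|$ down to $\exp(-\kappa) d$, which is the required accuracy.

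The main obstacle is noise control and maintaining the domain invariant. All empirical probability estimates have finite precision, so I must ensure (via Condition 3 and the Lipschitz property in $t$) that small errors in $\tilde t$ do not swamp the signal in the finite-difference estimator, and that iterates remain inside the ball of radius $(1 + 1/\kappa) d$ where the oracle's behavior is guaranteed. The $\zeta = 1 + 1/\kappa$ slack in the OHCP definition is precisely what allows the iterates to briefly overshoot without leaving the valid region, and the exponential tail in Condition 2 is what lets a $\poly(\kappa, n)$ sample budget suffice to distinguish $p(\bzero, t^*)$ from $p(\infty)$. Once these quantitative choices are balanced, a union bound over the $\poly(\kappa, n)$ oracle calls gives overall failure probability at most $\exp(-\kappa)$, as required.
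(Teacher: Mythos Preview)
The paper does not prove this lemma; it is quoted from~\cite{peikert2017ringlwe}, and the paper offers only a one-paragraph sketch. That sketch describes a \emph{guided random walk}: from the current iterate one proposes a random perturbation and uses the oracle to accept it only if it decreases $\|\bw-\bw^*\|$, each accepted step shrinking the distance by a factor roughly $\exp(1/n)$.

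Your two-phase scheme is genuinely different, and the direction-finding phase as you describe it has a gap. Write $p(\bw,t)=f\bigl(\exp(t)\,\|\bw-\bw^*\|\bigr)$. The finite difference $p(\bw+r\be_j,\tilde t)-p(\bw,\tilde t)$ is controlled by the local slope of $f$ at the single radius $\exp(\tilde t)\,\|\bw-\bw^*\|$, and the three hypotheses give no lower bound on that slope at any particular radius: Condition~1 says $f$ is bounded away from $p(\infty)$ \emph{somewhere}, Condition~2 that it decays at infinity, and Condition~3 only bounds the slope from above. Nothing prevents $f$ from being locally flat at your chosen $\tilde t$, in which case the coordinate finite differences carry no usable directional signal above sampling noise. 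The random-walk test in the paper's sketch sidesteps this because deciding whether a proposed step shrank the radius amounts to detecting the sign of the horizontal shift between the curves $t\mapsto p(\bw,t)$ and $t\mapsto p(\bw',t)$; that requires only that the curve be globally non-constant, which is exactly what Conditions~1 and~2 guarantee. Your scale-finding subroutine is in fact the right tool for this---its output moves with $\log\|\bw-\bw^*\|$---so comparing its output at $\bw$ and at $\bw+r\be_j$, rather than comparing raw $p$-values at a fixed $\tilde t$, would close the gap and bring your scheme in line with the cited argument.
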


The main idea in the proof of Lemma~\ref{lem:ohcp} is performing a guided random walk with advice from the decision oracle $\mathcal{O}$. The decision oracle $\mathcal{O}$ rejects a random step with high probability if it increases the distance $\|\bw - \bw^*\|$. Moreover, there is non-negligible probability of decreasing the distance by a factor $\exp(1/n)$ unless $\log \|\bw-\bw^*\| \leq -\kappa$. Hence, with sufficiently many steps, the random walk will reach $\widehat{\bw}$, a guess of the hidden center, which is within $\exp(-\kappa)$ distance to $\bw^*$ with high probability.

Our goal is to show that we can construct an oracle $\mathcal{O}$ satisfying the above conditions using an oracle for $\clwe_{\beta, \gamma}$. Then, it follows from Lemma~\ref{lem:ohcp} that BDD with discrete Gaussian samples can be solved using an oracle for CLWE. We first state some lemmas useful for our proof. Lemma~\ref{lem:closest-plane} is Babai's closest plane algorithm and Lemma~\ref{lem:tvbound} is an upper bound on the statistical distance between two one-dimensional Gaussian distributions.

\begin{lemma}[\cite{lenstra1982lll,babai1986cvp}]
\label{lem:closest-plane}
For any $n$-dimensional lattice $L$, there is an efficient algorithm that solves $\bdd_{L,d}$ for $d = 2^{-n/2}\cdot \lambda_1(L)$.
\end{lemma}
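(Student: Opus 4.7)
The plan is to combine LLL basis reduction with Babai's nearest plane procedure. First I would run LLL on any input basis of $L$ to obtain in polynomial time a reduced basis $\bb_1, \dots, \bb_n$ together with its Gram--Schmidt orthogonalization $\bb_1^*, \dots, \bb_n^*$. Given the BDD target $\bt = \bu + \bw$ with $\bu \in L$ and $\|\bw\| \le d$, I would then execute Babai's nearest plane algorithm: iterating $i = n, n-1, \dots, 1$, subtract off the integer multiple of $\bb_i$ that pulls the current residual's $\bb_i^*$-component into $[-\|\bb_i^*\|/2, \|\bb_i^*\|/2)$, and return the cumulative residual as the guess for $\bw$.

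The correctness analysis is the standard one for nearest plane. Writing the true error as $\bw = \sum_i w_i\, \bb_i^*/\|\bb_i^*\|$ in the Gram--Schmidt frame, a downward induction on $i$ shows that, provided all later rounding steps have been correct, the $i$-th step recovers the correct integer coefficient of $\bb_i$ as long as $|w_i| < \|\bb_i^*\|/2$. In particular, the algorithm succeeds whenever $\|\bw\| < \tfrac{1}{2} \min_i \|\bb_i^*\|$, so it remains only to lower bound $\min_i \|\bb_i^*\|$ in terms of $\lambda_1(L)$.

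This last step is the only place the LLL hypothesis enters. The Lov\'asz condition combined with the size-reduction bound $|\mu_{i,i-1}| \le 1/2$ yields $\|\bb_i^*\|^2 \ge c\, \|\bb_{i-1}^*\|^2$ for an absolute constant $c > 1/2$ (which can be pushed close to $3/4$ by choosing the LLL parameter $\delta$ close to $1$). Iterating gives $\|\bb_i^*\| \ge c^{(i-1)/2}\|\bb_1\|$, and since $\bb_1 \in L$ we have $\|\bb_1\| \ge \lambda_1(L)$. Hence $\min_i \|\bb_i^*\| \ge c^{(n-1)/2}\lambda_1(L)$, and for a suitable choice of $\delta$ this exceeds $2\cdot 2^{-n/2}\lambda_1(L)$ (a quick calculation shows that $c = 3/4$ already suffices for all $n \ge 3$), so the decoding condition $d < \tfrac12 \min_i \|\bb_i^*\|$ holds for $d = 2^{-n/2}\lambda_1(L)$.

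I do not expect any genuinely hard step in this proof: both LLL and Babai's nearest plane are classical and their analyses are elementary. The only delicate aspect is bookkeeping the exponential constants from the LLL guarantee so that the Gram--Schmidt length bound comfortably matches the specific radius $2^{-n/2}\lambda_1(L)$ claimed in the statement rather than a slightly smaller one.
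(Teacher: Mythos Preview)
The paper does not prove this lemma at all; it is quoted as a black box from \cite{lenstra1982lll,babai1986cvp}. Your sketch is exactly the classical argument from those references (LLL-reduce, then Babai nearest plane, then bound $\min_i\|\bb_i^*\|$ via the Lov\'asz condition and $\|\bb_1\|\ge\lambda_1(L)$), and it is correct. The only caveat is the one you already flag: with the textbook $\delta=3/4$ one gets $c=1/2$ and hence decoding radius $2^{-(n+1)/2}\lambda_1(L)$, a $\sqrt{2}$ shy of the stated $2^{-n/2}\lambda_1(L)$; pushing $\delta$ toward $1$ (LLL remains polynomial-time for any fixed $\delta<1$) recovers the stated constant, exactly as you indicate.
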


\begin{lemma}[{\cite[Theorem 1.3]{devroye2018tv}}] For all $\mu_1, \mu_2 \in  \mathbb{R}$, and $\sigma_1, \sigma_2 > 0$, we have
\label{lem:tvbound}
\begin{align*}
    \Delta\big(\cN(\mu_1,\sigma_1),\cN(\mu_2,\sigma_2)\big) \leq \frac{3|\sigma_1^2-\sigma_2^2|}{2\max(\sigma_1^2,\sigma_2^2)}+\frac{|\mu_1-\mu_2|}{2\max(\sigma_1,\sigma_2)}
    \; ,
\end{align*}
where $\cN(\mu,\sigma)$ denotes the Gaussian distribution with mean $\mu$ and standard deviation $\sigma$.
\end{lemma}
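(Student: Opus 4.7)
The plan is to reduce to two sub-problems via the triangle inequality and handle each with an appropriate elementary estimate. Without loss of generality assume $\sigma_1 \leq \sigma_2$, so $\max(\sigma_1,\sigma_2) = \sigma_2$. Then
\[
\Delta(\cN(\mu_1,\sigma_1),\cN(\mu_2,\sigma_2)) \leq \Delta(\cN(\mu_1,\sigma_1),\cN(\mu_1,\sigma_2)) + \Delta(\cN(\mu_1,\sigma_2),\cN(\mu_2,\sigma_2)),
\]
where the first term isolates the variance discrepancy and the second isolates the mean discrepancy. The goal is to bound these by the two summands in the claimed inequality.

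For the mean-only term (identical variance $\sigma_2$, different means), I would invoke Pinsker's inequality: the KL divergence is $(\mu_1-\mu_2)^2/(2\sigma_2^2)$, so Pinsker yields $\Delta \leq |\mu_1-\mu_2|/(2\sigma_2)$, exactly matching the second summand. An alternative is a direct computation --- the two densities cross at the midpoint $(\mu_1+\mu_2)/2$, so the TV distance equals $\mathrm{erf}\!\bigl(|\mu_1-\mu_2|/(2\sqrt{2}\sigma_2)\bigr)$, which is bounded by the same expression up to a small constant.

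The variance-only term is the main technical obstacle, since obtaining the constant $3/2$ requires more care than a black-box Pinsker bound provides. My approach would be to translate so that $\mu_1=0$ and locate the two crossing points $\pm x^\ast$ of the densities, where $x^\ast = \sigma_1\sigma_2\sqrt{2\log(\sigma_2/\sigma_1)/(\sigma_2^2-\sigma_1^2)}$. Then
\[
\Delta(\cN(0,\sigma_1),\cN(0,\sigma_2)) = \mathrm{erf}\!\bigl(x^\ast/(\sigma_1\sqrt{2})\bigr) - \mathrm{erf}\!\bigl(x^\ast/(\sigma_2\sqrt{2})\bigr).
\]
Introducing the single parameter $\delta = (\sigma_2^2-\sigma_1^2)/\sigma_2^2 \in [0,1)$, this reduces to a one-variable inequality $g(\delta) \leq 3\delta/2$. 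I would verify it by checking $g(0)=0$ together with $g'(\delta) \leq 3/2$ on $[0,1)$ via direct differentiation; the factor $3/2$ (larger than the leading-order coefficient) is what makes the bound hold uniformly up to $\delta\to 1$, not just for small $\delta$. Since the lemma is cited verbatim from~\cite{devroye2018tv}, the paper itself does not reproduce these computations, and I would likewise defer to their Theorem~1.3 after noting the overall structure above.
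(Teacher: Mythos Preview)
Your assessment is correct: the paper does not prove this lemma at all --- it is simply stated with a citation to \cite{devroye2018tv} and used as a black box in the proof of Lemma~\ref{lem:bdddgs-to-clwe}. Your proposal to defer to the reference is exactly what the paper does, and the triangle-inequality decomposition you sketch (Pinsker for the mean shift, a one-parameter analysis for the variance shift) is a sound outline of how the cited result is established, though the paper itself contains none of it.
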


Now, we prove Lemma~\ref{lem:bdddgs-to-clwe}, restated below.

{
\def\thetheorem{\ref{lem:bdddgs-to-clwe}}
\begin{lemma}
Let $\beta=\beta(n) \in (0,1)$ and $\gamma=\gamma(n) \geq 2\sqrt{n}$ such that $q = \gamma/\beta$ is polynomially bounded. There exists a probabilistic polynomial-time (classical) algorithm with access to an oracle that solves $\clwe_{\beta,\gamma}$, that takes as input a lattice $L \subset \mathbb{R}^n$, parameters $\beta, \gamma$, and $r \geq 2q \cdot \eta_{\eps}(L)$, and $\poly(n)$ many samples from the discrete Gaussian distribution $D_{L,r_i}$ for $\poly(n)$ parameters $r_i \geq r$ and solves $\bdd_{L^*,d}$ for $d = \gamma/(\sqrt{2}r)$.
\end{lemma}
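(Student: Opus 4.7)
Write the BDD input as $\bt = \bw^* + \bu$ with $\bu \in L^*$ and $\|\bw^*\| \leq d$. The plan is to reduce the problem to the Oracle Hidden Center Problem (Definition~\ref{definition:ohcp}) with hidden center $\bw^*$, invoke Lemma~\ref{lem:ohcp} to recover $\bw^*$ to within exponentially small error, and then round via Babai's nearest plane algorithm (Lemma~\ref{lem:closest-plane}) on $\bt - \widehat{\bw}$ to extract $\bu$ exactly and return $\bw^* = \bt - \bu$.

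To build the OHCP oracle $\cO(\bw, T)$ from the decision CLWE oracle, I feed Lemma~\ref{lem:bdd-to-clwe} the shifted instance $\bt - \bw = (\bw^* - \bw) + \bu$, discrete Gaussian samples of width $r' = r\exp(T)$ (chosen from the supplied widths $r_i$, which I arrange to form a dense geometric progression covering a polynomial range of $T$), and noise parameters $s_1 = \alpha r'$, $s_2 = \sqrt{\beta^2 - \alpha^2 \gamma^2}$ for the fixed constant $\alpha = 1/(2q)$. Substituting into the formulas of Lemma~\ref{lem:bdd-to-clwe} produces synthetic samples within negligible statistical distance of $A_{\bv, \beta', \gamma'}$ with
\[
\gamma' \;=\; \frac{r}{\sqrt{1+\alpha^2}}\,\exp(T)\,\|\bw - \bw^*\|, \qquad \beta' \;=\; \sqrt{\alpha^2 \gamma'^2 + s_2^2}, \qquad \bv \;=\; \frac{\bw^* - \bw}{\|\bw^* - \bw\|}.
\]
A uniformly random rotation (Claim~\ref{claim:ic-worst-to-ic}) replaces $\bv$ by a uniform unit vector; the rotated samples are then passed to the $\clwe_{\beta,\gamma}$ oracle and its output bit returned. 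Because $\gamma'$ and $\beta'$ (hence the entire synthetic distribution) depend on $(\bw, T)$ only through $\exp(T)\|\bw - \bw^*\|$, so does the acceptance probability of $\cO$, as OHCP requires.

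It remains to verify the three conditions of Lemma~\ref{lem:ohcp} with confidence parameter $\kappa = \poly(n)$. For Condition~1, take $T^* = \log(\gamma\sqrt{1+\alpha^2}/(r\|\bw^*\|))$; this choice makes $\gamma' = \gamma$ and $\beta' = \beta$ at $\bw = \bzero$, and lies in $[\Omega(1), O(\kappa)]$ whenever $\|\bw^*\| \in [\delta d, d]$ for $\delta = \exp(-\kappa)$. The smoothing hypothesis $r' \geq r \geq 2q\eta_\eps(L)$ lets Lemma~\ref{lem:bdd-to-clwe} deliver samples within $\mathrm{negl}(n)$ statistical distance of genuine $\clwe_{\beta, \gamma}$, so the CLWE oracle's non-negligible advantage transfers to a gap $p(\bzero, T^*) - p(\infty) \geq 1/\kappa$. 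Condition~2 holds because $\beta'$ and $\gamma'$ grow exponentially in $T$ past $T^*$, so by Lemma~\ref{lem:smoothing-uniform} applied to the $z$-marginal the synthetic distribution converges to $D_{\mathbb{R}^n} \times U$ at a rate far faster than $2\exp(-T/\kappa)$. Condition~3 (Lipschitz in $T$) follows from the smooth dependence of $(\gamma', \beta')$ on $T$, via Lemma~\ref{lem:tvbound}.

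Applying Lemma~\ref{lem:ohcp} returns $\widehat{\bw}$ with $\|\widehat{\bw} - \bw^*\| \leq \exp(-\kappa) d$; the degenerate case $\|\bw^*\| < \exp(-\kappa) d$ is handled separately by observing that $\bt$ itself is already within Babai's decoding radius of $L^*$. Combining Lemma~\ref{lem:smoothing-lb} with $r \geq 2q\eta_\eps(L)$ forces $d \ll \lambda_1(L^*)$, so Babai's algorithm applied to $\bt - \widehat{\bw}$ recovers $\bu$ exactly for any $\kappa$ a sufficiently large polynomial in $n$. The main obstacle I anticipate is the parameter coordination in the construction of $\cO$: arranging $(r', s_1, s_2)$ so that both $\gamma'$ and $\beta'$ collapse to functions of $\exp(T)\|\bw - \bw^*\|$ alone, hit $(\beta, \gamma)$ exactly at the operating scale, and keep the smoothing prerequisite of Lemma~\ref{lem:bdd-to-clwe} satisfied across the full polynomial range of $T$ that OHCP traverses.
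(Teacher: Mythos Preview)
Your proposal is correct and follows essentially the same route as the paper: build an OHCP oracle by feeding the shifted target $\bt-\bw$ and discrete Gaussian samples of width $r\exp(T)$ into Lemma~\ref{lem:bdd-to-clwe} with $s_1$ proportional to $r'$ and $s_2$ a fixed constant chosen so that $(\beta',\gamma')$ collapse to functions of $\exp(T)\|\bw-\bw^*\|$ alone and hit $(\beta,\gamma)$ at the operating scale; then verify the three OHCP conditions via the oracle's advantage, Lemma~\ref{lem:smoothing-uniform}, and Lemma~\ref{lem:tvbound}, and finish with Babai. The paper uses $s_1/r'=1/(\sqrt{2}q)$ rather than your $1/(2q)$, and it inserts a small self-reduction to radius $d'=(1-1/(2n))d$ (via~\cite{lyubashevsky2009bdd}) to create the slack needed for the OHCP requirement $\|\bw-\bw^*\|\le(1+1/\kappa)d'\le d$ when checking the smoothing hypothesis of Lemma~\ref{lem:bdd-to-clwe}; you should add this step, as otherwise the oracle queries at $\|\bw\|$ up to $(1+1/\kappa)d$ slightly overshoot the range where your smoothing verification goes through.
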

\addtocounter{theorem}{-1}
}

\begin{proof}
Let $d' = (1-1/(2n))\cdot d$. By~\cite[Corollary 2]{lyubashevsky2009bdd}, it suffices to solve $\bdd_{L^*,d'}$. 
Let $\kappa = \poly(n)$ with $\kappa \geq 8qn\ell$ be such that the advantage of our $\clwe_{\beta, \gamma}$ oracle is at least $1/\kappa$, where $\ell \geq 1$ is the number of samples required by the oracle.

Given as input a lattice $L \subset \mathbb{R}^n$, a parameter $r \geq 2q \cdot \eta_{\eps}(L)$, samples from $D_{L,r_i}$ for $1 \leq i \leq \poly(n)$, and a BDD instance $\bw^* + \bu$ where $\bu \in L^*$ and $\|\bw^*\| \leq d'$, we want to recover $\bw^*$. Without loss of generality, we can assume that $\|\bw^*\| \geq \exp(-n/2)\cdot \lambda_1(L^*) \geq (2q/r)\cdot \exp(-n/2)$ (Lemma~\ref{lem:smoothing-lb}), since we can otherwise find $\bw^*$ efficiently using Babai's closest plane algorithm (Lemma~\ref{lem:closest-plane}).

We will use the CLWE oracle to simulate an oracle $\mathcal{O}: \mathbb{R}^n \times \mathbb{R}^{\ge 0} \rightarrow \{0,1\}$ such that the probability that $\mathcal{O}(\bw,t)$ outputs 1 (``accepts") only depends on $\exp(t)\|\bw-\bw^*\|$. Our oracle $\cO$ corresponds to the oracle in Definition~\ref{definition:ohcp} with $\bw^*$ as the ``hidden center". We will use Lemma~\ref{lem:ohcp} to find $\bw^*$.

On input $(\bw, t)$, our oracle $\mathcal{O}$ receives $\ell$ independent samples from $D_{L,\exp(t)r}$. Then, we generate CLWE samples using the procedure from Lemma~\ref{lem:bdd-to-clwe}. The procedure takes as input these $\ell$ samples, the vector $\bu + \bw^* - \bw$ where $\bu \in L^*$, and parameters $\exp(t) r, \exp(t) s_1, s_2$. Our choice of $s_1$ and $s_2$ will be specified below. Note that the CLWE oracle requires the ``hidden direction" $(\bw-\bw^*)/\|\bw-\bw^*\|$ to be uniformly distributed on the unit sphere. To this end, we apply the worst-to-average case reduction from Claim~\ref{claim:ic-worst-to-ic}. Let $S_{\bw, t}$ be the resulting CLWE distribution. Our oracle $\mathcal{O}$ then calls the $\clwe_{\beta,\gamma}$ oracle on $S_{\bw,t}^\ell$ and outputs 1 if and only if it accepts.

Using the oracle $\mathcal{O}$, we can run the procedure from Lemma~\ref{lem:ohcp} with confidence parameter $\kappa$ and scale parameter $d'$. The output of this procedure will be some approximation $\what{\bw}$ to the oracle's ``hidden center" with the guarantee that $\|\what{\bw}-\bw^*\| \leq \exp(-\kappa)d'$. Finally, running Babai's algorithm on the vector $\bu+\bw^*-\what{\bw}$ will give us $\bw^*$ exactly since
\begin{align*}
    \|\what{\bw}-\bw^*\| \leq \exp(-\kappa)d \leq \beta\exp(-\kappa)/\eta_\eps(L) \leq 2^{-n}\lambda_1(L^*)
    \; ,
\end{align*}
where the last inequality is from Lemma~\ref{lem:smoothing-dual}.

The running time of the above procedure is clearly polynomial in $n$. It remains to check that our oracle $\mathcal{O}$ (1) is a valid instance of $(\exp(-\kappa),\exp(-\kappa),1+1/\kappa)$-OHCP with hidden center $\bw^*$ and (2) satisfies all the conditions of Lemma~\ref{lem:ohcp}. First, note that $S_{\bw, t}$ will be negligibly close in statistical distance to the CLWE distribution with parameters
\begin{align*}
    \beta' &= \sqrt{(\exp(t)\|\bw-\bw^*\|)^2s_1'^2+s_2^2}
    \; , \\
    \gamma' &= \exp(t)\|\bw-\bw^*\|r'
    \; ,
\end{align*}
where $r' = r^2/\sqrt{r^2+s_1^2}$ and $s_1' = rs_1/\sqrt{r^2+s_1^2}$ as long as $r,s_1,s_2$ satisfy the conditions of Lemma~\ref{lem:bdd-to-clwe}. Then, we set $s_1 = r/(\sqrt{2}q)$ and choose $s_2$ such that
\begin{align*}
    s_2^2 = {\beta}^2 - (s_1'/r')^2{\gamma}^2 = {\beta}^2 - (s_1/r)^2{\gamma}^2 = {\beta}^2/2
    \; .
\end{align*}
    
Lemma~\ref{lem:bdd-to-clwe} requires $rs_1/\sqrt{r^2\|\bw-\bw^*\|^2(s_1/s_2)^2+r^2+s_1^2} \geq \eta_{\eps}(L)$. We know that $r \geq 2q\cdot \eta_{\eps}(L)$ and $s_1 \geq \sqrt{2}\cdot \eta_{\eps}(L)$, so it remains to determine a sufficient condition for the aforementioned inequality. Observe that for any $\bw$ such that $\|\bw-\bw^*\| \leq d$, the condition $s_2 \geq 2d\cdot\eta_\eps(L)$ is sufficient. Since $r \geq 2(\gamma/\beta)\cdot \eta_{\eps}(L)$, this translates to $s_2 \geq \beta/(\sqrt{2})$. Hence, the transformation from Lemma~\ref{lem:bdd-to-clwe} will output samples negligibly close to CLWE samples for our choice of $s_1$ and $s_2$ as long as $\|\bw-\bw^*\| \leq d$ (beyond the BDD distance bound $d'$).

Since $S_{\bw,t}$ is negligibly close to the CLWE distribution, the acceptance probability $p(\bw,t)$ of $\mathcal{O}$ only depends on $\exp(t)\|\bw-\bw^*\|$. Moreover, by assumption $\|\bw^*\| \geq \exp(-n/2) \cdot (2q/r) \geq \exp(-\kappa)d'$. Hence, $\mathcal{O}, \kappa, d'$ correspond to a valid instance of $(\exp(-\kappa),\exp(-\kappa),1+1/\kappa)$-OHCP with ``hidden center" $\bw^*$.

Next, we show that $p(\bw,t)$ of $\mathcal{O}$ satisfies all three conditions of Lemma~\ref{lem:ohcp} with $p(\infty)$ taken to be the acceptance probability of the CLWE oracle on samples from $D_{\mathbb{R}^n} \times U$. 
Item~1 of Lemma~\ref{lem:ohcp} follows from our assumption that our $\clwe_{\beta,\gamma}$ oracle has advantage $1/\kappa$, and by our choice of $r$, $s_1$, and $s_2$, when $t^* = \log(\gamma/(\|\bw^*\|r')) > \log(\sqrt{2})$, the generated CLWE samples satisfy $\gamma'(t^*) =  \gamma$ and $\beta'(t^*) = \beta$. Hence, $p(\bzero,t^*) - p(\infty) \geq 1/\kappa$.

We now show that Item~2 holds, which states that $|p(\bzero,t)-p(\infty)| \leq 2 \exp(-t/\kappa)$ for any $t > 0$. We will show that $S_{\bzero, t}$ converges exponentially fast to $D_{\mathbb{R}^n} \times U$ in statistical distance. Let $f(\by,z)$ be the probability density of $S_{\bzero, t}$. Then,
\begin{align*}
    \Delta(S_{\bzero,t},D_{\mathbb{R}^n}\times U) &= \frac{1}{2}\int |f(z|\by)-U(z)|\rho(\by)d\by dz \\
    &= \frac{1}{2} \int \Big(\int |f(z|\by)-U(z)|dz\Big)\rho(\by) d\by
    \; .
\end{align*}
Hence, it suffices to show that the conditional density of $z$ given $\by$ for $S_{\bzero,t}$ converges exponentially fast to the uniform distribution on $\mathbb{T}$. Notice that the conditional distribution of $z$ given $\by$ is the Gaussian distribution with width parameter $\beta' \geq \exp(t)\|\bw^*\|r/(2q) \geq \exp(t-n/2)$, where we have used our assumption that $\|\bw^*\| \geq (2q/r)\cdot \exp(-n/2)$. By Lemma~\ref{lem:smoothing-dual} applied to $\mathbb{Z}$, we know that $\beta'$ is larger than $\eta_{\eps}(\mathbb{Z})$ for $\eps = \exp(-\exp(2t-n))$. Hence, one sample from this conditional distribution is within statistical distance $\eps$ of the uniform distribution by Lemma~\ref{lem:smoothing-uniform}. By the triangle inequality applied to $\ell$ samples,
\begin{align*}
    \Delta\Big(S_{\bzero, t}^\ell, (D_{\mathbb{R}^n} \times U)^\ell\Big) \leq \min(1, \ell \exp(-\exp(2t-n))) \leq 2\exp(-t/\kappa)
    \; ,
\end{align*}
where in the last inequality, we use the the fact that we can choose $\kappa$ to be such that $2\exp(-t/\kappa) \geq 1$ unless $t \geq \kappa/2$. And when $t \geq \kappa/2 \geq 4qn\ell$, we have $\ell \exp(-\exp(2t-n)) \ll \exp(-t/\kappa)$.

It remains to verify Item~3, which states that $p(\bw, t)$ is $\kappa$-Lipschitz in $t$ for any $\|\bw\| \leq (1+1/\kappa)d' \leq d$. We show this by bounding the statistical distance between $S_{\bw,t_1}$ and $S_{\bw,t_2}$ for $t_1 \geq t_2$. With a slight abuse in notation, let $f_{t_i}(\by,z)$ be the probability density of $S_{\bw,t_i}$ and let $(\beta_i, \gamma_i)$ be the corresponding CLWE distribution parameters. For simplicity, also denote the hidden direction by $\bw' = (\bw-\bw^*)/\|\bw-\bw^*\|$. Then,

\begin{align}
    \Delta(f_{t_1}, f_{t_2})
    &= \frac{1}{2} 
    \int \Big(\int |f_{t_1}(z|\by)-f_{t_2}(z|\by)|dz\Big) \rho(\by)d\by \nonumber \\
    &= \int \Delta\Big(\cN(\gamma_1\langle\by,\bw'\rangle,\beta_1/\sqrt{2\pi}),\cN(\gamma_2\langle\by,\bw'\rangle,\beta_2/\sqrt{2\pi})\Big) \rho(\by)d\by \nonumber \\
    &\leq \frac{1}{2} \int \Big(3(1-(\beta_2/\beta_1)^2) + \sqrt{2\pi}(\gamma_1-\gamma_2)/\beta_1\cdot|\langle \by, \bw' \rangle|\Big)\cdot \rho(\by)d\by \label{eqn:devroye-tv}\\
    &\leq \E_{\by \sim \rho}[M(\by)] 
    \cdot \Big(1-\exp(-2(t_1-t_2))\Big) \text{ where } M(\by) 
    = \frac{1}{2}\Big(3+2\sqrt{\pi} q \cdot|\langle \by, \bw' \rangle|\Big) \nonumber \\
    &\leq \E_{\by \sim \rho}[M(\by)] \cdot 2(t_1-t_2)  \label{eqn:linear-bound} \\
    &\leq (\kappa/\ell)\cdot (t_1-t_2) \label{eqn:exp-half-gaussian}
    \; ,
\end{align}
where \eqref{eqn:devroye-tv} follows from Lemma~\ref{lem:tvbound}, \eqref{eqn:linear-bound} uses the fact that $1-\exp(-2(t_1-t_2)) \leq 2(t_1-t_2)$, and \eqref{eqn:exp-half-gaussian} uses the fact that $\E_{\by \sim \rho}[M(\by)] \leq 4q \leq \kappa/(2\ell)$. Using the triangle inequality over $\ell$ samples, the statistical distance between $S_{\bw,t_1}^\ell$ and $S_{\bw,t_2}^\ell$ is at most
\begin{align*}
    \min(1,\ell\cdot(\kappa/\ell)(t_1-t_2)) \leq \kappa(t_1-t_2)
    \; .
\end{align*}
Therefore, $p(\bw,t)$ is $\kappa$-Lipschitz in $t$.
\end{proof}

\section{Hardness of Homogeneous CLWE}
\label{section:hc}

In this section, we show the hardness of homogeneous CLWE by reducing from CLWE, whose hardness was established in the previous section.
The main step of the reduction is to transform CLWE samples to homogeneous CLWE samples using rejection sampling (Lemma~\ref{lem:ic-to-hc}).

Consider the samples $(\by, z) \sim A_{\bw,\beta,\gamma}$ in $\clwe_{\beta,\gamma}$. If we condition $\by$ on $z = 0 \pmod{1}$ then we get exactly samples $\by \sim H_{\bw,\beta,\gamma}$ for $\hclwe_{\beta,\gamma}$. However, this approach is impractical as $z = 0 \pmod{1}$ happens with probability 0. Instead we condition $\by$ on $z \approx 0 \pmod{1}$ somehow. One can imagine that the resulting samples $\by$ will still have a ``wavy" probability density in the direction of $\bw$ with spacing $1/\gamma$, which accords with the picture of homogeneous CLWE. To avoid throwing away too many samples, we will do rejection sampling with some small ``window" $\delta = 1/\poly(n)$. Formally, we have the following lemma.
\begin{lemma}
\label{lem:ic-to-hc}
There is a $\poly(n, 1/\delta)$-time probabilistic algorithm that takes as input a parameter $\delta \in (0,1)$ and samples from $A_{\bw,\beta,\gamma}$, and outputs samples from $H_{\bw,\sqrt{\beta^2+\delta^2},\gamma}$.
\end{lemma}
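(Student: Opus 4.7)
The natural approach is rejection sampling on the $z$-coordinate. Given a CLWE sample $(\by, z) \sim A_{\bw,\beta,\gamma}$, I would accept $\by$ with probability
\[
p(z) \; := \; M^{-1} \sum_{k \in \mathbb{Z}} \rho_\delta(z+k), \qquad M := \sum_{k \in \mathbb{Z}} \rho_\delta(k),
\]
so that $p$ is a well-defined function on $\mathbb{T}$ taking values in $[0,1]$ (maximized at $z=0$). Each accept/reject decision is efficient because the defining series converges geometrically, and truncating it after $O(\log(1/\eps))$ terms incurs only error $\eps$.

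To verify that accepted samples follow $H_{\bw, \sqrt{\beta^2+\delta^2}, \gamma}$, I would compute their density, which is proportional to
\[
\rho(\by) \int_0^1 p(z) \sum_{\ell \in \mathbb{Z}} \rho_\beta(z + \ell - \gamma\langle\by,\bw\rangle)\, dz
\; = \; M^{-1}\rho(\by) \sum_\ell \int_\mathbb{R} \rho_\delta(u)\, \rho_\beta(u + \ell - \gamma\langle\by,\bw\rangle)\, du,
\]
where I unfold the outer sum over $k$ inside $p$ into the integral via the substitution $u = z+k$ (re-indexing $\ell$ to absorb $k$). Claim~\ref{claim:complete-squares} applied to the two Gaussians in $u$ splits each summand into a factor $\rho_{\sqrt{\beta^2+\delta^2}}(\ell - \gamma\langle\by,\bw\rangle)$ (independent of $u$) times a Gaussian in $u$ whose integral is a constant independent of $\ell$ and $\by$. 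Collecting terms gives exactly the density of $H_{\bw, \sqrt{\beta^2+\delta^2}, \gamma}$.

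For the runtime, I would lower bound the overall acceptance probability $\E_{(\by,z)\sim A_{\bw,\beta,\gamma}}[p(z)]$. Running the analogous convolution identity one more time while also integrating out $\by$ (noting that $\langle\by,\bw\rangle$ is a scalar standard Gaussian and that the orthogonal components of $\by$ integrate trivially against $\rho$) evaluates this expectation to
\[
\frac{\delta}{M} \cdot \frac{\rho_{\sqrt{\gamma^2+\beta^2+\delta^2}}(\mathbb{Z})}{\sqrt{\gamma^2+\beta^2+\delta^2}}.
\]
Since $M \le 1 + 2\exp(-\pi/\delta^2)$ and $\rho_r(\mathbb{Z}) \geq 1$ for every $r > 0$, this is at least $\Omega(\delta / \sqrt{\gamma^2+\beta^2+\delta^2})$, which is $1/\poly(n, 1/\delta)$ in the paper's parameter regime ($\gamma$ polynomially bounded). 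A standard Chernoff bound then guarantees that $\poly(n, 1/\delta)$ input samples yield the desired number of output samples with overwhelming probability.

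The only non-routine step is arranging the algebra so that Claim~\ref{claim:complete-squares} applies cleanly: one unfolds the lattice sum defining $p(z)$ into a single Gaussian kernel over $\mathbb{R}$, letting the lattice sum inside the CLWE density become the sum over $\ell$ of completed-square terms. Once this rearrangement is in place, both correctness and the acceptance-probability bound follow from essentially the same convolution computation, with nothing beyond the standard Gaussian identities already collected in the preliminaries.
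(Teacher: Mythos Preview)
Your approach is essentially identical to the paper's: rejection sampling with acceptance probability proportional to the periodized Gaussian $\sum_k \rho_\delta(z+k)$, followed by the same complete-the-square computation (Claim~\ref{claim:complete-squares}) to identify the accepted density as $H_{\bw,\sqrt{\beta^2+\delta^2},\gamma}$. Your unfolding of the $k$-sum into an integral over $\mathbb{R}$ is exactly what the paper does, and your normalization $M = \sum_k \rho_\delta(k)$ agrees with the paper's $M = \sup_{z} g_0(z)$ since, as you note, the periodized Gaussian is maximized at $z=0$.

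The one point where you fall slightly short of the lemma \emph{as stated} is the acceptance-probability bound. You arrive at
\[
\Pr[\text{accept}] \;=\; \frac{\delta}{M}\cdot \frac{\rho_{\sqrt{\gamma^2+\beta^2+\delta^2}}(\mathbb{Z})}{\sqrt{\gamma^2+\beta^2+\delta^2}}
\]
and then bound the numerator below by $1$, giving $\Omega(\delta/\sqrt{\gamma^2+\beta^2+\delta^2})$. This depends on $\gamma$, so it only yields $\poly(n,1/\delta)$ runtime under the additional assumption that $\gamma$ is polynomially bounded --- which you flag, and which does hold in every application in the paper. The paper instead applies Poisson summation (Lemma~\ref{lem:poisson-sum}) to the same expression to get
\[
\frac{\rho_{\sqrt{\gamma^2+\beta^2+\delta^2}}(\mathbb{Z})}{\sqrt{\gamma^2+\beta^2+\delta^2}} \;=\; \rho_{1/\sqrt{\gamma^2+\beta^2+\delta^2}}(\mathbb{Z}) \;\ge\; 1,
\]
yielding $\Pr[\text{accept}] \ge \delta/M \ge \delta/4$ with no $\gamma$ dependence at all. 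Adding this one line would close the gap and prove the lemma in full generality.
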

\begin{proof}

Without loss of generality assume that $\bw = \be_1$.
By definition, the probability density of sample $(\by, z) \sim A_{\bw,\beta,\gamma}$ is
\begin{align*}
    p(\by, z) = \frac{1}{\beta}\cdot \rho(\by) \cdot \sum_{k \in \mathbb{Z}} \rho_\beta(z+k-\gamma y_1)
    \; .
\end{align*}
Let $g : \mathbb{T} \to [0,1]$ be the function $g(z) = g_0(z) / M$, where $g_0(z) = \sum_{k \in \mathbb{Z}} \rho_\delta(z+k)$ 
and $M = \sup_{z \in \mathbb{T}} g_0(z)$.
We perform rejection sampling on the samples $(\by, z)$ with acceptance probability $\Pr[\mathrm{accept} | \by, z] = g(z)$.
We remark that $g(z)$ is efficiently computable (see~\cite[Section 5.2]{BrakerskiLPRS13}).
The probability density of outputting $\by$ and accept is 
\begin{align*}
    \int_\mathbb{T}  p(\by, z) g(z) d z 
    &= \frac{\rho(\by)}{\beta M} \cdot \int_\mathbb{T} \sum_{k_1, k_2 \in \mathbb{Z}} \rho_\beta(z+k_1-\gamma y_1) \rho_\delta(z+k_2) d z \\
    &= \frac{\rho(\by)}{\beta M} \cdot \int_\mathbb{T} \sum_{k, k_2 \in \mathbb{Z}} \rho_{\sqrt{\beta^2+\delta^2}}(k-\gamma y_1) \rho_{\beta\delta/\sqrt{\beta^2+\delta^2}} \Bigl( z+k_2+\frac{\delta^2 (k-\gamma y_1)}{\beta^2+\delta^2} \Bigr) d z \\
    &= \frac{\delta}{M \sqrt{\beta^2+\delta^2}} \cdot \rho(\by) \cdot \sum_{k \in \mathbb{Z}} \rho_{\sqrt{\beta^2+\delta^2}}(k-\gamma y_1)
    \; ,
\end{align*}
where the second equality follows from Claim~\ref{claim:complete-squares}.
This shows that the conditional distribution of $\by$ upon acceptance is indeed $H_{\be_1,\sqrt{\beta^2+\delta^2},\gamma}$.
Moreover, a byproduct of this calculation is that the expected acceptance probability is $\Pr[\mathrm{accept}] = Z \delta / (M \sqrt{\beta^2+\delta^2})$, where, according to Eq.~\eqref{eqn:hclwe-def-normalization},
\begin{align*}
    Z
    &= \sqrt\frac{\beta^2+\delta^2}{\beta^2+\delta^2+\gamma^2} \cdot \rho_{\sqrt{\beta^2+\delta^2+\gamma^2}}(\mathbb{Z}) \\
    &= \sqrt{\beta^2+\delta^2} \cdot \rho_{1/\sqrt{\beta^2+\delta^2+\gamma^2}}(\mathbb{Z}) \\
    &\ge \sqrt{\beta^2+\delta^2}
    \; ,
\end{align*}
and the second equality uses Lemma~\ref{lem:poisson-sum}.
Observe that
\begin{align*}
    g_0(z) &= \sum_{k \in \mathbb{Z}} \rho_\delta(z+k) \\
    &\leq 2 \cdot \sum_{k = 0}^\infty \rho_\delta(k) \\
    &< 2 \cdot \sum_{k = 0}^\infty \exp(-\pi k)
    < 4
\end{align*}
since $\delta < 1$, implying that $M \le 4$.
Therefore, $\Pr[\mathrm{accept}] \ge \delta/4$, and so the rejection sampling procedure has $\poly(n, 1/\delta)$ expected running time.
\end{proof}
The above lemma reduces CLWE to homogeneous CLWE with slightly worse parameters. Hence, homogeneous CLWE is as hard as CLWE. 
Specifically, combining Theorem~\ref{thm:clwe-intro} (with $\beta$ taken to be $\beta/\sqrt{2}$) and Lemma~\ref{lem:ic-to-hc} (with $\delta$ also taken to be $\beta/\sqrt{2}$), we obtain the following corollary.

\begin{corollary}
\label{cor:hc}
For any $\beta = \beta(n) \in (0,1)$ and $\gamma = \gamma(n) \geq 2\sqrt{n}$ such that $\gamma/\beta$ is polynomially bounded,
there is a polynomial-time quantum reduction from $\dgs_{2\sqrt{2n}\eta_\eps(L)/\beta}$ to $\hclwe_{\beta,\gamma}$.
\end{corollary}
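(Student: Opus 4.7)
The plan is to simply compose the two reductions already established, as flagged in the sentence immediately preceding the corollary. Theorem~\ref{thm:clwe-intro} converts worst-case DGS into an oracle-access assumption of the form ``solve $\clwe_{\beta',\gamma}$'', and Lemma~\ref{lem:ic-to-hc} converts a stream of CLWE samples into a stream of homogeneous CLWE samples (via rejection sampling with a small window $\delta$), with the price that the effective noise parameter inflates from $\beta'$ to $\sqrt{\beta'^2 + \delta^2}$. To hit the target parameter $\beta$ exactly, I choose $\beta' = \delta = \beta/\sqrt{2}$, which gives $\sqrt{\beta'^2 + \delta^2} = \beta$.

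With these choices, Theorem~\ref{thm:clwe-intro} applied with noise parameter $\beta'$ yields a polynomial-time quantum reduction from $\dgs_{2\sqrt{n}\eta_\eps(L)/\beta'}$ to $\clwe_{\beta',\gamma}$; substituting $\beta' = \beta/\sqrt{2}$, the DGS width becomes $2\sqrt{2n}\eta_\eps(L)/\beta$, matching the statement of the corollary. The preconditions of Theorem~\ref{thm:clwe-intro} (namely $\beta' \in (0,1)$ and $\gamma/\beta'$ polynomially bounded) follow directly from the hypotheses on $\beta$ and $\gamma$, since $\beta' < \beta < 1$ and $\gamma/\beta' = \sqrt{2}\,\gamma/\beta$. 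Likewise the precondition of Lemma~\ref{lem:ic-to-hc} only demands $\delta \in (0,1)$, which clearly holds.

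To put the two pieces together, I would invoke Lemma~\ref{lem:ic-to-hc} in its natural direction: given an oracle solving $\hclwe_{\beta,\gamma}$, we obtain an oracle solving $\clwe_{\beta/\sqrt{2},\gamma}$ by pre-processing each CLWE sample stream through the rejection sampler with window $\delta = \beta/\sqrt{2}$, which runs in time $\poly(n,1/\delta) = \poly(n)$ per sample. Feeding this simulated $\clwe_{\beta/\sqrt{2},\gamma}$ oracle into the reduction of Theorem~\ref{thm:clwe-intro} then solves $\dgs_{2\sqrt{2n}\eta_\eps(L)/\beta}$ in quantum polynomial time, as required. There is no real obstacle in the argument; the only bookkeeping point is to verify that the parameter choices are mutually compatible, which they are.
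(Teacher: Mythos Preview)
Your proposal is correct and matches the paper's own argument essentially verbatim: the paper likewise instantiates Theorem~\ref{thm:clwe-intro} with noise parameter $\beta/\sqrt{2}$ and Lemma~\ref{lem:ic-to-hc} with window $\delta = \beta/\sqrt{2}$, so that the composed reduction lands at $\hclwe_{\beta,\gamma}$ and the DGS parameter becomes $2\sqrt{2n}\eta_\eps(L)/\beta$. The only minor point you leave implicit is that the rejection sampler applied to $D_{\mathbb{R}^n}\times U$ outputs $D_{\mathbb{R}^n}$ (so the null hypothesis is preserved), but this is immediate since acceptance depends only on $z$.
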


\section{Hardness of Density Estimation for Gaussian Mixtures}
\label{section:mixture-hardness}
In this section, we prove the hardness of density estimation for $k$-mixtures of $n$-dimensional Gaussians by showing a reduction from homogeneous CLWE. This answers an open question regarding its computational complexity~\cite{diakonikolas2016structured,moitra2018}. 
We first formally define density estimation for Gaussian mixtures.

\begin{definition}[Density estimation of Gaussian mixtures]
Let $\cG_{n,k}$ be the family of $k$-mixtures of $n$-dimensional Gaussians. The problem of \emph{density estimation} for $\cG_{n,k}$ is the following. Given $\delta > 0$ and sample access to an unknown $P \in \cG_{n,k}$, with probability $9/10$, output a hypothesis distribution $Q$ (in the form of an evaluation oracle) such that $\Delta(Q,P) \le \delta$.
\end{definition}

For our purposes, we fix the precision parameter $\delta$ to a very small constant, say, $\delta = 10^{-3}$. Now we show a reduction from $\hclwe_{\beta,\gamma}$ to the problem of density estimation for Gaussian mixtures. Corollary~\ref{cor:hc} shows that $\hclwe_{\beta,\gamma}$ is hard for $\gamma \ge 2\sqrt{n}$ (assuming worst-case lattice problems are hard). Hence, by taking $\gamma = 2\sqrt{n}$ and $g(n) = O(\log n)$ in Proposition~\ref{prop:mixture-learning-hardness}, we rule out the possibility of a $\poly(n,k)$-time density estimation algorithm for $\cG_{n,k}$ under the same hardness assumption. 

\begin{proposition}
\label{prop:mixture-learning-hardness}
Let $\beta = \beta(n) \in (0,1/32)$, $\gamma = \gamma(n) \ge 1$, and $g(n) \ge 4\pi$. For $k = 2\gamma \sqrt{g(n)/\pi}$, if there is an $\exp(g(n))$-time algorithm that solves density estimation for $\cG_{n,2k+1}$, then there is a $O(\exp(g(n)))$-time algorithm that solves $\hclwe_{\beta,\gamma}$.
\end{proposition}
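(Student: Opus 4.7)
The plan is to reduce $\hclwe_{\beta,\gamma}$ to density estimation by approximating $H_{\bw,\beta,\gamma}$ by a truncated $(2k+1)$-component Gaussian mixture, running the hypothetical estimator on the given samples, and using its output density as a Scheffé-style statistic to separate $H_{\bw,\beta,\gamma}$ from $D_{\mathbb{R}^n}$.

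First I would use the mixture representation of $H_{\bw,\beta,\gamma}$ in Eq.~\eqref{eqn:hclwe-mixture-def}, in which the component indexed by $j \in \mathbb{Z}$ carries weight proportional to $\rho_{\sqrt{\beta^2+\gamma^2}}(j)$. Truncating to $|j| \le k = 2\gamma\sqrt{g(n)/\pi}$ and renormalizing gives a distribution $\tilde H \in \cG_{n,2k+1}$ with
\[
\Delta(\tilde H,\, H_{\bw,\beta,\gamma}) \;\le\; 2\sum_{j > k}\rho_{\sqrt{\beta^2+\gamma^2}}(j) \;\le\; \exp(-3 g(n)),
\]
using $\beta^2+\gamma^2 \le (1+2^{-10})\gamma^2$ (since $\beta \le 1/32$ and $\gamma \ge 1$) together with $g(n) \ge 4\pi$ to convert the standard Gaussian tail bound into the stated exponent.

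Next I would feed the given samples directly to the assumed $\exp(g(n))$-time density estimator. In the null case the true distribution is $D_{\mathbb{R}^n} \in \cG_{n,1} \subseteq \cG_{n,2k+1}$, so with probability $\ge 9/10$ the estimator returns $Q$ satisfying $\Delta(Q, D_{\mathbb{R}^n}) \le \delta := 10^{-3}$. In the planted case, the true distribution $H_{\bw,\beta,\gamma}$ need not lie in $\cG_{n,2k+1}$, but the estimator draws at most $\exp(g(n))$ samples, and I may couple them pointwise to samples from $\tilde H$ with total statistical error at most $\exp(g(n)) \cdot \exp(-3g(n)) = o(1)$; on this coupling event the estimator's guarantee (applied to $\tilde H \in \cG_{n,2k+1}$) together with the triangle inequality yield $\Delta(Q, H_{\bw,\beta,\gamma}) \le \delta + \exp(-3g(n))$.

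Finally I would use the evaluation oracle for $Q$ to estimate
\[
\Delta(Q, D_{\mathbb{R}^n}) \;=\; 1 - \E_{\by \sim D_{\mathbb{R}^n}}\!\bigl[\min\bigl(1,\, Q(\by)/\rho(\by)\bigr)\bigr],
\]
whose integrand is bounded in $[0,1]$, so $\poly(n)$ samples from $D_{\mathbb{R}^n}$ combined with $\poly(n)$ oracle calls to $Q$ suffice for a $1/10$-additive estimate by Hoeffding. The reduction outputs ``planted'' when this estimate exceeds $1/2$, and ``null'' otherwise. The main technical obstacle I foresee is the accompanying lower bound $\Delta(H_{\bw,\beta,\gamma}, D_{\mathbb{R}^n}) \ge 3/4$, without which the Scheffé test fails: for $\beta \le 1/32$ and $\gamma \ge 1$ the Gaussian pancakes have width $\approx \beta/\gamma$ along $\bw$ and occupy only a $\beta$-fraction of the real line in that direction, so $H_{\bw,\beta,\gamma}$ places nearly all its mass on a set of $D_{\mathbb{R}^n}$-measure $O(\beta) \le 1/16$, which should give $\Delta(H_{\bw,\beta,\gamma}, D_{\mathbb{R}^n}) \ge 1 - O(\beta)$, safely above the slack $2\delta + 2\cdot(1/10)$ left by the two approximation steps.
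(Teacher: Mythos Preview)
Your proposal is correct and follows essentially the same structure as the paper's proof: truncate $H_{\bw,\beta,\gamma}$ to $2k+1$ components (the paper's Claim~\ref{claim:hclwe-truncation}), couple the at-most-$\exp(g(n))$ samples to the truncated mixture, run the estimator, and then distinguish using the returned density together with a lower bound on $\Delta(H_{\bw,\beta,\gamma},D_{\mathbb{R}^n})$.

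The only substantive difference is the distinguishing step. The paper tests whether $f(\bx)/D(\bx)\in[1-\sqrt\delta,1+\sqrt\delta]$ on a constant number of fresh Gaussian samples (citing an observation of Rubinfeld--Servedio), which works as soon as $\Delta(H_{\bw,\beta,\gamma},D_{\mathbb{R}^n})>1/2$; this is exactly what the paper proves in Claim~\ref{claim:hclwe-tv-distance}. Your Scheff\'e-style estimator $1-\E_{\by\sim D_{\mathbb{R}^n}}[\min(1,Q(\by)/\rho(\by))]$ is arguably cleaner and self-contained, but with the threshold $1/2$ it needs the stronger bound $\Delta(H_{\bw,\beta,\gamma},D_{\mathbb{R}^n})\ge 3/4$. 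Your heuristic for that bound is correct and is essentially how the paper argues Claim~\ref{claim:hclwe-tv-distance} (project onto $\bw$ and compare to uniform mod $1/\gamma'$); with more careful constants the same argument yields $1-O(\beta)$, comfortably above $3/4$ for $\beta<1/32$. Alternatively, you could simply lower your threshold to, say, $1/4$ and then the paper's $>1/2$ bound already suffices.
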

\begin{proof}
We apply the  density estimation algorithm $\cA$ to the unknown given distribution $P$. As we will show below, with constant probability, it outputs a density estimate $f$ that satisfies $\Delta(f,P) < 2\delta = 2 \cdot 10^{-3}$ (and this is even though $H_{\bw,\beta,\gamma}$ has infinitely many components). We then test whether $P = D_{\mathbb{R}^n}$ or not using the following procedure. We repeat the following procedure $m=1/(6\sqrt{\delta})$ times. We draw $\bx \sim D_{\mathbb{R}^n}$ and check whether the following holds
\begin{align}
    \frac{f(\bx)}{D(\bx)} \in [1-\sqrt{\delta},1+\sqrt{\delta}] \label{eqn:equality-test}\;,
\end{align}
where $D$ denotes the density of $D_{\mathbb{R}^n}$. We output $P = D_{\mathbb{R}^n}$ if Eq.~\eqref{eqn:equality-test} holds for all $m$ independent trials and $P = H_{\bw,\beta,\gamma}$ otherwise.
Since $\Delta(H_{\bw,\beta,\gamma},D_{\mathbb{R}^n}) > 1/2$ (Claim~\ref{claim:hclwe-tv-distance}), it is not hard to see that this test solves $\hclwe_{\beta,\gamma}$ with probability at least $2/3$ (see~\cite[Observation 24]{rubinfeld-servedio2009monotone} for a closely related statement). Moreover, the total running time is $O(\exp(g(n))$ since this test uses a constant number of samples.

If $P = D_{\mathbb{R}^n}$, it is obvious that $\cA$ outputs a close density estimate with constant probability since $D_{\mathbb{R}^n} \in \cG_{n,2k+1}$. It remains to consider the case $P = H_{\bw,\beta,\gamma}$. To this end, we observe that $H_{\bw,\beta,\gamma}$ is close to a $(2k+1)$-mixture of Gaussians. Indeed, by Claim~\ref{claim:hclwe-truncation} below, 
\begin{align}
    \Delta(H_{\bw,\beta,\gamma},H^{(k)}) \le 2\exp(-\pi\cdot k^2/(\beta^2+\gamma^2)) < 2\exp(-\pi \cdot k^2/(2\gamma^2)) \nonumber \;,
\end{align}
where $H^{(k)}$ is the distribution given by truncating $H_{\bw,\beta,\gamma}$ to the $(2k+1)$ central mixture components.
Hence, the statistical distance between the joint distribution of $\exp(g(n))$ samples from $H_{\bw,\beta,\gamma}$ and that of $\exp(g(n))$ samples from $H^{(k)}$ is bounded by
\begin{align}
2\exp(-\pi \cdot k^2/(2\gamma^2))\cdot\exp(g(n)) = 2\exp(-g(n)) \le 2\exp(-4\pi) \; .\nonumber
\end{align}
Since the two distributions are statistically close, a standard argument shows that $\cA$ will output $f$ satisfying $\Delta(f,H_{\bw,\beta,\gamma}) \le \Delta(f,H^{(k)}) + \Delta(H^{(k)},H_{\bw,\beta,\gamma}) < 2\delta$ with constant probability.
\end{proof}

\begin{claim}
\label{claim:hclwe-tv-distance}
Let $\beta = \beta(n) \in (0,1/32)$ and $\gamma = \gamma(n) \ge 1$. Then,
\begin{align*}
    \Delta(H_{\bw,\beta,\gamma},D_{\mathbb{R}^n}) > 1/2\;.
\end{align*}
\end{claim}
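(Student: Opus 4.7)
The two densities $H_{\bw,\beta,\gamma}$ and $D_{\mathbb{R}^n}$ agree on the $\bw^\perp$ part of $\by$ conditionally on $t := \langle \by, \bw\rangle$ (the ratio $H/D$ depends only on $t$), so the statistical distance equals the statistical distance of the one-dimensional $t$-marginals $h(t) \propto \rho(t)\sum_{k\in\mathbb Z}\rho_\beta(k-\gamma t)$ and $d(t)=\rho(t)$. It therefore suffices to exhibit a one-dimensional event $A$ with $\Pr_h[A]-\Pr_d[A] > 1/2$. I will take
\[
A \;=\; \bigl\{\,t \in \mathbb{R} : \operatorname{dist}(\gamma t,\mathbb Z) \le 1/8\,\bigr\},
\]
i.e., a narrow window of half-width $1/8$ around each integer when $t$ is rescaled by $\gamma$. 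The two sides of the inequality will be treated separately: showing that $h$ concentrates almost entirely on $A$, and that the $d$-mass of $A$ stays bounded well away from $1$ for every $\gamma \ge 1$.

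For the upper bound on $\Pr_d[A]$, I will pass $s=\gamma t$ through its distribution modulo $1$. Under $d$ the variable $s$ is Gaussian with parameter $\gamma$, and by Poisson summation the density of $s \bmod 1$ is $p(u)=\sum_{m}\rho_{1/\gamma}(m)e^{2\pi i m u}=1+\sum_{m\neq 0}e^{-\pi\gamma^2 m^2}e^{2\pi i m u}$. For $\gamma\ge 1$ the Gaussian coefficients decay rapidly and the Fourier tail is bounded by $2e^{-\pi\gamma^2}/(1-e^{-3\pi\gamma^2})\le 2e^{-\pi}/(1-e^{-3\pi})<0.09$. Integrating over $[-1/8,1/8]$ gives $\Pr_d[A] \le 1/4 + (1/4)\cdot 0.09 < 0.28$.

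For the lower bound on $\Pr_h[A]$, note that on the complement $\neg A$ every distance from $\gamma t$ to an integer is at least $1/8$, so
\[
\sum_{k}\rho_\beta(k-\gamma t)\;\le\;2\sum_{j\ge 0}e^{-\pi(1/8+j)^2/\beta^2}\;\le\;3\,e^{-\pi/(64\beta^2)}.
\]
For $\beta<1/32$ this is at most $3e^{-16\pi}$. The normalizer $Z$ equals $\tfrac{\beta}{\sqrt{\beta^2+\gamma^2}}\rho(\mathbb Z/\sqrt{\beta^2+\gamma^2})$ by~\eqref{eqn:hclwe-def-normalization}, and Poisson summation together with $\gamma\ge 1$ gives $Z\ge \beta(1-O(e^{-\pi}))$. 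Therefore
\[
\Pr_h[\neg A] \;\le\; \frac{3\,e^{-\pi/(64\beta^2)}}{Z} \;=\; O\!\left(\frac{e^{-\pi/(64\beta^2)}}{\beta}\right).
\]
The function $\beta\mapsto e^{-\pi/(64\beta^2)}/\beta$ is increasing on the range $(0, \sqrt{\pi/32})$, which contains $(0,1/32)$, so the bound is maximized at $\beta=1/32$, yielding $\Pr_h[\neg A]\le O(e^{-16\pi})\ll 10^{-19}$ uniformly in $\beta\in(0,1/32)$.

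Combining, $\Delta(H_{\bw,\beta,\gamma},D_{\mathbb R^n}) = \Delta(h,d) \ge \Pr_h[A]-\Pr_d[A] \ge (1-10^{-19}) - 0.28 > 1/2$. The main step requiring care is obtaining a uniform bound on $\Pr_h[\neg A]$ across the whole range $\beta\in(0,1/32)$, since $Z$ shrinks linearly in $\beta$; the monotonicity observation above is what makes the estimate work for arbitrarily small~$\beta$ without a lower bound assumption.
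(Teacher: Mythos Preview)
Your proof is correct and rests on the same idea as the paper's: reduce to the one-dimensional marginal along $\bw$ and show that under $H$ this marginal is tightly concentrated near a periodic set of points while under $D_{\mathbb{R}^n}$ it is nearly uniform modulo the period. The execution differs somewhat. The paper exploits the Gaussian-mixture form (Eq.~\eqref{eqn:hclwe-mixture-def}) to observe that $\langle \by,\bw\rangle \bmod \gamma/(\beta^2+\gamma^2)$ under $H$ is \emph{exactly} a narrow Gaussian $D_{\beta/\sqrt{\beta^2+\gamma^2}}$ reduced modulo the period, then applies a Chernoff tail for the $H$-side and the smoothing-parameter lemmas (Lemmas~\ref{lem:smoothing-uniform} and~\ref{lem:smoothing-dual}) for the $D$-side. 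You instead work directly with the unnormalized density, choosing the fixed window $|\gamma t \bmod 1|\le 1/8$, bounding the $H$-side pointwise (which obliges you to control the normalizer $Z$ and to invoke the monotonicity of $\beta\mapsto e^{-\pi/(64\beta^2)}/\beta$ to handle arbitrarily small $\beta$), and bounding the $D$-side by Poisson summation. The paper's route avoids tracking $Z$ altogether and is slightly cleaner for that reason; your route is more self-contained in that it never appeals to the mixture representation or the smoothing-parameter machinery.
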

\begin{proof}
Let $\gamma' = \sqrt{\beta^2+\gamma^2} > \gamma$. Let $\by \in \mathbb{R}^n$ be a random vector distributed according to $H_{\bw,\beta,\gamma}$. Using the Gaussian mixture form of~\eqref{eqn:hclwe-mixture-def}, we observe that $\langle \by, \bw \rangle \bmod{\gamma/\gamma'^2}$ is distributed according to $D_{\beta/\gamma'} \bmod{\gamma/\gamma'^2}$. Since statistical distance cannot increase by applying a function (inner product with $\bw$ and then applying the modulo operation in this case), it suffices to lower bound the statistical distance between $D_{\beta/\gamma'} \bmod{\gamma/\gamma'^2}$ and $D \bmod{\gamma/\gamma'^2}$, where $D$ denotes the 1-dimensional standard Gaussian.

By Chernoff, for all $\zeta>0$, at least $1-\zeta$ mass of $D_{\beta/\gamma'}$ is contained in $[- a \cdot (\beta/\gamma'), a \cdot (\beta/\gamma')]$, where $a = \sqrt{\log(1/\zeta)}$. Hence, $D_{\beta/\gamma'} \bmod{\gamma/\gamma'^2}$ is at least $1-2a\beta \gamma'/\gamma-\zeta$ far in statistical distance from the uniform distribution over $\mathbb{R}/(\gamma/\gamma'^2)\mathbb{Z}$, which we denote by $U$. 
Moreover, by Lemma~\ref{lem:smoothing-uniform} and Lemma~\ref{lem:smoothing-dual}, $D \bmod{\gamma/\gamma'^2}$ is within statistical distance $\eps/2 = \exp(-\gamma'^4/\gamma^2)/2$ from $U$. Therefore,
\begin{align}
\Delta(D_{\beta/\gamma'} \bmod{\gamma/\gamma'^2},D \bmod{\gamma/\gamma'^2}) 
&\ge  \Delta(D_{\beta/\gamma'} \bmod{\gamma/\gamma'^2},U) - \Delta(U,D \bmod{\gamma/\gamma'^2}) \nonumber \\
&\ge 1-2a\beta\gamma'/\gamma-\zeta-\eps/2 \nonumber \\
&> 1-2\sqrt{2}a\beta-\zeta-\exp(-\gamma^2)/2 \label{eqn:hclwe-tv-plug-in-values}  \\
&> 1/2 \nonumber\;,
\end{align}
where we set $\zeta = \exp(-2)$ and use the fact that $\beta \le 1/32$ and $\gamma \ge 1$ in \eqref{eqn:hclwe-tv-plug-in-values}.
\end{proof}

\begin{claim}
\label{claim:hclwe-truncation}
Let $\beta = \beta(n) \in (0,1), \gamma = \gamma(n) \ge 1$, and $k \in \mathbb{Z}^{+}$. Then,
\begin{align}
    \Delta(H_{\bw,\beta,\gamma},H^{(k)}) \le 2\exp(-\pi\cdot k^2/(\beta^2+\gamma^2))\nonumber \;,
\end{align}
where $H^{(k)}$ is the distribution given by truncating $H_{\bw,\beta,\gamma}$ to the central $(2k+1)$ mixture components.
\end{claim}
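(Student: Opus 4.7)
The plan is to exploit the explicit mixture representation of $H_{\bw,\beta,\gamma}$ given in Eq.~\eqref{eqn:hclwe-mixture-def}. Setting $\sigma = \sqrt{\beta^2+\gamma^2}$, that equation displays $H_{\bw,\beta,\gamma}$ as a mixture whose $j$-th component (indexed by $j\in\mathbb{Z}$) carries unnormalized weight $\rho_\sigma(j)$. The truncation $H^{(k)}$ discards the components with $|j|>k$ and renormalizes, so both distributions arise from the same unnormalized density, one on all of $\mathbb{R}^n$ and one restricted to the ``kept'' region $A$ (the union of strips where the $j$-th component dominates, or equivalently just by dropping the discarded terms in the sum).

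A standard computation shows that if $P$ has density $\tilde p/Z$ and $P^{(k)}$ is obtained by restricting $\tilde p$ to $A$ and renormalizing, then $\Delta(P,P^{(k)}) = 1 - Z_A/Z$, where $Z_A = \int_A \tilde p$. Applying this to our setting, the statistical distance equals the relative mass of the discarded components:
\begin{align*}
    \Delta\bigl(H_{\bw,\beta,\gamma},H^{(k)}\bigr) = \frac{2\sum_{j>k}\rho_\sigma(j)}{\rho_\sigma(\mathbb{Z})}\; .
\end{align*}

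Now I would bound the numerator by a crude Gaussian tail estimate, using the inequality $(k+j)^2 \ge k^2+j^2$ for $k,j\ge 0$:
\begin{align*}
    \sum_{j>k}\rho_\sigma(j)
    = \sum_{j\ge 1}\exp\bigl(-\pi(k+j)^2/\sigma^2\bigr)
    \le \exp(-\pi k^2/\sigma^2)\sum_{j\ge 1}\exp(-\pi j^2/\sigma^2)\; .
\end{align*}
Since $\sum_{j\ge 1}\exp(-\pi j^2/\sigma^2) = (\rho_\sigma(\mathbb{Z})-1)/2 \le \rho_\sigma(\mathbb{Z})/2$, combining the two displays gives
\begin{align*}
    \Delta\bigl(H_{\bw,\beta,\gamma},H^{(k)}\bigr) \le \exp\bigl(-\pi k^2/(\beta^2+\gamma^2)\bigr)\; ,
\end{align*}
which is in fact a factor of two better than the stated bound, so the claim follows.

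There is no real obstacle; the only thing to be slightly careful about is the reduction of the statistical distance to ``mass of the removed components,'' which needs the observation that restriction-plus-renormalization on a \emph{mixture} is equivalent to restricting to the event ``chosen index lies in $[-k,k]$'' (this is what allows the distance to be read off from the mixture weights without worrying about overlaps between Gaussian components).
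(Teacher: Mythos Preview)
Your approach is essentially the same as the paper's: both view $H^{(k)}$ as conditioning the mixture on the component index $X$ lying in $[-k,k]$, bound $\Pr[|X|>k]$ by a discrete Gaussian tail estimate, and use that conditioning on an event of probability $1-\eps$ changes statistical distance by at most $\eps$. The paper cites \cite[Lemma~2.8]{micciancio-peikert2012trapdoor} for the tail bound (hence the constant $2$), whereas your elementary inequality $(k+j)^2\ge k^2+j^2$ yields the sharper constant $1$.

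One minor quibble: the displayed equality $\Delta(P,P^{(k)}) = 1-Z_A/Z$ is really only an \emph{inequality} once you marginalize over the index (the Gaussian components have overlapping support, so truncating components is not literally restricting the density to a subset $A\subset\mathbb{R}^n$); equality holds on the joint $(X,\by)$, and applying the marginalization map can only decrease statistical distance. You correctly identify this in your final paragraph, so the argument goes through.
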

\begin{proof}
We express $H_{\bw,\beta,\gamma}$ in its Gaussian mixture form given in Eq.~\eqref{eqn:hclwe-mixture-def} and define a random variable $X$ taking on values in $\mathbb{Z}$ such that the probability of $X = i$ is equal to the probability that a sample comes from the $i$-th component in $H_{\bw,\beta,\gamma}$. Then, we observe that $H^{(k)}$ is the distribution given by conditioning on $|X| \le k$. Since $X$ is a discrete Gaussian random variable with distribution $D_{\mathbb{Z},\sqrt{\beta^2+\gamma^2}}$, we observe that $\Pr[|X| > k] \le \eps := 2\exp(-\pi \cdot k^2/(\beta^2+\gamma^2))$ by~\cite[Lemma 2.8]{micciancio-peikert2012trapdoor}.
Since conditioning on an event of probability $1-\eps$ cannot change the statistical distance by more than $\eps$, we have
\begin{align}
    \Delta(H_{\bw,\beta,\gamma}, H^{(k)}) \le \eps \nonumber \;.
\end{align}
\end{proof}

\section{LLL Solves Noiseless CLWE}
\label{section:lll-clwe}
The noiseless CLWE problem ($\beta = 0$) can be solved in polynomial time using LLL. This applies both to the homogeneous and the inhomogeneous versions, as well as to the search version. The argument can be extended to the case of exponentially small $\beta>0$.

The key idea is to take samples $(\by_i, z_i)$, and find integer coefficients $c_1,\ldots,c_m$ such that $\by = \sum_{i=1}^m c_i \by_i$ is short, say 
$\|\by\| \ll 1/\gamma$. By Cauchy-Schwarz, we then have that $\gamma \langle \by, \bw \rangle = \sum_{i=1}^m c_i z_i$ over the reals (not modulo 1!). This is formalized in Theorem~\ref{thm:lll-noiseless-clwe}. We first state Minkowski's Convex Body Theorem, which we will use in the proof of our procedure.
\begin{lemma}[\cite{minkowski1910geometrie}]
\label{lem:minkowski-cvx}
Let $L$ be a full-rank $n$-dimensional lattice. Then, for any centrally-symmetric convex set $S$, if $\operatorname{vol}(S) > 2^n \cdot |\det(L)|$, then $S$ contains a non-zero lattice point.
\end{lemma}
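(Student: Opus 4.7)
The plan is to deduce Minkowski's theorem from Blichfeldt's volume-pigeonhole principle, following the classical two-step strategy. First, I would establish Blichfeldt's principle: if $T \subset \mathbb{R}^n$ is a measurable set with $\operatorname{vol}(T) > |\det(L)|$, then there exist distinct $p, q \in T$ with $p - q \in L$. The argument is a direct pigeonhole: fix any fundamental domain $F$ of $L$ (say the half-open parallelepiped spanned by a basis of $L$, so that $\{F + v\}_{v \in L}$ tiles $\mathbb{R}^n$), set $T_v = T \cap (F + v)$, and translate each $T_v$ back by $-v$ to land inside $F$. Since $\sum_{v \in L} \operatorname{vol}(T_v - v) = \operatorname{vol}(T) > \operatorname{vol}(F) = |\det(L)|$, the translates $\{T_v - v\}$ cannot be pairwise disjoint inside $F$, which yields distinct $v \ne w$ in $L$ and points $p \in T_v$, $q \in T_w$ with $p - v = q - w$, and hence $p - q = v - w \in L \setminus \{0\}$.

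Second, I would apply Blichfeldt to $T := \tfrac12 S = \{\tfrac12 s : s \in S\}$. The hypothesis $\operatorname{vol}(S) > 2^n |\det(L)|$ gives $\operatorname{vol}(T) = 2^{-n} \operatorname{vol}(S) > |\det(L)|$, so Blichfeldt yields distinct $p, q \in \tfrac12 S$ with $p - q \in L \setminus \{0\}$. Writing $p = \tfrac12 s_1$ and $q = \tfrac12 s_2$ for some $s_1, s_2 \in S$, central symmetry of $S$ gives $-s_2 \in S$, and convexity then gives $p - q = \tfrac12 s_1 + \tfrac12 (-s_2) \in S$. Thus $p - q$ is the required nonzero lattice point in $S$.

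The only delicate point is ensuring $p \ne q$ in the Blichfeldt step, which is precisely what the strict volume inequality delivers (with only equality, a single point could trivially ``match itself''); the scaling factor of $2^n$ in Minkowski's hypothesis is exactly what is needed to feed into this strict inequality after the halving. Measurability of $S$ (and hence of $T$ and each $T_v$) is automatic for convex sets, so no further technicalities arise. I do not anticipate any substantive obstacle — the proof is short and classical once Blichfeldt is in hand.
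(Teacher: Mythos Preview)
Your proof is correct and follows the classical Blichfeldt-then-halve argument. Note, however, that the paper does not actually prove this lemma: it is stated as a cited classical result (Minkowski's Convex Body Theorem, \cite{minkowski1910geometrie}) and used as a black box in the proof of Theorem~\ref{thm:lll-noiseless-clwe}. So there is no ``paper's own proof'' to compare against; your write-up simply supplies the standard textbook argument for a result the paper takes for granted.
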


\begin{theorem}
\label{thm:lll-noiseless-clwe}
Let $\gamma = \gamma(n)$ be a polynomial in $n$. Then, there exists a polynomial-time algorithm for solving $\clwe_{0,\gamma}$.
\end{theorem}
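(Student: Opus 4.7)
The plan is to follow the hint just before the statement: use LLL to find integer combinations of the sample vectors $\by_i$ with very small Euclidean norm, exploit the absence of noise to upgrade the congruence $\sum_i c_i z_i \equiv \gamma\langle \sum_i c_i \by_i, \bw\rangle \pmod{1}$ to an \emph{exact} real-valued linear equation in $\bw$, and then recover $\bw$ by linear algebra.

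First I would draw $m = 2n$ samples $(\by_i, z_i)$ from $A_{\bw, 0, \gamma}$ and (after rounding each $\by_i$ to polynomially many bits of precision) build the rank-$m$ lattice $L \subset \mathbb{R}^{n+m}$ generated by the columns $\bv_i = (N \by_i, \be_i)$ for a scaling parameter $N = 2^{\Theta(n)} \cdot \poly(\gamma)$ to be fixed below. A generic vector of $L$ has the form $\sum_i c_i \bv_i = (N \sum_i c_i \by_i,\, \bc)$, so a short vector in $L$ simultaneously forces $\|\sum_i c_i \by_i\|$ to be tiny (at scale $1/N$) and $\|\bc\|$ to be moderate. A Gram-determinant computation, using that $Y^T Y$ (where $Y$ has columns $\by_i$) has rank $n$ with nonzero eigenvalues of order $m$ with high probability, gives $\det(L) \approx N^n (m/\pi)^{n/2}$, and Minkowski's theorem yields $\lambda_1(L) \le \sqrt{m}\cdot\det(L)^{1/m} = 2^{O(1)}\sqrt{n}\cdot N^{1/2}$. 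Running LLL then outputs a nonzero vector $(N\by,\bc) \in L$ with $\|\by\| \le 2^{O(n)}/\sqrt{N}$; choosing $N$ large enough makes $\|\by\| < 1/(2\gamma)$, which is feasible since $\gamma = \poly(n)$.

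With $\|\by\| < 1/(2\gamma)$ in hand, Cauchy--Schwarz gives $|\gamma \langle \by, \bw \rangle| \le \gamma\|\by\| < 1/2$, so the unique representative of $\sum_i c_i z_i \bmod 1$ in $(-1/2,1/2]$ equals $\gamma \langle \by, \bw \rangle$ exactly. This produces one noiseless linear equation in the $n$ unknowns of $\bw$. I would then repeat the procedure on $\poly(n)$ disjoint batches of fresh samples to accumulate $n$ equations $\gamma\langle\by^{(j)},\bw\rangle=t_j$; linear independence of the $\by^{(j)}$'s holds with probability $1$ by genericity of the Gaussian samples (each $\by^{(j)}$ is a nonzero integer combination of fresh independent continuous Gaussians), so standard Gaussian elimination recovers $\bw$.

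The main obstacle is the coordinated choice of the scaling $N$ together with the discretization precision: $N$ must be exponentially large so that LLL's $2^{O(n)}$ approximation factor still fits under $1/(2\gamma)$, yet the overall input size and all intermediate LLL arithmetic must remain of polynomial bit length. A secondary point is verifying $\by \neq 0$ (almost sure by genericity) and $\bc \neq 0$ (automatic from LLL, which outputs nonzero lattice vectors). The sketched extension to exponentially small $\beta > 0$ proceeds identically, tightening the target bound to $1/(2\gamma) - O(\beta \|\bc\|)$ so as to absorb the noise term $\sum_i c_i e_i$ where $e_i \sim D_{\mathbb{R},\beta}$, which is still positive whenever $\|\bc\| = 2^{O(n)}$ and $\beta = 2^{-\Omega(n)}$.
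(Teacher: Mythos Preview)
Your approach is correct and follows the same overall strategy as the paper: use LLL to find a small-norm integer combination $\by=\sum c_i\by_i$, invoke Cauchy--Schwarz to turn the congruence $\sum c_i z_i\equiv\gamma\langle\by,\bw\rangle\pmod 1$ into an exact real equation, and collect $n$ such equations. The implementations differ in the lattice embedding. The paper uses only $n+1$ samples and forms the $(n+1)$-dimensional lattice generated by the columns of $\begin{pmatrix}\by_1&\cdots&\by_{n+1}\\0&\cdots&\delta\end{pmatrix}$ with $\delta=2^{-3n^2}$; the determinant is tiny because $n+1$ vectors in $\mathbb{R}^n$ are linearly dependent, and Hadamard plus Minkowski on a cube gives $\lambda_1\le\sqrt{n}\,2^{-n}$ directly. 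Your knapsack-style lattice $\begin{pmatrix}NY\\I_m\end{pmatrix}$ with $m=2n$ is larger but equally valid; it has the advantage that nonvanishing of $\bc$ is automatic from the $I_m$ block, and the Gram-determinant computation is routine. The paper's linear-independence argument is also slightly different: it reuses the same batch and observes that LLL depends only on the Gram matrix of the basis, so the output $\by$ inherits rotational invariance from the Gaussian input and hence almost surely avoids any fixed hyperplane. Your ``genericity'' claim for fresh batches is correct for the same reason (the Gram matrix $N^2Y^TY+I$ is unchanged under $Y\mapsto RY$, so the LLL coefficients and hence the law of $\by^{(j)}$ are rotation-invariant), but note that the naive justification ``nonzero integer combination of continuous Gaussians'' is not quite enough on its own, since the integer coefficients depend on the samples.
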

\begin{proof}
Take $n+1$ CLWE samples $\{(\by_i,z_i)\}_{i=1}^{n+1}$ and consider the matrix
\begin{align*}
Y = \begin{bmatrix}
    \by_1 & \cdots & \by_n & \by_{n+1} \\
    0 & \cdots & 0 & \delta 
    \end{bmatrix} \; ,
\end{align*}
where $\delta = 2^{-3n^2}$. 

Consider the lattice $L$ generated by the columns of $Y$. Since $\by_i$'s are drawn from the Gaussian distribution, $L$ is full rank. By Hadamard's inequality, and the fact that with probability exponentially close to $1$, $\|\by_i\| \leq  \sqrt{n}$ for all $i$, we have 
\begin{align*}
|\det(L)| \leq \delta \cdot n^{n/2} < 2^{-2n^2} \; .
\end{align*} 

Now consider the $n$-dimensional cube $S$ centered at $\bzero$ with side length $2^{-n}$. Then, $\operatorname{vol}(S) = 2^{-n^2}$, and by Lemma~\ref{lem:minkowski-cvx}, $L$ contains a vector $\bv$ satisfying $\|\bv\|_{\infty} \leq 2^{-n}$ and so $\| \bv \|_2 \leq \sqrt{n}\cdot 2^{-n}$. 
Applying the LLL algorithm~\cite{lenstra1982lll} gives us an integer combination of the columns of $Y$ whose length is within $2^{(n+1)/2}$ factor of the shortest vector in $L$, which will therefore have $\ell_2$ norm less than $\sqrt{n} \cdot 2^{-(n-1)/2}$. 
Let $\by$ be the corresponding combination of the $\by_i$ vectors (which is equivalently given by the first $n$ coordinates of the output of LLL) and 
$z \in (-1/2,1/2]$ a representative of the corresponding integer combination of the $z_i$ mod 1.
Then, we have $\|\by\|_2 \leq \sqrt{n} \cdot 2^{-(n-1)/2}$ and therefore we obtain the linear equation $\gamma \cdot \langle \by,\bw \rangle = z$ over the reals (without mod 1). 

We now repeat the above procedure $n$ times, and recover $\bw$ by solving the resulting $n$ linear equations. 
It remains to argue why the $n$ vectors $\by$ we collect are linearly independent. 
First, note that the output $\by$ is guaranteed to be a non-zero vector since with probability $1$, no integer combination of the Gaussian distributed $\by_i$ is $\bzero$.
Next, note that LLL is equivariant to rotations, i.e., if we rotate the input basis then the output vector will also be rotated by the same rotation. Moreover, spherical Gaussians are rotationally invariant. Hence, the distribution of the output vector $\by \in \mathbb{R}^n$
is also rotationally invariant. Therefore, repeating the above procedure $n$ times will give us $n$ linearly independent  vectors.
\end{proof}

\section{Subexponential Algorithm for Homogeneous CLWE}
\label{section:subexp}
For $\gamma = o(\sqrt{n})$, the covariance matrix will reveal the discrete structure of homogeneous CLWE, which will lead to a subexponential time algorithm for the problem. This clarifies why the hardness results of homogeneous CLWE do not extend beyond $\gamma \geq 2\sqrt{n}$.

We define \emph{noiseless homogeneous CLWE distribution} $H_{\bw, \gamma}$ as $H_{\bw, \beta, \gamma}$ with $\beta = 0$.
We begin with a claim that will allow us to focus on the noiseless case.

\begin{claim}
\label{claim:noiseless-is-sufficient}
By adding Gaussian noise $\ngauss{\beta/\gamma}$ to $H_{\bw,\gamma}$ and then rescaling by a factor of $\gamma/\sqrt{\beta^2+\gamma^2}$, the resulting distribution is $H_{\bw, \tilde{\beta}, \tilde{\gamma}}$, where $\tilde{\gamma} = \gamma/\sqrt{1+(\beta/\gamma)^2}$ and $\tilde{\beta} = \tilde{\gamma}(\beta/\gamma)$.\footnote{%
Equivalently, in terms of the Gaussian mixture representation of Eq.~\eqref{eqn:hclwe-mixture-def}, the resulting distribution has layers spaced by $1/\sqrt{\gamma^2+\beta^2}$ 
and of width $\beta/\sqrt{\gamma^2+\beta^2}$.
}
\end{claim}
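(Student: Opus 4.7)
The plan is to work directly with the density of $H_{\bw,\gamma}$ in its Gaussian mixture form (Eq.~\eqref{eqn:hclwe-mixture-def} with $\beta=0$), since this makes both the additive noise step and the rescaling step transparent. Specializing that formula, $H_{\bw,\gamma}$ is the product of a standard Gaussian on $\bw^{\perp}$ and, along the secret direction, a discrete Gaussian supported on $\tfrac{1}{\gamma}\mathbb{Z}$ with width $1$ (i.e., the $k$-th layer at $\langle \by,\bw\rangle = k/\gamma$ carries weight $\propto \rho(k/\gamma)$). Because the added noise $\ngauss{\beta/\gamma}$ and the final scalar rescaling both decompose with respect to the orthogonal splitting $\bw \oplus \bw^{\perp}$, I will analyze the two components separately.

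For the component orthogonal to $\bw$, this is immediate: a standard Gaussian plus $D_{\mathbb{R}^{n-1},\beta/\gamma}$ yields $D_{\mathbb{R}^{n-1},\sqrt{1+(\beta/\gamma)^2}}$, and rescaling by $\gamma/\sqrt{\beta^2+\gamma^2} = 1/\sqrt{1+(\beta/\gamma)^2}$ brings it back to a standard Gaussian, matching the $\bw^\perp$-marginal of any $H_{\bw,\tilde\beta,\tilde\gamma}$ read off from Eq.~\eqref{eqn:hclwe-mixture-def}.

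The substantive step is the secret direction. There I must show that the density obtained by convolving the discrete Gaussian on $\tfrac{1}{\gamma}\mathbb{Z}$ (width $1$) with a continuous Gaussian of width $\beta/\gamma$, and then dilating by $1/\sqrt{1+(\beta/\gamma)^2}$, agrees up to normalization with the secret-direction marginal of $H_{\bw,\tilde\beta,\tilde\gamma}$, namely $\rho(x)\cdot\sum_{k\in\mathbb{Z}} \rho_{\tilde\beta}(k-\tilde\gamma x)$. The natural tool is Claim~\ref{claim:complete-squares}: writing the pre-rescaling density at a point $x$ as
\[
\sum_{k\in\mathbb{Z}} \rho(k/\gamma)\cdot \rho_{\beta/\gamma}(x-k/\gamma),
\]
I apply Claim~\ref{claim:complete-squares} with $r_1 = 1$, $r_2 = \beta/\gamma$, $c_1=0$, $c_2=x$ to factor each summand as $\rho_{r_0}(x)\cdot \rho_{r_3}(k/\gamma-c_3)$, where $r_0 = \sqrt{1+(\beta/\gamma)^2}$, $r_3 = \beta/(\gamma r_0)$, and $c_3 = x/r_0^2$. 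The $x$-dependence then separates into an outer Gaussian envelope $\rho_{r_0}(x)$ and a theta-function in $x$.

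Finally, I substitute $x = r_0 x'$ for the rescaled variable $x'$: the envelope collapses to $\rho(x')$, while each exponent $(k/\gamma - x'/r_0)^2/r_3^2$ can be rewritten as $(k - \tilde\gamma x')^2/\tilde\beta^2$ with $\tilde\gamma = \gamma/r_0$ and $\tilde\beta = \beta/r_0 = \tilde\gamma(\beta/\gamma)$, which is exactly $\rho_{\tilde\beta}(k-\tilde\gamma x')$. The resulting density thus matches $H_{\bw,\tilde\beta,\tilde\gamma}$ in the secret direction. The only real obstacle is bookkeeping of the $\sqrt{1+(\beta/\gamma)^2}$ factors through the completing-the-square and dilation steps; once those are tracked carefully the identification is mechanical, and normalization constants match because both sides are probability densities.
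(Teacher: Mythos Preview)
Your proposal is correct and follows essentially the same route as the paper: split into the $\bw$ and $\bw^\perp$ components, handle the orthogonal part trivially, and along the secret direction convolve $\sum_k \rho(k/\gamma)\,\rho_{\beta/\gamma}(\cdot - k/\gamma)$ using Claim~\ref{claim:complete-squares}, then substitute the rescaled variable to read off $\rho(\tilde z)\sum_k \rho_{\tilde\beta}(k-\tilde\gamma\tilde z)$. The only cosmetic difference is that the paper fixes $\bw=\be_1$ and works with coordinates $z_1,\tilde z_1$ rather than explicitly naming the orthogonal decomposition.
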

\begin{proof}
Without loss of generality, suppose $\bw = \be_1$.

Let $\bz \sim H_{\bw,\gamma} + \ngauss{\beta/\gamma}$ and $\tilde{\bz} = \gamma\bz/\sqrt{\beta^2+\gamma^2}$.
It is easy to verify that the marginals density of $\tilde{\bz}$ on subspace $\be_1^\perp$ will simply be $\rho$.
Hence we focus on calculating the density of $z_1$ and $\tilde{z}_1$.
The density can be computed by convolving the probability densities of $H_{\bw,\gamma}$ and $\ngauss{\beta/\gamma}$ as follows. 
\begin{align*}
    H_{\bw,\gamma} * \ngauss{\beta/\gamma}(z_1) &\propto \sum_{k \in \mathbb{Z}}  \rho(k/\gamma)\cdot \rho_{\beta/\gamma}(z_1-k/\gamma) \\
    &= \rho_{\sqrt{\beta^2+\gamma^2}/\gamma}(z_1) \cdot \sum_{k \in \mathbb{Z}} \rho_{\beta/\sqrt{\beta^2+\gamma^2}}\Big(k / \gamma - \frac{\gamma^2}{\beta^2+\gamma^2}z_1 \Big) \\
    &= \rho(\tilde{z}_1) \cdot \sum_{k \in \mathbb{Z}} \rho_{\tilde{\beta}}\Big(k - \tilde{\gamma} \tilde{z}_1\Big)
    \; ,
\end{align*}
where the second to last equality follows from Claim~\ref{claim:complete-squares}.
This verifies that the resulting distribution is indeed $H_{\bw, \tilde{\beta}, \tilde{\gamma}}$.
\end{proof}

Claim~\ref{claim:noiseless-is-sufficient} implies an homogeneous CLWE distribution with $\beta > 0$ is equivalent to a noiseless homogeneous CLWE distribution with independent Gaussian noise added. We will first analyze the noiseless case and then derive the covariance of noisy (i.e., $\beta > 0$) case by adding independent Gaussian noise and rescaling.

\begin{lemma}
\label{lem:covariance-hclwe-noiseless}
Let $\Sigma \succ 0$ be the covariance matrix of the 
$n$-dimensional noiseless homogeneous CLWE distribution
$H_{\bw,\gamma}$ with $\gamma \ge 1$. Then,
\begin{align*}
    \Big\|\Sigma - \frac{1}{2\pi} I_n \Big\| \geq \gamma^2 \exp(-\pi\gamma^2) \; ,
\end{align*}
where $\|\cdot\|$ denotes the spectral norm.
\end{lemma}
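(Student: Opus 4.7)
The strategy is to exploit the structure of the noiseless distribution: when $\beta = 0$ the support of $H_{\bw,\gamma}$ collapses to the countable family of hyperplanes $\langle \by, \bw\rangle = k/\gamma$, $k \in \mathbb{Z}$, and conditioned on being on one of these hyperplanes the law of $\by$ is simply a standard Gaussian on that hyperplane (offset by $(k/\gamma)\bw$). Concretely, using Eq.~\eqref{eqn:hclwe-mixture-def} with $\beta = 0$, write any $\by$ in the support as $\by = (k/\gamma)\bw + \by_\perp$ with $\by_\perp \in \bw^\perp$; the density factorizes as $\rho_\gamma(k) \cdot \rho(\by_\perp)$, so the distribution of $\by_\perp$ is the standard Gaussian on $\bw^\perp$, independent of $k$, and the marginal of $\langle\by,\bw\rangle$ is $D_{\frac{1}{\gamma}\mathbb{Z},1}$. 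By symmetry $\E[\by] = \bzero$ and $\E[\langle \by, \bw\rangle \, \by_\perp] = \bzero$, so the covariance decomposes as
\begin{equation*}
\Sigma \;=\; \sigma^2\,\bw\bw^T \;+\; \tfrac{1}{2\pi}\bigl(I_n - \bw\bw^T\bigr), \qquad \sigma^2 := \E[\langle \by, \bw\rangle^2].
\end{equation*}
Consequently $\Sigma - \tfrac{1}{2\pi} I_n = (\sigma^2 - \tfrac{1}{2\pi})\,\bw\bw^T$ is rank one and $\|\Sigma - \tfrac{1}{2\pi} I_n\| = |\sigma^2 - \tfrac{1}{2\pi}|$, reducing the problem to a one-dimensional estimate.

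Next I would compute $\sigma^2$ in closed form via Poisson summation (Lemma~\ref{lem:poisson-sum}). Let $f(x) = x^2 e^{-\pi x^2}$. Differentiating the identity $\widehat{e^{-\pi x^2}}(y) = e^{-\pi y^2}$ twice in $y$ (or equivalently computing moments of the Gaussian) gives $\hat f(y) = \bigl(\tfrac{1}{2\pi} - y^2\bigr) e^{-\pi y^2}$. Applying Poisson summation to both numerator and denominator of
\begin{equation*}
\sigma^2 \;=\; \frac{\sum_{k \in \mathbb{Z}} (k/\gamma)^2 \, e^{-\pi k^2/\gamma^2}}{\sum_{k \in \mathbb{Z}} e^{-\pi k^2/\gamma^2}}
\;=\; \frac{\gamma \sum_{j \in \mathbb{Z}} \bigl(\tfrac{1}{2\pi} - j^2\gamma^2\bigr) e^{-\pi j^2 \gamma^2}}{\gamma \sum_{j \in \mathbb{Z}} e^{-\pi j^2 \gamma^2}}
\;=\; \frac{1}{2\pi} \;-\; \gamma^2 \cdot \frac{\sum_{j \neq 0} j^2 \, e^{-\pi j^2 \gamma^2}}{\sum_{j \in \mathbb{Z}} e^{-\pi j^2 \gamma^2}}.
\end{equation*}
In particular $\sigma^2 < \tfrac{1}{2\pi}$, i.e., the discretization in the hidden direction strictly reduces variance relative to the continuous Gaussian.

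Finally I would estimate the correction term for $\gamma \ge 1$. The numerator of the fraction is at least $2\,e^{-\pi\gamma^2}$ (keeping only $j = \pm 1$). For the denominator, using $\gamma \ge 1$,
\begin{equation*}
\sum_{j \in \mathbb{Z}} e^{-\pi j^2 \gamma^2} \;\le\; \sum_{j \in \mathbb{Z}} e^{-\pi j^2} \;\le\; 1 + 2\sum_{j \ge 1} e^{-\pi j} \;=\; 1 + \frac{2 e^{-\pi}}{1 - e^{-\pi}} \;<\; 2.
\end{equation*}
Combining, $|\sigma^2 - \tfrac{1}{2\pi}| \ge \gamma^2 \cdot \tfrac{2 e^{-\pi\gamma^2}}{2} = \gamma^2 e^{-\pi \gamma^2}$, which is the required bound.

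\textbf{Main obstacle.} There is no serious obstacle; the plan is essentially a short Poisson-summation computation once the rank-one reduction is observed. The only slightly delicate point is verifying the Fourier transform $\hat f(y) = (\tfrac{1}{2\pi} - y^2)e^{-\pi y^2}$ and keeping signs straight, together with ensuring the elementary bound on $\sum_j e^{-\pi j^2 \gamma^2}$ is tight enough that the prefactor $2/\sum_j e^{-\pi j^2 \gamma^2}$ exceeds $1$ (for $\gamma \ge 1$ it comfortably does). The bound is essentially tight, matching the heuristic that the leading discretization error arises from the $j = \pm 1$ Fourier modes.
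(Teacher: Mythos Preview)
Your proposal is correct and follows essentially the same approach as the paper: reduce to the one-dimensional variance along $\bw$, compute $\E[x^2]$ for $x\sim D_{(1/\gamma)\mathbb{Z}}$ via Poisson summation using the Fourier transform $\widehat{x^2\rho}(y)=(\tfrac{1}{2\pi}-y^2)\rho(y)$, and bound the resulting ratio using $\rho(\gamma\mathbb{Z}\setminus\{0\})<1$ for $\gamma\ge 1$. The only cosmetic difference is that you phrase the reduction as a rank-one decomposition of $\Sigma$ whereas the paper sets $\bw=\be_1$ without loss of generality.
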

\begin{proof}
Without loss of generality, let $\bw = \be_1$.
Then $H_{\bw,\gamma} = D_{L} \times D_{\mathbb{R}^{n-1}}$ where $L$ is the one-dimensional lattice $(1/\gamma)\mathbb{Z}$.
Then, $\Sigma = \operatorname{diag}(\E_{x \sim D_{L}}[x^2], \frac{1}{2\pi},\dots,\frac{1}{2\pi})$, so it suffices to show that
\begin{equation*}
    \Big| \E_{x \sim D_{L}}[x^2] - \frac{1}{2\pi} \Big| \ge \gamma^2 \exp(-\pi\gamma^2)
    \; .
\end{equation*}
Define $g(x) = x^2 \cdot \rho(x)$.
The Fourier transform of $\rho$ is itself; the Fourier transform of $g$ is given by
\begin{align*}
    \what{g}(y) = \Big(\frac{1}{2\pi}-y^2\Big) \rho(y)
    \; .
\end{align*}
By definition and Poisson's summation formula (Lemma~\ref{lem:poisson-sum}), we have
\begin{align}
    \E_{x \sim D_{L}}[x^2]
    &= \frac{g(L)}{\rho(L)} \nonumber \\
    &= \frac{\det(L^*)\cdot \what{g}(L^*)}{\det(L^*)\cdot \rho(L^*)}
    = \frac{\what{g}(L^*)}{\rho(L^*)} \nonumber \; ,
\end{align}
where $L^* = ((1/\gamma)\mathbb{Z})^* = \gamma \mathbb{Z}$.
Combining this with the expression for $\what{g}$, we have
\begin{align*}
    \Bigl|\E_{x \sim D_{L}}[x^2]-\frac{1}{2\pi}\Bigr| &= \frac{\sum_{y \in L^*}y^2\rho(y)}{1+\rho(L^*\setminus\{0\})} \\
    &\geq \gamma^2 \exp(-\pi \gamma^2) \; ,
\end{align*}
where we use the fact that for $\gamma \ge 1$,
\begin{align*}
    \rho(\gamma \mathbb{Z}\setminus\{0\}) \le 
    \rho(\mathbb{Z}\setminus\{0\}) &< 2\sum_{k=1}^\infty \exp(-\pi k) = \frac{2\exp(-\pi)}{1-\exp(-\pi)} < 1
    \; .
    \qedhere
\end{align*}
\end{proof}

Combining Claim~\ref{claim:noiseless-is-sufficient} and Lemma~\ref{lem:covariance-hclwe-noiseless}, we get the following corollary for the noisy case.

\begin{corollary}
\label{cor:covariance-hclwe}
Let $\Sigma \succ 0$ be the covariance matrix of
$n$-dimensional homogeneous CLWE distribution
$H_{\bw,\beta,\gamma}$ with $\gamma \ge 1$ and $\beta > 0$. Then,
\begin{align*}
    \Big\|\Sigma - \frac{1}{2\pi} I_n \Big\| \geq \gamma^2 \exp(-\pi(\beta^2+\gamma^2)) \; ,
\end{align*}
where $\|\cdot\|$ denotes the spectral norm.
\end{corollary}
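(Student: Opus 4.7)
The plan is to leverage Claim~\ref{claim:noiseless-is-sufficient} to reduce the noisy case to the noiseless case already handled by Lemma~\ref{lem:covariance-hclwe-noiseless}. Concretely, I would first invert Claim~\ref{claim:noiseless-is-sufficient} to write $H_{\bw,\beta,\gamma}$ as the law of $(\gamma/\sqrt{\gamma^2+\beta^2}) \cdot X$, where $X = X_0 + Z$, $X_0 \sim H_{\bw,\gamma_0}$ is noiseless homogeneous CLWE with $\gamma_0 = \sqrt{\gamma^2+\beta^2}$, and $Z \sim D_{\mathbb{R}^n,\beta/\gamma}$ is independent Gaussian noise. Solving for the parameters $(\beta',\gamma')$ to plug into the claim so that $(\tilde\beta,\tilde\gamma)=(\beta,\gamma)$ gives $\gamma' = \gamma_0$ and $\beta'/\gamma'=\beta/\gamma$, and a short calculation yields rescaling factor $\gamma'/\sqrt{\beta'^2+\gamma'^2} = \gamma/\gamma_0$.

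Next, I would compute the covariance $\Sigma$ of $H_{\bw,\beta,\gamma}$ using independence of $X_0$ and $Z$ together with the scaling $\mathrm{Cov}(cX) = c^2 \mathrm{Cov}(X)$. Letting $\Sigma_0 = \mathrm{Cov}(X_0)$, this gives
\begin{align*}
    \Sigma = \frac{\gamma^2}{\gamma_0^2}\Bigl( \Sigma_0 + \frac{(\beta/\gamma)^2}{2\pi} I_n\Bigr)
    = \frac{\gamma^2}{\gamma_0^2}\,\Sigma_0 + \frac{\beta^2}{\gamma_0^2}\cdot\frac{1}{2\pi} I_n \; .
\end{align*}
Subtracting $(1/(2\pi)) I_n$ and using $\gamma^2+\beta^2 = \gamma_0^2$ produces the clean identity
\begin{align*}
    \Sigma - \frac{1}{2\pi} I_n = \frac{\gamma^2}{\gamma_0^2}\Bigl(\Sigma_0 - \frac{1}{2\pi}I_n\Bigr) \; .
\end{align*}

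Finally, I would apply Lemma~\ref{lem:covariance-hclwe-noiseless} to the noiseless distribution $H_{\bw,\gamma_0}$ (note $\gamma_0 \ge \gamma \ge 1$ so the lemma applies), obtaining $\|\Sigma_0 - (1/(2\pi))I_n\| \ge \gamma_0^2 \exp(-\pi\gamma_0^2)$. Taking spectral norms in the displayed identity, the $\gamma_0^2$ factors cancel and the bound $\gamma^2 \exp(-\pi(\beta^2+\gamma^2))$ pops out. The only subtle point is checking the direction of Claim~\ref{claim:noiseless-is-sufficient}: the claim is stated forward (starting from noiseless), but because the map (convolve with a fixed Gaussian, then scale) is an explicit bijection on distributions once one fixes the parameter ratio, reading it backwards from a given $(\beta,\gamma)$ to the unique parent $\gamma_0$ requires no additional work beyond the parameter matching above. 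I do not expect any serious obstacle; the proof is essentially a two-line reduction plus bookkeeping.
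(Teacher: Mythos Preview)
Your proposal is correct and follows essentially the same approach as the paper: invert Claim~\ref{claim:noiseless-is-sufficient} to view $H_{\bw,\beta,\gamma}$ as a rescaled noisy version of $H_{\bw,\gamma'}$ with $\gamma'=\sqrt{\beta^2+\gamma^2}$, derive the identity $\Sigma-\tfrac{1}{2\pi}I_n=\tfrac{1}{1+(\beta/\gamma)^2}(\Sigma_0-\tfrac{1}{2\pi}I_n)$, and apply Lemma~\ref{lem:covariance-hclwe-noiseless} to $\Sigma_0$. Your scaling factor $\gamma^2/\gamma_0^2$ is exactly the paper's $1/(1+(\beta/\gamma)^2)$, so the arguments coincide up to notation.
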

\begin{proof}
Using Claim~\ref{claim:noiseless-is-sufficient}, we can view samples from $H_{\bw,\beta,\gamma}$ as samples from $H_{\bw,\gamma'}$ with independent Gaussian noise of width $\beta'/\gamma'$ added and rescaled by $\gamma'/\sqrt{\beta'^2+\gamma'^2}$, where $\beta', \gamma'$ are given by
\begin{align*}
    \beta' &= \beta \sqrt{1+(\beta/\gamma)^2} \; , \\
    \gamma' &= \sqrt{\beta^2+\gamma^2} \;.
\end{align*}
Let $\Sigma$ be the covariance of $H_{\bw,\beta,\gamma}$ and let $\Sigma_0$ be the covariance of $H_{\bw,\gamma'}$. Since the Gaussian noise added to $H_{\bw,\gamma'}$ is independent and $\beta'/\gamma' = \beta/\gamma$,
\begin{align*}
    \Sigma = \frac{1}{1+(\beta/\gamma)^2}\Big(\Sigma_0 + \frac{(\beta/\gamma)^2}{2\pi} I_n\Big) \;.
\end{align*}
Hence,
\begin{align*}
    \Big\|\Sigma - \frac{1}{2\pi}I_n\Big\| &= \frac{1}{1+(\beta/\gamma)^2} \Big\|\Big(\Sigma_0 + \frac{(\beta/\gamma)^2}{2\pi}I_n\Big)-\frac{1+(\beta/\gamma)^2}{2\pi}I_n\Big\| \\
    &= \frac{1}{1+(\beta/\gamma)^2}\Big\|\Sigma_0 - \frac{1}{2\pi}I_n \Big\| \\
    &\geq \gamma^2 \exp(-\pi(\beta^2+\gamma^2)) \; .
\end{align*}
where the last inequality follows from Lemma~\ref{lem:covariance-hclwe-noiseless}.
\end{proof}

We use the following lemma, which provides an upper bound on the error in estimating the covariance matrix by samples. The sub-gaussian norm of a random variable $Y$ is defined as $\|Y\|_{\psi_2} = \inf\{t > 0 \mid \mathbb{E}[\exp(Y^2/t^2)] \leq 2\}$ and that of an $n$-dimensional random vector $\by$ is defined as $\|\by\|_{\psi_2} = \sup_{\bu \in \mathbb{S}^{n-1}}\|\langle \by, \bu \rangle\|_{\psi_2}$.

\begin{lemma}[{\cite[Theorem 4.6.1]{vershynin2018high}}]
\label{lem:covariance-estimate}
Let $A$ be an $m\times n$ matrix whose rows $A_i$ are independent, mean zero, sub-gaussian isotropic random vectors in $\mathbb{R}^n$. Then for any $u \geq 0$ we have
\begin{align*}
    \Big\|\frac{1}{m}A^TA-I_n\Big\| \leq K^2 \max(\delta,\delta^2) \; \text{ where } \delta = C\Big(\sqrt{\frac{n}{m}} +\frac{u}{\sqrt{m}}\Big)\;,
\end{align*}
with probability at least $1-2e^{-u^2}$ for some constant $C > 0$. Here, $K = \max_i \|A_i\|_{\psi_i}$.
\end{lemma}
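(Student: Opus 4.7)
The plan is to combine the standard $\varepsilon$-net technique with Bernstein's inequality for sums of independent, centered, sub-exponential random variables. The key observation is that $\frac{1}{m}A^TA - I_n$ is symmetric, so its spectral norm admits the variational expression
\begin{align*}
    \Big\|\frac{1}{m}A^TA - I_n\Big\| = \sup_{x \in S^{n-1}} \Big|\frac{1}{m}\|Ax\|_2^2 - 1\Big|,
\end{align*}
and the task reduces to controlling the right-hand side uniformly in $x$.

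First, I would prove a pointwise concentration bound: fix $x \in S^{n-1}$ and set $X_i = \langle A_i, x \rangle$. Isotropy gives $\E[X_i^2] = 1$, and the sub-gaussian assumption gives $\|X_i\|_{\psi_2} \leq K$. Hence $X_i^2 - 1$ are independent, mean-zero, and sub-exponential with $\|X_i^2 - 1\|_{\psi_1} \leq c_0 K^2$ for an absolute constant $c_0$. Bernstein's inequality for sub-exponential variables then yields
\begin{align*}
    \Pr\Big[\Big|\tfrac{1}{m}\textstyle\sum_{i=1}^m X_i^2 - 1\Big| \geq t\Big] \leq 2\exp\Big(-c\, m\min(t^2/K^4,\; t/K^2)\Big)
\end{align*}
for some absolute $c > 0$ and all $t \geq 0$. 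The two regimes $t \lesssim K^2$ and $t \gtrsim K^2$ are precisely what will produce the $\max(\delta,\delta^2)$ in the statement.

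Second, I would promote this to a uniform bound via a net argument. Let $\cN \subset S^{n-1}$ be a $1/4$-net of cardinality $|\cN| \leq 9^n$ (standard volumetric estimate). A routine inequality, proved by writing any $x \in S^{n-1}$ as $x = x_0 + (x - x_0)$ with $x_0 \in \cN$ and iterating the triangle inequality, gives
\begin{align*}
    \|M\| \leq 2\sup_{x \in \cN}|\langle Mx, x\rangle| \qquad \text{for every symmetric } M.
\end{align*}
Taking $t = \tfrac{1}{2}K^2\max(\delta,\delta^2)$ in the pointwise bound and union-bounding over $\cN$ shows that the desired spectral-norm estimate fails with probability at most $2 \cdot 9^n \exp(-c\, m\min(t^2/K^4, t/K^2))$.

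The remaining step, which is the main obstacle and is essentially bookkeeping, is to choose the constant $C$ in $\delta = C(\sqrt{n/m}+u/\sqrt m)$ large enough that the net cardinality $9^n = e^{(\ln 9)n}$ is absorbed by the exponent. One checks that $m\delta^2 \geq C^2(n+u^2)$, and if $\delta \geq 1$ then also $m\delta \geq C(\sqrt{nm}+u\sqrt m) \geq C(n+u^2)$; hence in both regimes $c\, m\min(t^2/K^4,\; t/K^2) \geq c'(n+u^2)$, and for $C$ sufficiently large this dominates $(\ln 9)n$ with enough slack to leave $u^2$ in the exponent. The resulting failure probability is at most $2\exp(-u^2)$, matching the statement. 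The two regimes $\delta \leq 1$ and $\delta > 1$ correspond exactly to the sub-gaussian versus sub-exponential tail in Bernstein's inequality, which is why $\max(\delta,\delta^2)$, rather than a single monomial, appears.
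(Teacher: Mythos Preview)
The paper does not prove this lemma; it is quoted verbatim from Vershynin's textbook (Theorem~4.6.1) and used as a black box in the proof of Theorem~\ref{thm:subexp-hclwe}. So there is no ``paper's own proof'' to compare against.

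Your proposal is correct and is precisely the argument Vershynin gives in the cited source: reduce to a supremum over the sphere via the quadratic-form characterization of the spectral norm of a symmetric matrix, apply Bernstein's inequality for sub-exponential variables to $\langle A_i,x\rangle^2 - 1$ at each fixed $x$, and lift to a uniform bound with a $1/4$-net and the inequality $\|M\|\le 2\max_{x\in\cN}|\langle Mx,x\rangle|$.

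One small wobble in your final bookkeeping: the line ``if $\delta\ge 1$ then also $m\delta \ge C(\sqrt{nm}+u\sqrt m)\ge C(n+u^2)$'' is not the right way to close the case split. The clean observation is that with $t=\tfrac12 K^2\max(\delta,\delta^2)$ one has $\min(t^2/K^4,\,t/K^2)\ge \tfrac14\delta^2$ in \emph{both} regimes (when $\delta\le 1$ the minimum is $t^2/K^4=\delta^2/4$; when $\delta>1$ the minimum is $t/K^2=\delta^2/2$), and then $m\delta^2 = C^2(\sqrt n+u)^2\ge C^2(n+u^2)$ handles everything uniformly. This is cosmetic; the strategy is sound.
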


Combining Corollary~\ref{cor:covariance-hclwe} and Lemma~\ref{lem:covariance-estimate}, we have the following theorem for distinguishing homogeneous CLWE distribution and Gaussian distribution.

\begin{theorem}
\label{thm:subexp-hclwe}
Let $\gamma = n^{\eps}$, where $\eps < 1/2$ is a constant, and let $\beta = \beta(n) \in (0,1)$. Then, there exists an $\exp(O(n^{2\eps}))$-time algorithm that solves $\hclwe_{\beta,\gamma}$.
\end{theorem}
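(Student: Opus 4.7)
The plan is a simple second-moment distinguisher. Draw $m$ i.i.d.\ samples $\by_1,\dots,\by_m$ from the unknown distribution, form the empirical second-moment matrix $\hat{\Sigma} := (1/m) \sum_i \by_i \by_i^T$, and accept ``standard Gaussian'' iff $\|\hat{\Sigma} - (1/(2\pi)) I_n\| < \tau$, where $\tau := \tfrac{1}{2}\gamma^2 \exp(-\pi(\beta^2+\gamma^2))$. Under $D_{\mathbb{R}^n}$ the true second-moment matrix is exactly $(1/(2\pi)) I_n$, while under $H_{\bw,\beta,\gamma}$ Corollary~\ref{cor:covariance-hclwe} guarantees that the true covariance $\Sigma$ satisfies $\|\Sigma - (1/(2\pi)) I_n\| \ge 2\tau$. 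Both distributions have mean zero (by symmetry $\by \mapsto -\by$ of $H_{\bw,\beta,\gamma}$), so the second-moment matrix coincides with the true covariance, and the whole analysis reduces to showing $\|\hat\Sigma - \Sigma\| \le \tau$ with high probability for a suitable choice of $m$.

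\textbf{Concentration.} For this I invoke Lemma~\ref{lem:covariance-estimate} applied to the whitened vectors $A_i := \Sigma^{-1/2} \by_i$, which are isotropic by construction. The lemma yields $\|(1/m)\sum_i A_i A_i^T - I_n\| = O(K^2 \sqrt{n/m})$ with probability $1-2e^{-n}$, where $K := \max_i \|A_i\|_{\psi_2} \le \|\Sigma^{-1/2}\| \cdot \|\by\|_{\psi_2}$; this in turn gives $\|\hat\Sigma - \Sigma\| = O(K^2 \|\Sigma\| \sqrt{n/m})$. I claim $K,\|\Sigma\|,\|\Sigma^{-1}\| = O(1)$ uniformly over both hypotheses. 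For $D_{\mathbb{R}^n}$ this is immediate. For $H_{\bw,\beta,\gamma}$, the Gaussian mixture representation in Eq.~\eqref{eqn:hclwe-mixture-def} shows that $\by$ is standard Gaussian in $\bw^\perp$, while $\langle \by, \bw \rangle$ is distributed as $G + (\gamma/(\beta^2+\gamma^2)) \cdot J$ for independent $G \sim D_{\mathbb{R},\, \beta/\sqrt{\beta^2+\gamma^2}}$ and $J \sim D_{\mathbb{Z},\,\sqrt{\beta^2+\gamma^2}}$. A standard sub-gaussian tail bound for the discrete Gaussian gives $\|J\|_{\psi_2} = O(\sqrt{\beta^2+\gamma^2})$, hence $\|\langle\by,\bw\rangle\|_{\psi_2} = O(\gamma/\sqrt{\beta^2+\gamma^2}) = O(1)$. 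The computation in the proof of Lemma~\ref{lem:covariance-hclwe-noiseless} further shows that the variance of $\langle\by,\bw\rangle$ differs from $1/(2\pi)$ by only an exponentially small (in $\gamma^2$) amount, pinning down $\|\Sigma\|, \|\Sigma^{-1}\| = O(1)$.

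\textbf{Parameter choice and main obstacle.} Choosing $m = \Theta\!\bigl(n\, \gamma^{-4}\, \exp(2\pi(\beta^2+\gamma^2))\bigr)$ drives the concentration bound below $\tau$, making the threshold test correct with probability $1 - o(1)$. For $\gamma = n^{\eps}$ and $\beta \in (0,1)$ this evaluates to $m = \exp(O(n^{2\eps}))$, and the total running time $O(mn^2)$ is likewise $\exp(O(n^{2\eps}))$. The only non-routine step is the uniform sub-gaussian and spectral bounds on $\by$ and $\Sigma^{\pm 1}$ under $H_{\bw,\beta,\gamma}$, which at first glance involve an infinite mixture of Gaussians centered along $\bw$; this collapses cleanly via the mixture decomposition~\eqref{eqn:hclwe-mixture-def} to a one-dimensional tail estimate for $D_{\mathbb{Z},\sqrt{\beta^2+\gamma^2}}$, which is standard.
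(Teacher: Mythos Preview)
Your proposal is correct and follows essentially the same approach as the paper: a second-moment distinguisher with threshold $\tfrac{1}{2}\gamma^2\exp(-\pi(\beta^2+\gamma^2))$, justified by Corollary~\ref{cor:covariance-hclwe} for the covariance gap and sub-gaussian concentration for the empirical estimate. The only minor difference is that for the $H_{\bw,\beta,\gamma}$ case the paper, rather than whitening and reapplying Lemma~\ref{lem:covariance-estimate}, simply applies Bernstein's inequality to the one scalar quantity $\sigma_m^2 = \bw^T\hat\Sigma\bw$ (using sub-gaussianity of $\langle\by,\bw\rangle$ from \cite[Lemma~2.8]{micciancio-peikert2012trapdoor}), which sidesteps the need to bound $\|\Sigma^{-1}\|$.
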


\begin{proof}
Our algorithm takes $m$ samples from the unknown input distribution $P$ and computes the sample covariance matrix $\Sigma_m = (1/m)A^TA$, where $A$'s rows are the samples, and its eigenvalues $\mu_1, \ldots, \mu_n$. Then, it determines whether $P$ is a homogeneous CLWE distribution or not by testing that
\begin{align*}
    \Bigl|\mu_i - \frac{1}{2\pi}\Bigr| \le \frac{1}{2}\cdot \gamma^2\exp(-\pi (\beta^2+\gamma^2)) \; \text{ for all } i \in [n]\;.
\end{align*}

The running time of this algorithm is $O(n^2 m) = \exp(O(n^{2\eps}))$. To show correctness, we first consider the case $P = D_{\mathbb{R}^n}$. The standard Gaussian distribution satisfies the conditions of Lemma~\ref{lem:covariance-estimate} (after rescaling by $1/(2\pi)$). Hence, the eigenvalues of $\Sigma_m$ will be within distance $O(\sqrt{n/m})$ from $1/(2\pi)$ with high probability.

Now consider the case $P = H_{\bw,\beta,\gamma}$. We can assume $\bw=\be_1$ without loss of generality since eigenvalues are invariant under rotations. Denote by $\by$ a random vector distributed according to $H_{\bw,\beta,\gamma}$ and $\sigma^2 = \E_{\by \sim H_{\bw,\beta,\gamma}}[y_1^2]$. The covariance of $P$ is given by
\begin{align}
    \Sigma = \begin{pmatrix} \sigma^2 & \bzero \\ \bzero & \frac{1}{2\pi}I_{n-1} \end{pmatrix} \label{eqn:hclwe-covariance-matrix} \; .
\end{align}
Now consider the sample covariance $\Sigma_m$ of $P$ and denote by $\sigma_m^2 = \bw^T\Sigma_m\bw = (1/m)\sum_{i=1}^m A_{i1}^2$. Since $A_{i1}$'s are sub-gaussian random variables~\cite[Lemma 2.8]{micciancio-peikert2012trapdoor}, $\sigma_m^2-\sigma^2$ is a sum of $m$ independent, mean-zero, sub-exponential random variables. For $m = \omega(n)$, Bernstein's inequality~\cite[Corollary 2.8.3]{vershynin2018high} implies that $|\sigma_m^2-\sigma^2| = O(\sqrt{n/m})$ with high probability. By Corollary~\ref{cor:covariance-hclwe}, we know that
\begin{align*}
\Big|\sigma^2 - \frac{1}{2\pi}\Big| \ge \gamma^2\exp(-\pi(\beta^2+\gamma^2)) \;.
\end{align*}

Hence, if we choose $m = \exp(c\gamma^2)$ with some sufficiently large constant $c$, then $\Sigma_m$ will have an eigenvalue that is noticeably far from $1/(2\pi)$ with high probability.
\end{proof}

\section{SQ Lower Bound for Homogeneous CLWE}
\label{section:sq-lb}
Statistical Query (SQ) algorithms~\cite{kearnsSQ1998} are a restricted class of algorithms that are only allowed to query expectations of functions of the input distribution without directly accessing individual samples. To be more precise, SQ algorithms access the input distribution indirectly via the STAT($\tau$) oracle, which given a query function $f$ and data distribution $D$, returns a value contained in the interval $\mathbb{E}_{x \sim D} [f(x)]+[-\tau, \tau]$ for some precision parameter $\tau$.

In this section, we prove SQ hardness of distinguishing homogeneous CLWE distributions from the standard Gaussian. In particular, we show that SQ algorithms that solve homogeneous CLWE require super-polynomial number of queries even with super-polynomial precision. This is formalized in Theorem~\ref{thm:hclwe-sq-lb}.

\begin{theorem}
\label{thm:hclwe-sq-lb}
Let $\beta = \beta(n) \in (0,1)$ and $\gamma = \gamma(n) \geq \sqrt{2}$. Then, any (randomized) SQ algorithm with precision $\tau \geq 4 \cdot \exp(-\pi \cdot \gamma^2/4)$ that successfully solves $\hclwe_{\beta, \gamma}$ with probability $\eta > 1/2$ requires at least $(2\eta-1)\cdot \exp(c n)\cdot \tau^2\beta^2/(4\gamma^2)$ statistical queries of precision $\tau$ for some constant $c > 0$.
\end{theorem}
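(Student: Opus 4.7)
The plan is to apply the standard statistical-query lower bound machinery via chi-squared correlations, in the style of \cite{feldman2017planted-clique} and \cite{diakonikolas2017sqgaussian}. Concretely, I would pack the sphere with an exponentially large family of candidate hidden directions, show that the resulting family of homogeneous CLWE distributions is almost orthogonal in chi-squared relative to the standard Gaussian reference, and invoke a generic SQ dimension bound.

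For the packing, a standard probabilistic/volume argument on $\mathbb{S}^{n-1}$ produces $N = \exp(cn)$ unit vectors $\bw_1,\ldots,\bw_N$ with pairwise inner products $|\langle \bw_i,\bw_j\rangle| \leq 1/2$, for an appropriate constant $c > 0$. For the chi-squared computation, let $D = \ngauss{1}$. The density ratio $\phi_\bw(\by) = H_{\bw,\beta,\gamma}(\by)/D(\by) = Z^{-1}\sum_k \rho_\beta(k - \gamma\langle\by,\bw\rangle)$ depends only on $u := \langle\by,\bw\rangle$, and Poisson summation (Lemma~\ref{lem:poisson-sum}) applied to $\sum_k \rho_\beta(k - \gamma u)$ yields the Fourier expansion $\phi_\bw(\by) = (\beta/Z)\sum_k e^{-\pi\beta^2 k^2}e^{2\pi i k\gamma u}$. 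Under $\by\sim D$, the pair $(\langle\by,\bw_i\rangle, \langle\by,\bw_j\rangle)$ is bivariate Gaussian with mean zero, variance $1/(2\pi)$, and correlation $\alpha = \langle\bw_i,\bw_j\rangle$, so using the characteristic function of this Gaussian gives
\begin{align*}
    \mathbb{E}_D\bigl[\phi_{\bw_i}(\by)\phi_{\bw_j}(\by)\bigr] = \frac{\beta^2}{Z^2}\sum_{k_1,k_2\in\mathbb{Z}}\exp\Bigl(-\pi\bigl[(\beta^2+\gamma^2)(k_1^2+k_2^2)+2\alpha\gamma^2 k_1 k_2\bigr]\Bigr).
\end{align*}
For $|\alpha|\leq 1/2$, the quadratic form in the exponent is minorized by $(\beta^2+\gamma^2/2)(k_1^2+k_2^2)$; subtracting the $\alpha=0$ case (which equals $1$ by normalization, and absorbs the $(k_1,k_2)\in(\{0\}\times\mathbb{Z})\cup(\mathbb{Z}\times\{0\})$ contributions) leaves only genuinely two-dimensional lattice points, bounding $|\chi_D(H_{\bw_i},H_{\bw_j})|$ by a constant times $e^{-\pi\gamma^2/2}$. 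A parallel Fourier/theta computation (or a crude $\|\phi\|_\infty$-type estimate using $Z\approx \beta$) controls the self chi-squared $\chi^2(H_\bw,D) = \mathbb{E}_D[\phi_\bw^2]-1$ by a polynomial factor in $\gamma/\beta$.

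Finally, I would invoke the generic SQ dimension lower bound for decision problems: given $N$ distributions with pairwise chi-squared at most $\kappa$ and self chi-squared at most $\nu$, any SQ algorithm with tolerance $\tau$ satisfying $\tau^2 \geq 4\kappa$ that distinguishes $D$ from a uniformly random element of the family with success probability $\eta > 1/2$ requires $\Omega\bigl((2\eta-1)\, N\, \tau^2 / \nu\bigr)$ queries. The hypothesis $\tau \geq 4e^{-\pi\gamma^2/4}$ ensures $\tau^2 \geq 16 e^{-\pi\gamma^2/2} \geq 4\kappa$, putting us in the valid regime, and plugging in $N = \exp(cn)$ together with the chi-squared estimates yields the stated count. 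The main obstacle I expect is pinning down the self chi-squared $\nu$ precisely---the density ratio $\phi_\bw$ has sharp peaks of height $\approx 1/\beta$ spaced by $\approx 1/\gamma$, so care is needed to track the Gaussian envelope and obtain a bound of the form $O(\gamma^2/\beta^2)$ required by the stated constant. The pairwise bound, by contrast, is relatively clean once the Fourier expansion is in hand: the decay $e^{-\pi\gamma^2/2}$ comes from the fact that the only new contributions (relative to $\alpha=0$) involve both $k_1\neq 0$ and $k_2\neq 0$, and the minorization of the quadratic form for $|\alpha|\leq 1/2$ forces these terms to carry at least the factor $e^{-\pi\gamma^2/2}$.
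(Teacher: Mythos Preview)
Your proposal is correct and follows the same high-level strategy as the paper: pack the sphere with $\exp(\Omega(n))$ directions having bounded pairwise inner product, bound the pairwise and self chi-squared correlations of the corresponding homogeneous CLWE distributions relative to $D_{\mathbb{R}^n}$, and invoke the generic SQ-dimension lower bound of \cite{feldman2017planted-clique} (the paper's Lemma~\ref{lem:decision-lb}). The paper uses the packing threshold $|\langle\bv,\bw\rangle|\le 1/\sqrt{2}$ rather than your $1/2$, but this is immaterial.

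The one noteworthy technical difference is in how the chi-squared correlation is computed. The paper's Proposition~\ref{prop:avg-corr} works directly with the densities, integrating out the two relevant Gaussian coordinates via repeated applications of Claim~\ref{claim:complete-squares}, and only at the end identifies the answer as a ratio of two-dimensional lattice theta sums to which Poisson summation is applied. You instead Fourier-expand the periodic factor $\sum_k\rho_\beta(k-\gamma u)$ at the outset (one application of Poisson summation), reducing the correlation to a single evaluation of the bivariate Gaussian characteristic function; this lands immediately on the same two-dimensional theta sum. Your route is shorter and makes the lattice structure more transparent; the paper's route has the advantage of staying in real space and yielding the bound $8\exp(-\pi\gamma^2(1-\alpha^2))$ uniformly in $\alpha$. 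Incidentally, your minorization $(\beta^2+\gamma^2/2)(k_1^2+k_2^2)$ for $|\alpha|\le 1/2$ actually gives $\exp(-\pi\gamma^2)$ rather than $\exp(-\pi\gamma^2/2)$ once you restrict to $k_1,k_2\neq 0$ (since then $k_1^2+k_2^2\ge 2$), so your pairwise bound is in fact tighter than stated; the self-correlation bound $O(\gamma^2/\beta^2)$ also falls out of the same theta-sum calculation (the paper obtains exactly $2\gamma^2/\beta^2$).
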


Note that when $\gamma = \Omega(\sqrt{n})$ and $\gamma/\beta = \poly(n)$, even
exponential precision $\tau = \exp(-O(n))$ results in a query lower bound that grows as $\exp(\tilde{\Omega}(n))$. This establishes an unconditional hardness result for SQ algorithms in the parameter regime $\gamma = \Omega(\sqrt{n})$, which is consistent with our computational hardness result based on worst-case lattice problems. The uniform spacing in homogeneous CLWE distributions gives us tight control over their pairwise correlation (see definition in \eqref{eqn:pairwise-corr}), which leads to a simple proof of the SQ lower bound.

We first provide some necessary background on the SQ framework. We denote by $\cB(\cU,D)$ the decision problem in which the input distribution $P$ either equals $D$ or belongs to $\cU$, and the goal of the algorithm is to identify whether $P=D$ or $P \in \cU$. For our purposes, $D$ will be the standard Gaussian $D_{\mathbb{R}^n}$ and $\cU$ will be a finite set of homogeneous CLWE distributions. Abusing notation, we denote by $D(x)$ the density of $D$. Following \cite{feldman2017planted-clique}, we define the \emph{pairwise correlation} between two distributions $P, Q$ relative to $D$ as
\begin{align}
    \chi_D(P,Q) := \mathbb{E}_{\bx \sim D} \left[\left(\frac{P(\bx)}{D(\bx)}-1 \right)\cdot\left(\frac{Q(\bx)}{D(\bx)}-1 \right) \right] = \mathbb{E}_{\bx \sim D} \left[\frac{P(\bx)Q(\bx)}{D(\bx)^2}\right] -1 \label{eqn:pairwise-corr}\; .
\end{align}

Lemma~\ref{lem:decision-lb} below establishes a lower bound on the number of statistical queries required to solve $\cB(\cU,D)$ in terms of pairwise correlation between distributions in $\cU$.

\begin{lemma}[{\cite[Lemma 3.10]{feldman2017planted-clique}}]
\label{lem:decision-lb}
Let $D$ be a distribution and $\cU$ be a set of distributions both over a domain $X$ such that for any $P, Q \in \cU$
\begin{align}
    |\chi_D(P,Q)| \leq \begin{cases} \delta &\mbox{if } P = Q \\ 
    \eps &\mbox{otherwise }\;  \end{cases} \nonumber\;.
\end{align}
Let $\tau \ge \sqrt{2\eps}$. Then, any (randomized) SQ algorithm that solves $\cB(\cU,D)$ with success probability $\eta > 1/2$ requires at least $(2\eta-1)\cdot|\cU|\cdot\tau^2/(2\delta)$ queries to $\operatorname{STAT}(\tau)$.
\end{lemma}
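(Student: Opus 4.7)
The plan is to prove this general SQ query-complexity lower bound via the standard ``adversarial oracle'' method, with the key technical step being a spectral bound applied not to the full Gram matrix of pairwise correlations but to its restriction to the set of distributions that a given query can separate from $D$.

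First, I would introduce the $D$-simulating oracle $\cO_D$, which returns the exact value $\E_D[f]$ for every query $f$. Along any execution of a $q$-query, $\tau$-precision SQ algorithm $\cA$, this oracle is a legitimate $\tau$-precision oracle for $P \in \cU$ iff $|\E_P[f_i] - \E_D[f_i]| \le \tau$ for every queried $f_i$. Whenever this validity holds throughout the run, the transcript of $\cA$ on $P$ is identically distributed to its transcript on $D$, forcing identical outputs. Since $\cA$ must output \textnormal{YES} on $D$ with probability $\ge \eta$ and \textnormal{NO} on each $P \in \cU$ with probability $\ge \eta$, a short coupling argument shows that for every $P \in \cU$, the probability (over $\cA$'s coins) that some queried $f_i$ satisfies $|\E_P[f_i] - \E_D[f_i]| > \tau$ is at least $2\eta - 1$.

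The core step is to bound $\mathrm{Bad}(f) := \{P \in \cU : |\E_P[f] - \E_D[f]| > \tau\}$ for any fixed query $f$ with $\|f\|_{L^2(D)} \le 1$. Writing $\phi_P := P/D - 1 \in L^2(D)$ so that $\E_P[f] - \E_D[f] = \langle f, \phi_P\rangle_{L^2(D)}$ and $\chi_D(P, Q) = \langle \phi_P, \phi_Q\rangle_{L^2(D)}$, set $S = \mathrm{Bad}(f)$ and $s = |S|$. Cauchy--Schwarz applied to the $s \times s$ restricted Gram matrix $M|_S$, together with Gershgorin, give
\begin{align*}
s \tau^2 \;<\; \sum_{P \in S} \langle f, \phi_P \rangle_{L^2(D)}^2 \;\le\; \|f\|_{L^2(D)}^2 \cdot \lambda_{\max}(M|_S) \;\le\; \delta + (s - 1)\eps,
\end{align*}
and the precision hypothesis $\tau^2 \ge 2\eps$ rearranges this to $s (\tau^2 - \eps) \le \delta$, hence $|\mathrm{Bad}(f)| \le 2\delta/\tau^2$. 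The crucial observation is that Gershgorin is applied to the restricted matrix $M|_S$ rather than the full $|\cU| \times |\cU|$ matrix, which eliminates the otherwise damaging $|\cU|\eps$ cross-term that a naive spectral bound would incur.

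Finally, summing the per-$P$ ``some query is bad'' probability over $\cU$ and swapping with the expectation over the at most $q$ adaptive queries,
\begin{align*}
(2\eta - 1) |\cU| \;\le\; \sum_{P \in \cU} \Pr\bigl[\exists\, i : |\E_P[f_i] - \E_D[f_i]| > \tau\bigr] \;\le\; \E\!\Big[\sum_{i=1}^q |\mathrm{Bad}(f_i)|\Big] \;\le\; q \cdot \frac{2\delta}{\tau^2},
\end{align*}
which rearranges to $q \ge (2\eta - 1) |\cU| \tau^2 / (2\delta)$, as required. The main obstacle I anticipate is the careful handling of the randomized/adaptive oracle step: making precise that ``transcript indistinguishable whenever $\cO_D$ is valid'' survives the interaction of $\cA$'s internal coins with an adaptive query tree, and that the $2\eta - 1$ slack propagates correctly to the union bound. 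The Gershgorin-on-the-subset trick at the heart of the proof is short once the setup is in place.
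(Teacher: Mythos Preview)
The paper does not prove this lemma; it is quoted verbatim with a citation to \cite{feldman2017planted-clique} and used as a black box in Section~\ref{section:sq-lb}. Your argument is correct and is essentially the standard proof from that reference: the $D$-simulating adversarial oracle, the $2\eta-1$ coupling bound, the key spectral inequality $\sum_{P\in S}\langle f,\phi_P\rangle_{L^2(D)}^2 \le \lambda_{\max}(M|_S)$ combined with Gershgorin on the \emph{restricted} Gram matrix to get $|\mathrm{Bad}(f)|\le 2\delta/\tau^2$, and the final averaging over $\cU$ and over queries. The concern you flag about adaptivity is handled exactly as you indicate: fixing the coins $r$, the run against $\cO_D$ is deterministic, and whenever every query along that fixed transcript has $|\E_P[f_i]-\E_D[f_i]|\le\tau$ the same transcript is produced by the valid-for-$P$ oracle that returns $\E_D[f]$ on such queries; this makes the $2\eta-1$ bound and the subsequent union bound go through without further subtlety.
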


The following proposition establishes a tight upper bound on the pairwise correlation between homogeneous CLWE distributions. To deduce Theorem~\ref{thm:hclwe-sq-lb} from Lemma~\ref{lem:decision-lb} and Proposition~\ref{prop:avg-corr}, we take a set of unit vectors $\cU$ such that any two distinct vectors $\bv, \bw \in \cU$ satisfy $|\langle \bv, \bw \rangle| \leq 1/\sqrt{2}$, and identify it with the set of homogeneous CLWE distributions $\{H_{\bw,\beta,\gamma}\}_{\bw \in \cU}$. A standard probabilistic argument shows that such a $\cU$  can be as large as $\exp(\Omega(n))$, which proves Theorem~\ref{thm:hclwe-sq-lb}.

\begin{proposition}
\label{prop:avg-corr}
Let $\bv, \bw \in \mathbb{R}^n$ be unit vectors and let $H_{\bv}, H_{\bw}$ be $n$-dimensional homogeneous CLWE distributions with parameters $\gamma \geq 1, \beta \in (0,1)$, and hidden direction $\bv$ and $\bw$, respectively. Then, for any $\alpha \ge 0$ that satisfies $\gamma^2(1-\alpha^2) \ge 1$,
\begin{align}
    |\chi_{D}(H_{\bv},H_{\bw})| \leq \begin{cases} 2(\gamma/\beta)^2 &\text{ if } \bv = \bw \\ 8\exp(-\pi\cdot \gamma^2(1-\alpha^2)) &\text{ if } |\langle \bv, \bw \rangle| \leq \alpha \end{cases} \nonumber \;.
\end{align}
\end{proposition}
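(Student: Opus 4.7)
The plan is to compute $\chi_D(H_\bv, H_\bw)$ as an explicit double Gaussian sum via Fourier analysis and then bound it in the two regimes separately. The first step is to apply Lemma~\ref{lem:poisson-sum} to the periodic factor in the definition~\eqref{eqn:hclwe-def}, writing
\[
\sum_{k \in \mathbb{Z}} \rho_\beta(k - \gamma t) \;=\; \beta \sum_{m \in \mathbb{Z}} e^{-\pi \beta^2 m^2}\, e^{-2\pi i m \gamma t}.
\]
This represents $H_{\bw,\beta,\gamma}/D_{\mathbb{R}^n}$ as a Fourier series with coefficients $c_m = (\beta/Z)\, e^{-\pi \beta^2 m^2}$; Poisson applied to the normalization~\eqref{eqn:hclwe-def-normalization} gives $Z = \beta \sum_m e^{-\pi(\beta^2+\gamma^2) m^2}$, hence $(\beta/Z) \sum_m e^{-\pi(\beta^2+\gamma^2)m^2} = 1$. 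Substituting into~\eqref{eqn:pairwise-corr} and using $\mathbb{E}_{\bx \sim D_{\mathbb{R}^n}}[e^{-2\pi i \langle \bx, \bu\rangle}] = \rho(\bu)$ with $\bu = \gamma(m\bv + m'\bw)$ will yield
\[
\chi_D(H_\bv, H_\bw) + 1 \;=\; \sum_{m, m' \in \mathbb{Z}} c_m c_{m'}\, e^{-\pi\gamma^2(m^2+m'^2)}\, e^{-2\pi \gamma^2 \alpha_0 m m'}, \qquad \alpha_0 := \langle \bv, \bw\rangle.
\]

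Since the right-hand side equals $1$ at $\alpha_0 = 0$ (the sum factors, and each factor collapses to $1$ by the normalization identity above), subtracting $1$ and using the $(\pm m, \pm m')$ symmetry $c_{-m}=c_m$ yields the symmetric form
\[
\chi_D(H_\bv, H_\bw) \;=\; 4 \sum_{m, m' \geq 1} c_m c_{m'}\, e^{-\pi \gamma^2(m^2+m'^2)}\, \bigl( \cosh(2\pi\gamma^2 \alpha_0 mm') - 1 \bigr),
\]
making $\chi_D \geq 0$ manifest. The uniform elementary estimate used in both regimes is $\cosh(x) - 1 \leq \tfrac12 e^{|x|}$. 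For $\bv = \bw$ (so $\alpha_0 = 1$) this gives $\chi \leq 2 \sum_{m, m' \geq 1} c_m c_{m'}\, e^{-\pi\gamma^2(m-m')^2}$. Substituting $c_m c_{m'} = (\beta/Z)^2 e^{-\pi\beta^2(m^2+m'^2)}$, changing variables to $(k, m) = (m - m', m)$, and completing the square in $m$ reduces the inner sum to a shifted one-dimensional theta function $\sum_m e^{-2\pi\beta^2(m - k/2)^2}$, which by Poisson summation is $O(1/\beta)$; the remaining sum over $k$ is $O(1)$ since $\gamma \geq 1$. The resulting bound is of order $1/\beta$, which for $\beta\in(0,1)$ and $\gamma \geq 1$ comfortably lies below the stated $2(\gamma/\beta)^2$.

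For $|\alpha_0| \leq \alpha$ the same estimate yields $\chi \leq 2 \sum_{m, m' \geq 1} c_m c_{m'}\, e^{-\pi\gamma^2(m^2+m'^2 - 2 \alpha m m')}$, and completing the square in $m$ rewrites the exponent as $-\pi\gamma^2\bigl[(m - \alpha m')^2 + (1-\alpha^2)m'^2\bigr]$. Using $c_m c_{m'} \leq 1$, the inner sum $\sum_{m \geq 1} e^{-\pi\gamma^2(m-\alpha m')^2}$ is at most $\rho_{1/\gamma}(\mathbb{Z})$ by Poisson summation (independently of the shift $\alpha m'$), which is bounded by a small absolute constant for $\gamma \geq 1$. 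The hypothesis $\gamma^2(1-\alpha^2) \geq 1$ then makes $\sum_{m' \geq 1} e^{-\pi\gamma^2(1-\alpha^2)m'^2}$ a rapidly convergent geometric-type series dominated by its $m'=1$ term and bounded by $2 e^{-\pi\gamma^2(1-\alpha^2)}$. Multiplying produces the stated bound $8 e^{-\pi\gamma^2(1-\alpha^2)}$. I expect the main technical obstacle to be the $\bv = \bw$ case: the inequality $\cosh(x) - 1 \leq e^x/2$ is very lossy when $mm'$ is large and positive, so convergence of the resulting sum depends entirely on the Gaussian decay $c_m \propto e^{-\pi\beta^2 m^2}$, and extracting the correct $1/\beta$ scaling requires the Poisson-summation manipulation described above.
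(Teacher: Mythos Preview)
Your approach is correct and reaches the stated bounds, but it proceeds differently from the paper. Both arguments start by expanding $H_{\bw}/D$ along the hidden direction and computing the expectation against $D_{\mathbb{R}^n}$; in fact your double sum $\sum_{m,m'} c_m c_{m'}\,e^{-\pi\gamma^2(m^2+m'^2+2\alpha_0 mm')}$ is exactly the paper's ratio $\rho(L_2^*)/\rho(L_1^*)$ written out in coordinates. The divergence is in how that quantity is bounded. The paper packages the sum as the Gaussian mass of an explicit two-dimensional lattice $L_2^*$, applies Poisson summation on that lattice, and then invokes the smoothing-parameter estimate (Lemma~\ref{lem:smoothing-primal}) together with a bound on $\lambda_2(L_2)$ to control $\rho(L_2^*\setminus\{\bzero\})$. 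You instead subtract off the $\alpha_0=0$ baseline to make $\chi_D\ge 0$ manifest via $\cosh-1$, then bound the resulting sum by $\cosh(x)-1\le e^{|x|}/2$, completing the square, and one-dimensional theta-function estimates. Your route is more elementary and self-contained (no two-dimensional lattice machinery or appeal to Lemma~\ref{lem:smoothing-primal}), and in the $\bv=\bw$ case it actually yields the sharper order $O(1/\beta)$ rather than $(\gamma/\beta)^2$. The paper's route is more structural: once one recognizes $\chi_D+1$ as a lattice Gaussian ratio, both regimes fall out uniformly from standard smoothing bounds without separate ad hoc manipulations.
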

\begin{proof}
We will show that computing $\chi_D(H_{\bv},H_{\bw})$ reduces to evaluating the Gaussian mass of two lattices $L_1$ and $L_2$ defined below. Then, we will tightly bound the Gaussian mass using Lemma~\ref{lem:poisson-sum} and Lemma~\ref{lem:smoothing-primal}, which will result in upper bounds on $|\chi_D(H_{\bv},H_{\bw})|$. We define $L_1$ and $L_2$ by specifying their bases $B_1$ and $B_2$, respectively.

\begin{align*}
    B_1 &= \frac{1}{\sqrt{\beta^2+\gamma^2}} \begin{pmatrix} 1 & 0 \\
    0 & 1 \end{pmatrix}
    \; ,\\
    B_2 &= \frac{1}{\sqrt{\beta^2+\gamma^2}}\begin{pmatrix} 1 & 0 \\
    -\frac{\alpha \gamma^2}{\zeta\sqrt{\beta^2+\gamma^2}} & \frac{\sqrt{\beta^2+\gamma^2}}{\zeta} \end{pmatrix}
    \; ,
\end{align*}
where $\zeta = \sqrt{(\beta^2+\gamma^2) -\alpha^2\gamma^4/(\beta^2+\gamma^2)}$. Then the basis of the dual lattice $L_1^*$ and $L_2^*$ is $B_1^{-T}$ and $B_2^{-T}$, respectively. Note that $\lambda_2(L_1)^2 = 1/(\beta^2+\gamma^2)$ and that the two columns of $B_2$ have the same norm, and so
\begin{align}
    \lambda_2(L_2)^2 &\leq \frac{1}{\beta^2+\gamma^2}\cdot \max\Big\{1+\frac{\alpha^2\gamma^4}{\zeta^2(\beta^2+\gamma^2)},\frac{\beta^2+\gamma^2}{\zeta^2}\Big\} \nonumber\\
    &= \frac{1}{\zeta^2} \label{eqn:lambda2-general} \\
    &\leq \frac{1}{\gamma^2(1-\alpha^2)} \label{eqn:lambda2-simple} \; .
\end{align}

Now define the density ratio $a(t) := H(t)/D(t)$, where $D$ is the standard Gaussian and $H$ is the marginal distribution of homogeneous CLWE with parameters $\beta, \gamma$ along the hidden direction. We immediately obtain
\begin{align}
    a(t) &= \frac{1}{Z} \sum_{k \in \mathbb{Z}} \rho_{\beta/\gamma}(t-k/\gamma) \label{eq:sq-density-ratio}
    \; ,
\end{align}
where $Z = \int_\mathbb{R} \rho(t) \cdot \sum_{k \in \mathbb{Z}} \rho_{\beta/\gamma}(t-k/\gamma) dt$. By Eq.~\eqref{eqn:hclwe-def-normalization}, $Z$ is given by
\begin{align*}
    Z = \frac{\beta}{\sqrt{\beta^2+\gamma^2}} \cdot \rho\Bigg(\frac{1}{\sqrt{\beta^2+\gamma^2}}\mathbb{Z}\Bigg) \; .
\end{align*}

Moreover, we can express $Z^2$ in terms of the Gaussian mass of $(L_1)$ as
\begin{align*}
    Z^2 = \frac{\beta^2}{\beta^2+\gamma^2}\cdot \rho(L_1) \; .
\end{align*}

$\chi_D(H_{\bv},H_{\bw})$ can be expressed in terms of $a(t)$ as
\begin{align}
    \chi_D(H_{\bv},H_{\bw}) = \E_{\bx \sim D}\Big[a(\langle \bx, \bw \rangle)\cdot a(\langle \bx, \bv \rangle)\Big] - 1 \label{eqn:avg-corr-simplified}\;.
\end{align}

Without loss of generality, assume $\bv = \be_1$ and $\bw = \alpha \be_1 + \xi \be_2$, where $\xi = \sqrt{1-\alpha^2}$. We first compute the pairwise correlation for $\bv \neq \bw$. For notational convenience, we denote by $\eps = 8\cdot\exp(-\pi\cdot\gamma^2(1-\alpha^2))$.

\begin{align}
    \chi_{D}(H_{\bv},H_{\bw}) + 1&= \E_{\bx \sim D} \Big[a(x_1)\cdot a(\alpha x_1 + \xi x_2)\Big] \nonumber \\
    &= \frac{1}{Z^2}\sum_{k, \ell \in \mathbb{Z}} \int \int \rho_{\beta}(\gamma x_1 - k)\cdot \rho_{\beta}((\gamma \alpha x_1 + \gamma \xi x_2) - \ell)\cdot \rho(x_1) \cdot \rho(x_2) dx_1 dx_2 \nonumber \\
    &= \frac{1}{Z^2}\cdot\frac{\beta}{\sqrt{(\gamma\xi)^2+\beta^2}}\sum_{k, \ell \in \mathbb{Z}} \int \rho_{\beta}(\gamma x_1 - k) \cdot \rho(x_1) \cdot \rho_{\sqrt{1+\beta^2/(\gamma\xi)^2}} (\ell/(\gamma\xi)-(\alpha/\xi) x_1) dx_1 \nonumber\\
    &= \frac{1}{Z^2}\cdot\frac{\beta}{\sqrt{(\gamma\xi)^2+\beta^2}}\cdot\frac{\beta\sqrt{(\gamma\xi)^2+\beta^2}}{\zeta\sqrt{\beta^2+\gamma^2}} \sum_{k, \ell \in \mathbb{Z}} \rho_{\sqrt{\beta^2+\gamma^2}}(k) \cdot \rho_{\zeta}\Big(\ell - \gamma^2 \alpha \cdot k/(\beta^2+\gamma^2)\Big) \nonumber\\
    &= \frac{\sqrt{\beta^2+\gamma^2}}{\zeta}\cdot \frac{\sum_{k, \ell \in \mathbb{Z}} \rho_{\sqrt{\beta^2+\gamma^2}}(k) \cdot \rho_{\zeta}\Big(\ell - \gamma^2 \alpha \cdot k/(\beta^2+\gamma^2)\Big)}{\rho(L_1)} \nonumber\\
    &= \frac{\sqrt{\beta^2+\gamma^2}}{\zeta} \cdot \frac{\rho(L_2)}{\rho(L_1)} \nonumber\\
    &= \frac{\sqrt{\beta^2+\gamma^2}}{\zeta}\cdot\frac{\det(L_2^*)}{\det(L_1^*)} \cdot \frac{\rho(L_2^*)}{\rho(L_1^*)}\nonumber \\
    &= \frac{\rho(L_2^*)}{\rho(L_1^*)} \label{eqn:poisson-distinct-directions}\\
    &\in \Big[\frac{1}{1+\eps}, 1+\eps\Big] \nonumber \; ,
\end{align}
In \eqref{eqn:poisson-distinct-directions}, we used the Poisson summation formula (Lemma~\ref{lem:poisson-sum}). The last line follows from \eqref{eqn:lambda2-simple} and Lemma~\ref{lem:smoothing-primal}, which implies that for any 2-dimensional lattice $L$ satisfying $\lambda_2(L) \leq 1$,
\begin{align}
    \rho(L^*\setminus\{\bzero\}) \leq 8\exp(-\pi/\lambda_2(L)^2) \label{eqn:2d-gaussian-mass-bound}\; .
\end{align}

Now consider the case $\bv = \bw$. Using \eqref{eqn:lambda2-general}, we get an upper bound $\lambda_2(L_2) \leq 1/\beta$ when $\alpha = 1$. It follows that $\lambda_2((\beta/\gamma)L_2) \le 1/\gamma \le 1$. Hence,
\begin{align}
    \chi_{D}(H_{\bv},H_{\bv}) + 1
    &= \frac{\sqrt{\beta^2+\gamma^2}}{\zeta} \cdot \frac{\rho(L_2)}{\rho(L_1)} \nonumber\\
    &\leq \frac{\sqrt{\beta^2+\gamma^2}}{\zeta}\cdot \frac{\rho((\beta/\gamma)L_2)}{\rho(L_1)} \nonumber \\
    &= \frac{\sqrt{\beta^2+\gamma^2}}{\zeta} \cdot \frac{\det((\gamma/\beta)L_2^*)}{\det(L_1^*)} \cdot \frac{\rho((\gamma/\beta)L_2^*)}{\rho(L_1^*)} \nonumber \\
    &= \frac{\gamma^2}{\beta^2}\cdot\frac{\rho((\gamma/\beta)L_2^*)}{\rho(L_1^*)} \label{eqn:poisson-same-direction}\\
    &\leq 2(\gamma/\beta)^2 \label{eqn:chi-correlation-ub} \; .
\end{align}
where we used Lemma~\ref{lem:poisson-sum} in \eqref{eqn:poisson-same-direction} and in \eqref{eqn:chi-correlation-ub}, we used \eqref{eqn:2d-gaussian-mass-bound} and the fact that $\lambda_2((\beta/\gamma)L_2) \leq 1$ to deduce $\rho((\gamma/\beta)L_2^*\setminus\{\bzero\}) \leq 1$.
\end{proof}

\section{Extension of Homogeneous CLWE to \texorpdfstring{$m \ge 1$}{m>=1} Hidden Directions}
\label{section:k-hc}

In this section, we generalize the hardness result to the setting where the homogeneous CLWE distribution has $m \ge 1$ hidden directions.
The proof is a relatively standard hybrid argument.

\begin{definition}[$m$-Homogeneous CLWE distribution]
For $0 \le m \le n$, matrix $\bW \in \mathbb{R}^{n \times m}$ with orthonormal columns $\bw_1,\ldots,\bw_m$, and $\beta, \gamma > 0$, define the \emph{$m$-homogeneous CLWE distribution} $H_{\bW, \beta, \gamma}$ over $\mathbb{R}^n$ to have density at $\by$ proportional to

\begin{align*}
    \rho(\by) \cdot \prod_{i = 1}^m \sum_{k \in \mathbb{Z}} \rho_\beta(k-\gamma\langle \by, \bw_i \rangle)
    \; .
\end{align*}
\end{definition}

Note that the $0$-homogeneous CLWE distribution is just $D_{\mathbb{R}^n}$ regardless of $\beta$ and $\gamma$.

\begin{definition} For parameters $\beta, \gamma > 0$ and $1 \le m \le n$, the average-case decision problem $\hclwe_{\beta, \gamma}^{(m)}$ is to distinguish the following two distributions over $\mathbb{R}^n$: (1) the $m$-homogeneous CLWE distribution $H_{\bW, \beta, \gamma}$ for some matrix $\bW \in \mathbb{R}^{n \times m}$ (which is fixed for all samples) with orthonormal columns chosen uniformly from the set of all such matrices, or (2) $D_{\mathbb{R}^n}$.
\end{definition}

\begin{lemma}
\label{lem:hc-to-k-hc}
For any $\beta, \gamma > 0$ and positive integer $m = m(n)$ such that $m \le n$ and $n - m = \Omega(n^c)$ for some constant $c > 0$,
if there exists an efficient algorithm that solves $\hclwe_{\beta,\gamma}^{(m)}$ with non-negligible advantage,
then there exists an efficient algorithm that solves $\hclwe_{\beta,\gamma}$ with non-negligible advantage.
\end{lemma}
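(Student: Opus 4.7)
The plan is a standard hybrid argument. Introduce the sequence of distributions $\cH^{(0)},\ldots,\cH^{(m)}$ over $\mathbb{R}^n$, where $\cH^{(i)}$ is the $i$-homogeneous CLWE distribution (i.e., $H_{\bW,\beta,\gamma}$ averaged over a uniformly random orthonormal $n\times i$ matrix $\bW$); in particular $\cH^{(0)} = D_{\mathbb{R}^n}$ and $\cH^{(m)}$ is the target distribution appearing in $\hclwe_{\beta,\gamma}^{(m)}$. By triangle inequality, any distinguisher $\cA$ with non-negligible advantage $\eps$ for $\cH^{(m)}$ versus $D_{\mathbb{R}^n}$ must, for some $j^{\star}\in\{1,\ldots,m\}$, distinguish $\cH^{(j^{\star})}$ from $\cH^{(j^{\star}-1)}$ with advantage at least $\eps/m$. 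I would enumerate over candidate values of $j\in[m]$ and, for each, reduce the base case $\hclwe_{\beta,\gamma}=\cH^{(1)}$ vs.\ $\cH^{(0)}$ to the adjacent-hybrid instance $\cH^{(j)}$ vs.\ $\cH^{(j-1)}$.

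The reduction, given a challenge $\bx\in\mathbb{R}^n$ (drawn either from $H^{(1)}_{\bw,\beta,\gamma}$ with $\bw$ uniform on $S^{n-1}$, or from $D_{\mathbb{R}^n}$), produces a single sample $\bx'$ as follows. First, sample $j-1$ uniformly random orthonormal vectors $\bw_1,\ldots,\bw_{j-1}\in\mathbb{R}^n$ independent of $\bx$, assembled as columns of $\bW\in\mathbb{R}^{n\times(j-1)}$. Next, sample $u_1,\ldots,u_{j-1}$ i.i.d.\ from the $1$-dimensional homogeneous CLWE marginal $M_{\beta,\gamma}$ (density proportional to $\rho(t)\sum_{k\in\mathbb{Z}}\rho_\beta(k-\gamma t)$), which is efficiently samplable. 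Finally, output $\bx'=(\bI-\bW\bW^{\top})\bx+\bW\bu$, where $\bu=(u_1,\ldots,u_{j-1})^{\top}$, and invoke the $\cH^{(j)}$-vs-$\cH^{(j-1)}$ sub-distinguisher derived from $\cA$ on $\bx'$.

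When $\bx\sim D_{\mathbb{R}^n}$, the two orthogonal summands decompose cleanly: $(\bI-\bW\bW^{\top})\bx$ is standard Gaussian on $\bW^\perp$, while $\bW\bu$ installs independent hclwe marginals along the uniformly random orthonormal directions $\bw_1,\ldots,\bw_{j-1}$, so $\bx'$ is exactly distributed as $\cH^{(j-1)}$. When $\bx\sim H^{(1)}_{\bw,\beta,\gamma}$, decompose $\bx=u\bw+\by_\perp$ with $u\sim M_{\beta,\gamma}$ and $\by_\perp\sim D_{\bw^\perp}$; then $\bx'$ carries exact hclwe structure in each $\bw_i$ and an additional hclwe-like structure along $\tilde\bw:=P_{\bW^\perp}\bw/\|P_{\bW^\perp}\bw\|$. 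Because $\bw$ is uniform on $S^{n-1}$ and independent of $\bW$, the frame $[\tilde\bw\mid\bw_1\mid\cdots\mid\bw_{j-1}]$ is uniformly distributed on the Stiefel manifold in $\mathbb{R}^n$, matching the hidden-direction distribution of $\cH^{(j)}$.

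The delicate step, which I expect to be the main obstacle, is bounding the total-variation distance between this output and $\cH^{(j)}$. A direct completing-the-squares calculation using Claim~\ref{claim:complete-squares} shows that the marginal of $\bx'$ along $\tilde\bw$ is $M_{\beta',\gamma'}$ with $\gamma'=\alpha\gamma$ and $\beta'=\sqrt{\beta^2+\gamma^2(1-\alpha^2)}$, where $\alpha=\|P_{\bW^\perp}\bw\|$ and $\alpha^2$ follows a $\mathrm{Beta}((n-j+1)/2,\,(j-1)/2)$ distribution by rotational symmetry. The hypothesis $n-m=\Omega(n^c)$ guarantees that the codimension $n-j+1\geq n-m+1=\Omega(n^c)$ is large enough to concentrate $\alpha$ tightly and control $\mathbb{E}_\alpha\,\Delta(M_{\beta',\gamma'},M_{\beta,\gamma})$, which in turn bounds $\Delta(\bx',\cH^{(j)})$ by an inverse polynomial in $n$. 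Provided this statistical-distance estimate goes through, the composed distinguisher retains advantage at least $\eps/m-o(1)$, contradicting the hypothesized hardness of $\hclwe_{\beta,\gamma}$ and completing the reduction.
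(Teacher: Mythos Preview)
Your hybrid framework is fine, but the reduction step has a genuine gap. You assert that the codimension bound $n-j+1 \ge n-m+1 = \Omega(n^c)$ makes $\alpha = \|P_{\bW^\perp}\bw\|$ concentrate tightly enough to control $\Delta(M_{\beta',\gamma'}, M_{\beta,\gamma})$. But $\alpha^2 \sim \mathrm{Beta}\bigl((n-j+1)/2,\,(j-1)/2\bigr)$ concentrates around its mean $(n-j+1)/n$, not around $1$. The lemma allows $m$ as large as $n - n^c$ for any fixed $c \in (0,1)$; taking $j = m$ then gives $\mathbb{E}[\alpha^2] \approx n^{c-1} \to 0$. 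In that regime $\gamma' = \alpha\gamma$ is vanishing while $\beta' = \sqrt{\beta^2+(1-\alpha^2)\gamma^2} \approx \gamma$, so the periodic factor $\sum_k \rho_{\beta'}(k-\gamma' t)$ is essentially constant in $t$ and $M_{\beta',\gamma'}$ is within negligible distance of $D_{\mathbb{R}}$. Your transformed sample is therefore close to $\cH^{(j-1)}$ in \emph{both} input cases, and the adjacent-hybrid distinguisher derived from $\cA$ has no advantage. This cannot be fixed by a sharper concentration bound: the obstruction is the \emph{location} of $\alpha$, not its fluctuations.

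The paper sidesteps projection entirely by changing the ambient dimension. It solves $\hclwe_{\beta,\gamma}$ in dimension $n' = n-m+1$: given an $n'$-dimensional challenge, it \emph{appends} $m-1$ fresh coordinates (the first $i$ of them i.i.d.\ from the one-dimensional homogeneous CLWE marginal, the remaining $m-1-i$ standard Gaussian, for a uniformly random $i\in\{0,\ldots,m-1\}$) and then applies a Haar-random rotation in $O(n)$. Because the appended block is orthogonal to the challenge block by construction, the hidden direction of the challenge is automatically orthogonal to the $i$ planted directions, and the output is \emph{exactly} distributed as $\cH^{(i)}$ or $\cH^{(i+1)}$ with no distortion and no statistical-distance bookkeeping. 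The hypothesis $n-m = \Omega(n^c)$ is used only to ensure $n = O\bigl((n')^{1/c}\bigr)$, so that the reduction runs in time polynomial in $n'$.
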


\begin{proof}
Suppose $\cA$ is an efficient algorithm that solves $\hclwe_{\beta,\gamma}^{(m)}$ with non-negligible advantage
in dimension $n$.
Then consider the following algorithm $\cB$ that uses $\cA$ as an oracle and solves $\hclwe_{\beta,\gamma}$ in dimension $n' = n-m+1$.
\begin{enumerate}
    \item Input: $n'$-dimensional samples, drawn from either $\hclwe_{\beta,\gamma}$ or $D_{\mathbb{R}^{n'}}$;
    \item Choose $0 \le i \le m-1$ uniformly at random;
    \item Append $m-1 = n-n'$ coordinates to the given samples, where the first $i$ appended coordinates are drawn from $H_{\bI_i, \beta, \gamma}$ (with $\bI_i$ denoting the rank-$i$ identity matrix) and the rest of the coordinates are drawn from $D_{\mathbb{R}^{m - i -1}}$;
    \item Rotate the augmented samples using a uniformly random rotation from the orthogonal group $O(n)$;
    \item Call $\cA$ with the samples and output the result.
\end{enumerate}
As $n = O({n'}^{1/c})$, $\cB$ is an efficient algorithm.
Moreover, the samples passed to $\cA$ are effectively drawn from either $\hclwe_{\beta,\gamma}^{(i+1)}$ or $\hclwe_{\beta,\gamma}^{(i)}$.
Therefore the advantage of $\cB$ is at least $1/m$ fraction of the advantage of $\cA$, which would be non-negligible (in terms of $n$, and thus also in terms of $n'$) as well.
\end{proof}

Combining Corollary~\ref{cor:hc} and Lemma~\ref{lem:hc-to-k-hc}, we obtain the following corollary.

\begin{corollary}
For any $\beta = \beta(n) \in (0,1)$ and $\gamma = \gamma(n) \geq 2\sqrt{n}$ such that $\gamma/\beta$ is polynomially bounded,
and positive integer $m = m(n)$ such that $m \le n$ and $n - m = \Omega(n^c)$ for some constant $c > 0$,
there is a polynomial-time quantum reduction from $\dgs_{2\sqrt{2 n}\eta_\eps(L)/\beta}$ to $\hclwe_{\beta,\gamma}^{(m)}$.
\end{corollary}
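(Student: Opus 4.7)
The plan is to compose Lemma~\ref{lem:hc-to-k-hc}, which has just been established, with Corollary~\ref{cor:hc}. Fix the target dimension $n$ for $\hclwe_{\beta,\gamma}^{(m)}$. Reading off the proof of Lemma~\ref{lem:hc-to-k-hc}, its reduction uses $\hclwe_{\beta,\gamma}$ as a subroutine in the intermediate dimension $n' := n - m + 1$. The hypothesis $n - m = \Omega(n^c)$ forces $n' = \Omega(n^c)$, so $n = O(n'^{1/c})$; this is precisely the efficiency condition already invoked in the lemma, and it guarantees that any $\poly(n')$-time procedure in the chain remains $\poly(n)$.

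Next, I would apply Corollary~\ref{cor:hc} in dimension $n'$. The required hypotheses carry over with essentially no work: $\gamma(n) \ge 2\sqrt{n} \ge 2\sqrt{n'}$ since $n' \le n$, and $\gamma/\beta$ is polynomially bounded in $n$ and hence in $n'$. This produces a polynomial-time quantum reduction from $\dgs_{2\sqrt{2 n'}\eta_\eps(L)/\beta}$ on $n'$-dimensional lattices $L$ to $\hclwe_{\beta,\gamma}$ in dimension $n'$. Chaining this with the reduction from Lemma~\ref{lem:hc-to-k-hc} yields a polynomial-time quantum reduction from $\dgs_{2\sqrt{2 n'}\eta_\eps(L)/\beta}$ on $n'$-dimensional $L$ to $\hclwe_{\beta,\gamma}^{(m)}$ in dimension $n$, as desired.

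There is no substantive mathematical obstacle; the only care needed is bookkeeping across the two steps. In particular, the symbol $n$ plays different roles: in Corollary~\ref{cor:hc} it denotes the ambient dimension shared by the lattice and the $\hclwe$ problem, whereas in the final corollary it denotes the dimension of $\hclwe^{(m)}$, which is strictly larger than the lattice dimension used in the chain. The Gaussian width bound stated as $2\sqrt{2 n}\eta_\eps(L)/\beta$ is accordingly a (mild) overestimate of the width $2\sqrt{2 n'}\eta_\eps(L)/\beta$ produced by the composition, and since $\dgs_\varphi$ is monotone — a larger $\varphi$ only restricts the set of admissible inputs — the weaker form written in the corollary follows immediately from the stronger one obtained by the chain.
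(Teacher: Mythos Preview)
Your proposal is correct and follows exactly the approach the paper takes: the paper's proof is the single sentence ``Combining Corollary~\ref{cor:hc} and Lemma~\ref{lem:hc-to-k-hc}, we obtain the following corollary.'' Your write-up is in fact more careful than the paper, since you track the intermediate dimension $n'=n-m+1$, verify that the hypotheses of Corollary~\ref{cor:hc} hold at $n'$, and note that the stated bound $2\sqrt{2n}\eta_\eps(L)/\beta$ is a harmless relaxation of the $2\sqrt{2n'}\eta_\eps(L)/\beta$ actually produced by the chain (via monotonicity of $\dgs_\varphi$ in $\varphi$).
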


\printbibliography

@book{minkowski1910geometrie,
	series = {Geometrie der {Zahlen}},
	title = {Geometrie der {Zahlen}},
	publisher = {B.G. Teubner},
	author = {Minkowski, H.},
	year = {1910}
}

@article{pearson1984gmm,
 author = {Karl Pearson},
 journal = {Philosophical Transactions of the Royal Society of London. A},
 number = {},
 pages = {71--110},
 publisher = {The Royal Society},
 title = {Contributions to the mathematical theory of evolution},
 volume = {185},
 year = {1894}
}

@inbook{diakonikolas2016structured,
    author = {Diakonikolas, Ilias},
    year = {2016},
    month = {02},
    pages = {267-284},
    title = {Learning structured distributions},
    booktitle = {Handbook of Big Data}
}

@book{moitra2018, 
    place={Cambridge}, 
    title={Algorithmic aspects of machine learning},
    DOI={10.1017/9781316882177},
    publisher={Cambridge University Press},
    author={Moitra, Ankur},
    year={2018}
}

@inproceedings{szegedy2014adversarial-examples,
title	= {Intriguing properties of neural networks},
author	= {Christian Szegedy and Wojciech Zaremba and Ilya Sutskever and Joan Bruna and Dumitru Erhan and Ian Goodfellow and Rob Fergus},
year	= {2014},
URL	= {http://arxiv.org/abs/1312.6199},
booktitle	= {ICLR}
}

@article{rubinfeld-servedio2009monotone,
author = {Rubinfeld, Ronitt and Servedio, Rocco A.},
title = {Testing monotone high-dimensional distributions},
journal = {Random Structures \& Algorithms},
volume = {34},
number = {1},
pages = {24-44},
keywords = {sublinear algorithms, property testing, distribution testing, monotone distributions},
year = {2009}
}

@inproceedings{kothari-steinhardt2018clustering,
author = {Kothari, Pravesh K. and Steinhardt, Jacob and Steurer, David},
title = {Robust moment estimation and improved clustering via sum of squares},
year = {2018},
booktitle = {STOC},
pages = {1035–1046},
numpages = {12},
keywords = {robust learning, clustering, sum-of-squares, moments},
series = {STOC 2018}
}

@inproceedings{diakonikolas2018spherical-gmm,
author = {Diakonikolas, Ilias and Kane, Daniel M. and Stewart, Alistair},
title = {List-decodable robust mean estimation and learning mixtures of spherical Gaussians},
year = {2018},
booktitle = {STOC},
pages = {1047–1060},
numpages = {14},
keywords = {list-decodable learning, robust statistics, learning mixtures of spherical Gaussians},
series = {STOC 2018}
}

@inproceedings{dasgupta1999gmm,
author = {Dasgupta, Sanjoy},
title = {Learning mixtures of Gaussians},
year = {1999},
isbn = {0769504094},
booktitle = {FOCS},
pages = {634},
numpages = {1},
keywords = {dimensionality reduction, Gaussian, mixture model},
series = {FOCS ’99}
}

@article{dasgupta-schulman2007em,
author = {Dasgupta, Sanjoy and Schulman, Leonard},
title = {A probabilistic analysis of EM for mixtures of separated, spherical Gaussians},
year = {2007},
issue_date = {May 2007},
volume = {8},
issn = {1532-4435},
journal = {JMLR},
pages = {203–226},
numpages = {24}
}

@article{arora-kannan2005,
author = {Arora, Sanjeev and Kannan, Ravi},
fjournal = {Annals of Applied Probability},
journal = {Ann. Appl. Probab.},
month = {02},
number = {1A},
pages = {69--92},
title = {Learning mixtures of separated nonspherical Gaussians},
volume = {15},
year = {2005}
}

@inproceedings{vempala-wang2002spectralgmm,
author = {Vempala, Santosh and Wang, Grant},
title = {A spectral algorithm for learning mixtures of distributions},
year = {2002},
booktitle = {FOCS},
pages = {113},
numpages = {1},
series = {FOCS ’02}
}

@inproceedings{brubaker-vempala2008pca,
author = {Brubaker, Spencer Charles and Vempala, Santosh},
title = {Isotropic PCA and affine-invariant clustering},
year = {2008},
booktitle = {FOCS},
pages = {551–560},
numpages = {10},
series = {FOCS ’08}
}

@inproceedings{moitrav2010mixture,
    author = {Moitra, Ankur and Valiant, Gregory},
    title = {Settling the polynomial learnability of mixtures of Gaussians},
    year = {2010},
    booktitle = {FOCS},
    pages = {93–102},
    numpages = {10},
    keywords = {learning, mixture models, method of moments},
    series = {FOCS ’10}
}

@inproceedings{hopkins2018gmm,
author = {Hopkins, Samuel B. and Li, Jerry},
title = {Mixture models, robustness, and sum of squares proofs},
year = {2018},
booktitle = {STOC},
pages = {1021–1034},
numpages = {14},
keywords = {semidefinite programming, Unsupervised learning, mixture models, sum of squares method, mixture of Gaussians, clustering, high-dimensional statistics, robust statistics},
series = {STOC 2018}
}

@inproceedings{regev2017gmm, 
author={O. {Regev} and A. {Vijayaraghavan}}, 
booktitle={FOCS}, 
title={On learning mixtures of well-separated Gaussians}, 
year={2017}, 
pages={85-96},
}

@misc{devroye2018tv,
    title={The total variation distance between high-dimensional Gaussians},
    author={Luc Devroye and Abbas Mehrabian and Tommy Reddad},
    eprint={1810.08693},
    archivePrefix={arXiv},
    year={2018}
}

@book{vershynin2018high,
  title={High-dimensional probability: an introduction with applications in data science},
  author={Vershynin, R.},
  isbn={9781108415194},
  lccn={2018016910},
  series={Cambridge Series in Statistical and Probabilistic Mathematics},
  url={https://books.google.com/books?id=J-VjswEACAAJ},
  year={2018},
  publisher={Cambridge University Press}
}

@article{feldman2017planted-clique,
author = {Feldman, Vitaly and Grigorescu, Elena and Reyzin, Lev and Vempala, Santosh S. and Xiao, Ying},
title = {Statistical algorithms and a lower bound for detecting planted cliques},
year = {2017},
issue_date = {June 2017},
publisher = {Association for Computing Machinery},
address = {New York, NY, USA},
volume = {64},
number = {2},
journal = {J. ACM},
articleno = {Article 8},
numpages = {37},
keywords = {Learning theory, statistical dimension, lower bounds, statistical algorithms, planted clique}
}

@inproceedings{klivanskothari2019list-dec,
    title = {List-decodable linear regression},
    author = {Karmalkar, Sushrut and Klivans, Adam and Kothari, Pravesh},
    booktitle = {NeurIPS},
    pages = {7425--7434},
    year = {2019},
}

@inproceedings{raghavendrayau2020list-dec,
    author = {Raghavendra, Prasad and Yau, Morris},
    title = {List decodable learning via sum of squares},
    year = {2020},
    booktitle = {SODA},
    pages = {161–180},
    numpages = {20},
}

@inproceedings{BrakerskiLPRS13,
  author    = {Zvika Brakerski and
               Adeline Langlois and
               Chris Peikert and
               Oded Regev and
               Damien Stehl{\'{e}}},
   title     = {Classical hardness of learning with errors},
  booktitle = {STOC},
  pages     = {575--584},
  year      = {2013},
}

@inproceedings{micciancio-peikert2012trapdoor,
author = {Micciancio, Daniele and Peikert, Chris},
title = {Trapdoors for lattices: simpler, tighter, faster, smaller},
year = {2012},
booktitle = {EUROCRYPT},
pages = {700–718},
numpages = {19},
series = {EUROCRYPT’12}
}

@article{peikert2015decade,
    author = {Peikert, Chris},
    title = {A decade of lattice cryptography},
    year = {2016},
    issue_date = {March 2016},
    publisher = {Now Publishers Inc.},
    address = {Hanover, MA, USA},
    volume = {10},
    number = {4},
    issn = {1551-305X},
    url = {https://doi.org/10.1561/0400000074},
    doi = {10.1561/0400000074},
    journal = {Foundations and Trends in Theoretical Computer Science},
    month = mar,
    pages = {283–424},
    numpages = {142}
}

@inproceedings{peikert2010sampler,
author={Peikert, Chris},
title={An efficient and parallel Gaussian sampler for lattices},
booktitle={CRYPTO},
year={2010},
pages={80--97},
}

@inproceedings{lyubashevsky2009bdd,
author = {Lyubashevsky, Vadim and Micciancio, Daniele},
title = {On bounded distance decoding, unique shortest vectors, and the minimum distance problem},
year = {2009},
doi = {10.1007/978-3-642-03356-8_34},
booktitle = {CRYPTO},
pages = {577–594},
numpages = {18},
series = {CRYPTO ’09}
}

@article{babai1986cvp,
    author = {Babai, L},
    title = {On Lovász’ lattice reduction and the nearest lattice point problem},
    year = {1986},
    issue_date = {January 1986},
    publisher = {Springer-Verlag},
    address = {Berlin, Heidelberg},
    volume = {6},
    number = {1},
    issn = {0209-9683},
    url = {https://doi.org/10.1007/BF02579403},
    doi = {10.1007/BF02579403},
    journal = {Combinatorica},
    month = jan,
    pages = {1–13},
    numpages = {13}
}

@article{lenstra1982lll,
	title = {Factoring polynomials with rational coefficients},
	volume = {261},
	issn = {1432-1807},
	url = {https://doi.org/10.1007/BF01457454},
	doi = {10.1007/BF01457454},
	language = {en},
	number = {4},
	journal = {Mathematische Annalen},
	author = {Lenstra, A. K. and Lenstra, H. W. and Lovász, L.},
	month = dec,
	year = {1982},
	pages = {515--534}
}

@article{micciancio2007average,
    author = {Micciancio, Daniele and Regev, Oded},
    title = {Worst-case to average-case reductions based on Gaussian measures},
    year = {2007},
    issue_date = {April 2007},
    address = {USA},
    volume = {37},
    number = {1},
    issn = {0097-5397},
    doi = {10.1137/S0097539705447360},
    journal = {SIAM J. Comput.},
    pages = {267–302},
    numpages = {36},
    keywords = {lattices, worst-case to average-case reductions, Gaussian measures}
}

@inproceedings{peikert2017ringlwe,
    author = {Peikert, Chris and Regev, Oded and Stephens-Davidowitz, Noah},
    title = {Pseudorandomness of ring-LWE for any ring and modulus},
    year = {2017},
    isbn = {9781450345286},
    url = {https://doi.org/10.1145/3055399.3055489},
    doi = {10.1145/3055399.3055489},
    booktitle = {STOC},
    series = {STOC 2017},
    pages = {461–473},
    numpages = {13},
    keywords = {Learning with Errors, lattices},
}

@article{regev2004harmonic,
    author = {Regev, Oded},
    title = {New lattice-based cryptographic constructions},
    journal = {J. ACM},
    issue_date = {November 2004},
    volume = {51},
    number = {6},
    month = Nov,
    year = {2004},
    issn = {0004-5411},
    pages = {899--942},
    numpages = {44},
    doi = {10.1145/1039488.1039490},
    keywords = {Lattice, average-case hardness, cryptography, public key encryption, quantum computing},
}

@inproceedings{regev2005lwe,
    author = {Regev, Oded},
    title = {On lattices, learning with errors, random linear codes, and cryptography},
    booktitle = {STOC},
    series = {STOC '05},
    year = {2005},
    isbn = {1-58113-960-8},
    pages = {84--93},
    numpages = {10},
    doi = {10.1145/1060590.1060603},
    keywords = {computational learning theory, cryptography, lattices, public key encryption, quantum computing, statistical queries},
}

@inproceedings{arora2011subexplwe,
    author = {Arora, Sanjeev and Ge, Rong},
    title = {New algorithms for learning in presence of errors},
    year = {2011},
    booktitle = {ICALP},
    pages = {403–415},
    numpages = {13},
    series = {ICALP’11}
}

@inproceedings{ajtai97adcrypto,
author = {Ajtai, Mikl\'{o}s and Dwork, Cynthia},
title = {A public-key cryptosystem with worst-case/average-case equivalence},
year = {1997},
isbn = {0897918886},
url = {https://doi.org/10.1145/258533.258604},
doi = {10.1145/258533.258604},
booktitle = {STOC},
pages = {284–293},
numpages = {10},
series = {STOC ’97}
}

@article{aharonov2005conp,
author = {Aharonov, Dorit and Regev, Oded},
title = {Lattice problems in NP $\cap$ CoNP},
year = {2005},
issue_date = {September 2005},
publisher = {Association for Computing Machinery},
address = {New York, NY, USA},
volume = {52},
number = {5},
issn = {0004-5411},
url = {https://doi.org/10.1145/1089023.1089025},
doi = {10.1145/1089023.1089025},
journal = {J. ACM},
month = sep,
pages = {749–765},
numpages = {17},
keywords = {approximation, Fourier series, Algorithms, lattices}
}

@inproceedings{bubeck2019,
  title = 	 {Adversarial examples from computational constraints},
  author = 	 {Bubeck, Sebastien and Lee, Yin Tat and Price, Eric and Razenshteyn, Ilya},
  booktitle = {ICML},
  pages = {831--840},
  year = {2019},
  volume = {97},
  series = {ICML '19}
}

@inproceedings{diakonikolas2017sqgaussian, 
    author={I. {Diakonikolas} and D. M. {Kane} and A. {Stewart}}, 
    booktitle={FOCS},
    title={Statistical query lower bounds for robust estimation of high-dimensional Gaussians and Gaussian mixtures},
    year={2017},
    pages={73-84}
}

@article{kearnsSQ1998,
 author = {Kearns, Michael},
 title = {Efficient noise-tolerant learning from statistical queries},
 journal = {J. ACM},
 issue_date = {Nov. 1998},
 volume = {45},
 number = {6},
 month = Nov,
 year = {1998},
 issn = {0004-5411},
 pages = {983--1006},
 numpages = {24},
}
 
\end{document}